\documentclass[letterpaper]{article}
\usepackage[left=2.4cm, top=2.4cm, bottom=2.4cm, right=2.4cm]{geometry}
\usepackage[utf8]{inputenc}
\usepackage[american]{babel}
\usepackage{amsmath,amssymb,amsthm,latexsym,float,epsfig,bbm,amsrefs,mathtools}
\usepackage{todonotes}
\usepackage{enumitem}
\usepackage{graphicx}
\usepackage{svg}
\usepackage{color,soul}
\usepackage{rotating}
\numberwithin{equation}{section}

\def\R{\mathbb{R}}

\def\C{\mathbb{C}}
\def\E{\mathbb{E}}
\def\P{\mathbb{P}}
\def\1{\mathbbm{1}}

\def\F{\mathcal{F}}

\def\Lsp{\mathcal{L}}
\def\calL{\mathcal{L}}

\newcommand{\eqlaw}{\stackrel{\mathcal{L}}{=}}

\newcommand{\Expecs}[1]{\E^{*}\left[\, #1 \, \right]}
\newcommand{\Expec}[1]{\E\left[\, #1 \, \right]}
\newcommand{\Expecnb}[1]{\E #1 }
\newcommand{\Expecsnb}[1]{\E^{*} #1 }

\newcommand{\proc}[1]{\left( #1 \right)_{t \geq 0}}
\newcommand{\procn}[1]{\left( #1 \right)_{n \geq 1}}
\newcommand{\Ind}[1]{\1_{ \{ #1 \} } }
\newcommand{\Ps}{\P^{*}}
\newcommand{\Prob}[1]{\P\left( #1 \right)}
\newcommand{\Probs}[1]{\P^{*}\left( #1 \right)}

\newcommand{\levy}{L\'{e}vy }

\newcommand{\ltrip}{\left(b,0,\nu \right)}

\newcommand{\gdelsof}[1]{\gamma_{\delta}^{*}\left( #1 \right)}
\newcommand{\gdelof}[1]{\gamma_{\delta}\left( #1 \right)}

\def\abs#1{\left\lvert #1 \right\rvert}
\def\rv[#1]#2{RV_{#1}^{#2}}
\DeclareMathOperator{\sgn}{sgn}

\theoremstyle{plain}
\newtheorem{thm}{Theorem}[section]
\newtheorem{lem}[thm]{Lemma}
\newtheorem{prop}[thm]{Proposition}
\newtheorem{cor}[thm]{Corollary}
\theoremstyle{definition}
\newtheorem{defn}[thm]{Definition}

\theoremstyle{remark}
\newtheorem{rem}[thm]{Remark}

\usepackage[colorlinks]{hyperref}
\title{At-the-money short-time call-price asymptotics for new classes of 
    exponential L\'evy models}
\author{Allen Hoffmeyer\thanks{Penumbra Investment Group, New York, NY 10020, USA ({\tt allen.hoffmeyer@gmail.com}).}
\and Christian Houdr\'e\thanks{School of Mathematics, Georgia Institute of 
    Technology, Atlanta, GA 30332, USA ({\tt houdre@math.gatech.edu}).}
   \thanks{Research supported in part by grants \#524678 and MP-TSM-00002660 from the Simons Foundation.}} 

\date{\today}

\begin{document}

\maketitle

\begin{abstract}
We develop at-the-money call-price and implied volatility asymptotic
expansions in time to maturity for a class of asset-price models whose
log returns follow a \levy process. Under mild assumptions placing the
driving \levy process in the small-time domain of attraction
of an $\alpha$-stable law with $\alpha \in (1,2)$, we give first-order
at-the-money call-price and implied volatility asymptotics. A key
observation is that both the stable domain of attraction and the
finiteness of the centering constant $\bar{\mu}$ are preserved under
the share measure transformation, so that all of the distributional
input needed for the call-price expansion can be read off from the
regular variation of the \levy measure near the origin. When the
\levy process has no Brownian component, new rates of convergence
of the
form $t^{1/\alpha} \ell(t)$ where $\ell$ is a slowly varying function are
obtained. We provide an example of an exponential \levy model exhibiting this
behavior, with $\ell$ not asymptotically constant, yielding a
convergence rate of $(t / \log(1/t))^{1/\alpha}$. In the case of a
\levy process with Brownian component, we show that the jump
contribution is always lower order, so that the leading $\sqrt{t}$
behavior of the at-the-money call price is universal and driven
entirely by the Gaussian part of the characteristic triplet.

\vspace{0.2 cm}
\noindent{\textbf{AMS 2020 subject classifications}: 60G51, 60F99, 91G20, 91G60.}

\vspace{0.1 cm}
\noindent{\textbf{Keywords and phrases}: Exponential L\'{e}vy models; 
short-time asymptotics; ATM option pricing; implied volatility; 
regular variation; stable domain of attraction}
\end{abstract}

\section{Introduction}

The Black--Scholes paradigm is analytically elegant yet empirically 
inadequate: implied volatility is not flat across strikes or maturities and 
log returns are not Gaussian. In practice, smiles and skews persist even at 
short maturities, and tail risk is materially underpriced when returns are 
modeled by a diffusion with exponentially decaying tails. Numerous extensions 
have been proposed. Local-volatility models \cite{dupire_pricing} reproduce today’s 
smile by design but are often unstable out of sample; 
stochastic-volatility models (e.g.\ Heston/SABR) capture dynamics but 
typically miss the exact near-expiry smile without substantial calibration 
effort. In this work we focus on exponential \levy models, where the log-price 
is a \levy process and jumps produce the empirically observed smile and 
heavier tails while preserving analytic tractability in many cases (see 
the surveys and references in \cites{tankov_review,boy_leven,figueroa_forde}).

A major thread of research over the last two decades studies the 
\emph{small-time} (near-expiry) behavior of option prices and implied 
volatility in exponential \levy models. Early milestones include general 
at-the-money (ATM) price asymptotics and links to first absolute moments 
\cite{karbe_nutz}, small-time smile behavior in stable-like and tempered 
settings \cites{figueroa_forde,tankov_review}, and model-specific expansions 
for CGMY and related classes \cites{fl_houdre_cgmy,figueroa_lopez_gong_houdre_2014}. Since 
then, the literature has clarified how jump activity (e.g.\ the 
Blumenthal--Getoor index) and any Brownian component jointly determine ATM 
scalings and higher-order terms. Notably, \cite{mijatovic_tankov_2016} give 
a unifying view of short-maturity implied volatility in exponential \levy 
models with jumps, \cite{figueroa_lopez_gong_houdre_2014} derive high-order ATM price 
expansions for broad tempered-stable classes, 
\cite{figueroa2012small} obtain small-time expansions for 
distributions, densities, and option prices in stochastic volatility models 
with \levy jumps, and \cite{figueroa_lopez_olafsson_2016} 
extend close-to-the-money results in the presence of stochastic volatility.
Additional related developments 
include model-engineering and calibration perspectives for exponential 
\levy surfaces \cite{andersen_lipton}, estimation advances for 
tempered-stable models of infinite variation \cite{figueroa_lopez_gong_han_2022},
and short-time implied-volatility limits for additive tempered-stable and 
allied processes \cite{azzone_2024}. (For comparison with the diffusion 
literature on near-expiry ATM behavior, see e.g.\ \cite{el_euch_rosenbaum_2019} 
and references therein.)

Below, we develop near-expiry asymptotic expansions for ATM call prices and 
the corresponding implied volatility under exponential \levy dynamics, with 
an emphasis on stable domains of attraction and regular variation. Under mild 
structural assumptions placing the driving \levy process in the domain of 
attraction of a (possibly asymmetric) stable law, we obtain first-order 
ATM expansions that unify and streamline existing results while covering 
cases beyond the standard tempered-stable families. The formulation 
highlights the role of regular variation of the \levy measure near the 
origin. 

This article is organized as follows.
Section~\ref{sec:doa_rv} develops the domain-of-attraction and 
regular-variation framework needed for the sequel, and establishes that both 
the stable domain of attraction and the finiteness of the centering constant 
are preserved under the share measure transformation.
Section~\ref{sec:atm_asymptotics} states and proves the main first-order 
ATM call-price and implied volatility asymptotics, treating the pure-jump 
and Brownian-component cases separately.
Section~\ref{sec:example} presents an explicit exponential \levy model whose 
first-order convergence rate involves a logarithmic correction to the classical scaling.
Section~\ref{sec:conclusion} concludes and discusses directions for future 
work. Appendix~A collects the regular-variation background used throughout, 
and Appendix~B contains the proofs.

We start by briefly recalling the basic material on \levy processes and exponential-\levy 
models used in what follows; see \cites{sato,applebaum,cont_tankov} for comprehensive accounts.

A stochastic process $\proc{X_t}$ on $(\Omega,\F,\P)$ with values in $\R$ is a 
\emph{\levy process} if it has stationary and independent increments, càdlàg paths, 
and $X_0=0$ a.s.  Every \levy process is uniquely characterized 
by its \emph{characteristic triplet} $(b,\sigma,\nu)$, where $b\in\R$, $\sigma\ge0$, and 
$\nu$ is a positive Borel measure on $\left(\R, \mathcal{B}(\R) \right)$ without
atom at the origin satisfying 
$\int_{\scriptscriptstyle \R} (1\wedge x^2)\nu(dx)<\infty$. Hence $\E[e^{iuX_t}] = e^{t\psi(u)}$ where  
the characteristic exponent is
\[
    \psi(u)
      = iub - \tfrac12\sigma^2u^2
        + \int_{\R}
            \big( e^{iux}-1 - iu x \1_{\{|x|\le1\}} \big)\,\nu(dx),
        \qquad u\in\R.
\]
This is compactly written as $X_t\sim \Lsp(b,\sigma,\nu)$.

Under mild conditions, $X_t$ can be written as
\[
  X_t = bt + \sigma W_t
        + \int_{|x|\le1} x\,\tilde N(t,dx)
        + \int_{|x|>1} x\, N(t,dx),
\]
where $W$ is a Brownian motion, $N$ is a Poisson random measure with intensity $\nu(dx)\,dt$, 
and $\tilde N$ is its compensated version. 

If $\Ps$ is an equivalent measure defined by
\[
    \frac{d\Ps}{d\P}\Big|_{\F_t}
      = \exp\!\big(\theta X_t - t\psi(-i\theta)\big),
\]
then $\proc{X_t}$ remains a \levy process under $\Ps$, with transformed triplet
$(b_\theta,\sigma,\nu_\theta)$ given by the Esscher transform
$\nu_\theta(dx)=e^{\theta x}\nu(dx)$ and $b_\theta=b+\sigma^2\theta
+\int_{|x|\le1}x(e^{\theta x}-1)\nu(dx)$.  Such changes of measure appear naturally in 
risk-neutral valuation.

In an exponential \levy model the asset price is
\[
  S_t = S_0 e^{X_t}, \qquad X_t\sim\Lsp(b,\sigma,\nu),
\]
so that $\E[S_t]=S_0 e^{t\psi(-i)}$, where $\psi$ is the L\'evy--Khintchine
exponent associated with the triplet $(b,\sigma,\nu)$ evaluated at $-i$.  
Since $\psi$ is a 
priori defined only for real arguments, evaluating it at $-i$ requires 
justification.  For the exponential moment $\Expec{e^{X_t}}$ to be finite it is 
necessary and sufficient that
\begin{equation}
  \int_{\{|y|>1\}} e^{y}\,\nu(dy) < \infty, \label{eq:levy_model_moment_condition}
\end{equation}
which we assume throughout.  Under this condition, the L\'evy--Khintchine 
integral converges when $u$ is replaced by $u+iz$ for any $z\in[0,1]$, so 
$\psi$ extends analytically to the strip 
$\{u\in\C:-1\le\Im(u)\le 0\}$; in particular $-i$ lies in this strip and 
$\psi(-i)$ is well defined (see \cite{sato}, Theorem 25.17). Under a
risk-neutral measure the discounted asset $e^{-rt}S_t$ must be a martingale,
so $\E[e^{X_t}]=e^{rt}$ and hence $\psi(-i)=r$.  In this paper we work in the
zero-interest (or forward) case $r=0$, so that $\psi(-i)=0$ and the drift
parameter is fixed by the martingale condition
\begin{align}
        b = -\frac{\sigma^{2}}{2} - \int_{-\infty}^{\infty} \left( e^{y} 
        -1 - y \1_{\left\{ \abs{y} \leq 1 \right\} } \right) \nu(dy). 
        \label{eq:b_martingale_condition}
    \end{align}
    The above conventions (triplet $(b,\sigma,\nu)$, truncation $x\1_{|x|\le1}$, and 
characteristic exponent $\psi$) are maintained throughout the paper.

The asymptotics of at-the-money option prices and implied volatility are the 
main objects of study in this manuscript. For this purpose, we discuss a few results
that will be necessary in deriving the first-order asymptotics. 

First, we are interested in the short-time behavior of the at-the-money call price
\[
    c(t,0) = \Expecnb{\left( e^{X_{t}} - 1 \right)_{+}},
\]
interpreted under the risk-neutral measure. 
To this end, we use a slightly more convenient representation of the function 
$c$ due to \cite{carr_madan} (see also \cite{figueroa_forde}). 
We work with the share measure $\Ps$ obtained from the Esscher
transform with parameter $\theta=1$ (so that $d\Ps/d\P|_{\F_t}=e^{X_t}$ when
$\psi(-i)=0$), which satisfies, for all Borel sets $D \subset \R$,
\[
    \Probs{D} = \Expecnb{e^{X_{t}} \1_{D}}.
\]
With this approach, \cite{carr_madan} showed the following.

\begin{thm}
    \label{thm:carr_madan}
    Under $\Ps$, let $E$ be a mean $1$ exponential random variable that is independent of 
    $\proc{X_{t}}$. Then, 
    \begin{align}
        \frac{1}{S_{0}} \Expecnb{\left( S_{t} - K \right)_{+}} =
        \Probs{X_{t} - E > \log{\left( \frac{K}{S_{0}} \right)}}.
        \label{eq:option_formula}
    \end{align}
\end{thm}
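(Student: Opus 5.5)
Write $k = \log(K/S_0)$. Since $S_t = S_0 e^{X_t}$, dividing by $S_0$ gives
\[
\frac{1}{S_0}\Expecnb{(S_t-K)_+} = \Expecnb{(e^{X_t}-e^{k})_+}.
\]
The plan is to turn the payoff into an integral of an indicator, exchange expectation and integral, and recognise the result as a probability under $\Ps$.

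First, on the event $\{X_t>k\}$ we have the elementary identity
\[
e^{X_t}-e^{k}=\int_{k}^{X_t} e^{y}\,dy .
\]
Factoring out $e^{X_t}$ turns this into
\[
e^{X_t}\bigl(1-e^{-(X_t-k)}\bigr)=e^{X_t}\,\P\bigl(E< X_t-k \mid X_t\bigr),
\]
where $E$ is a mean one exponential variable independent of $X_t$. Indeed, $\P(E<s)=1-e^{-s}$ for $s>0$ and $\P(E<s)=0$ for $s\le 0$, so
\[
(e^{X_t}-e^{k})_+ = e^{X_t}\,\Expecnb{\1_{\{X_t-E>k\}}\mid X_t}
\]
pointwise, with the positive part handled automatically.

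Second, take expectations. By the tower property and independence of $E$ from $X_t$,
\[
\Expecnb{(e^{X_t}-e^k)_+}=\Expecnb{e^{X_t}\1_{\{X_t-E>k\}}}.
\]
All quantities are nonnegative, so Tonelli justifies the exchange with no integrability issues beyond $\Expecnb{e^{X_t}}=1<\infty$. This finiteness holds because the moment condition \eqref{eq:levy_model_moment_condition} is assumed and $\psi(-i)=0$.

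Third, interpret $e^{X_t}$ as the density $d\Ps/d\P|_{\F_t}$. We extend $\P$ and $\Ps$ to the product space carrying $E$, with $\Ps$ defined on $\sigma(\F_t,E)$ by the density $e^{X_t}$. Since this density is $\F_t$-measurable, $E$ keeps its $\mathrm{Exp}(1)$ law and stays independent of $X_t$ under $\Ps$. This is precisely the setup in the statement. The last expectation is then exactly $\Probs{X_t-E>k}$, giving \eqref{eq:option_formula}.

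The only point needing care is this third step: checking that the enlargement of the probability space is consistent, i.e.\ that "$E$ independent under $\Ps$" matches the construction. The product-measure argument above handles it. Everything else is a one-line computation.
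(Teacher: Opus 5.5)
Your argument is correct. The paper gives no proof of this statement: it quotes it from \cite{carr_madan} and only uses its consequence \eqref{eq:cm_rep}. Your proposal therefore supplies the missing justification, and it is the standard one: write the payoff as $e^{X_t}(1-e^{k-X_t})_+$, read $e^{X_t}$ as $d\Ps/d\P|_{\F_t}$, and read $(1-e^{-(x-k)})_+$ as the $\mathrm{Exp}(1)$ distribution function at $x-k$.

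Two small remarks. First, you can sidestep the enlargement issue you flag by computing from the $\Ps$ side. For any $E$ as in the statement, conditioning on $X_t$ under $\Ps$ gives $\Probs{X_t-E>k}=\E^{*}\bigl[(1-e^{k-X_t})_+\bigr]$. Since $(1-e^{k-X_t})_+$ is $\F_t$-measurable, this equals $\E\bigl[e^{X_t}(1-e^{k-X_t})_+\bigr]=\E\bigl[(e^{X_t}-e^{k})_+\bigr]$. This needs no construction of $E$ under $\P$ and covers every admissible $E$ at once. In your version that point is only implicit in ``this is precisely the setup''; the reason it holds is that the right-hand side depends only on the joint $\Ps$-law of $(X_t,E)$, which the hypotheses determine.

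Second, the identity $e^{X_t}-e^{k}=\int_k^{X_t}e^{y}\,dy$ you write down is never used. Its rescaled form $(1-e^{k-X_t})_+=\int_k^{\infty}e^{k-y}\1_{\{X_t>y\}}\,dy$, combined with Tonelli under $\Ps$, gives $\int_0^\infty e^{-x}\Probs{X_t>k+x}\,dx$. For $k=0$ this is exactly the representation \eqref{eq:cm_rep} the paper works with. The paper's $\geq$ in place of $>$ changes the integrand only at the at most countably many atoms of $X_t$.
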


\begin{cor}
    Letting $K=S_0$, the normalized, at-the-money European call option price 
    has representation
    \begin{align}
        c(t,0) &= \frac{1}{S_{0}} \Expecnb{\left( S_{t} - S_0 \right)_{+}} 
        = \int_{0}^{\infty} e^{-x} \Probs{X_{t} \geq x} dx.
        \label{eq:cm_rep}
    \end{align}
\end{cor}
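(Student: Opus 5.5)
We obtain the corollary by specializing Theorem~\ref{thm:carr_madan} to $K=S_0$ and then writing the resulting probability as an integral over the independent exponential variable $E$.

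\textbf{Step 1 (specialize the strike).} Setting $K=S_0$ in \eqref{eq:option_formula} gives $\log(K/S_0)=0$, so
\[
  c(t,0)=\frac{1}{S_0}\Expecnb{\left(S_t-S_0\right)_{+}}=\Probs{X_t-E>0}=\Probs{X_t>E}.
\]
The first equality is just the definition of $c(t,0)$, since $S_t=S_0e^{X_t}$ gives $\frac{1}{S_0}(S_t-S_0)_+=(e^{X_t}-1)_+$.

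\textbf{Step 2 (condition on $E$).} Under $\Ps$, $E$ is independent of $X_t$ and has density $e^{-x}\1_{\{x>0\}}$. By Fubini's theorem (the integrand is an indicator, hence nonnegative and bounded),
\[
  \Probs{X_t>E}=\int_0^\infty e^{-x}\,\Probs{X_t>x}\,dx .
\]

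\textbf{Step 3 (switch to the non-strict inequality).} The function $x\mapsto \Probs{X_t\ge x}$ differs from $x\mapsto\Probs{X_t>x}$ only at atoms of the law of $X_t$ under $\Ps$. There are at most countably many such atoms, so they form a Lebesgue-null set. The two integrals therefore coincide, which yields \eqref{eq:cm_rep}.

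None of these steps is a real obstacle. The only points that need care are justifying Fubini, which holds by nonnegativity, and the null-set argument for replacing $>$ by $\ge$.

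As a sanity check that does not rely on Theorem~\ref{thm:carr_madan}, one can argue directly from the definition of $\Ps$. Write $(e^{X_t}-1)_+=e^{X_t}\bigl(1-e^{-X_t}\bigr)_+=e^{X_t}\int_0^\infty e^{-x}\1_{\{X_t\ge x\}}\,dx$. Taking $\E$, applying Tonelli, and using $\Probs{D}=\Expecnb{e^{X_t}\1_D}$ gives \eqref{eq:cm_rep} immediately.
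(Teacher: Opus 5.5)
Your proof is correct and matches how the paper obtains the corollary: set $K=S_0$ in Theorem~\ref{thm:carr_madan} and integrate out the independent exponential $E$. The paper leaves these steps implicit, and your countable-atoms argument fills in the one detail it skips, namely passing from the strict inequality in Theorem~\ref{thm:carr_madan} to the non-strict one in \eqref{eq:cm_rep}. Your closing direct computation, using $(e^{X_t}-1)_+ = e^{X_t}\int_0^\infty e^{-x}\1_{\{X_t\ge x\}}\,dx$ together with the definition of $\Ps$, is also valid, and it makes the representation independent of Theorem~\ref{thm:carr_madan}.
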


These last two results will be used to derive the first-order asymptotic behavior of 
$c(t,0)$ as $t \downarrow 0$ for a wide class of \levy models. 

\section{Stable Domains of Attraction}
\label{sec:doa_rv}

Stable domains of attraction play a central role in the short-maturity
behavior of at-the-money option prices. In much of the existing literature
(see, e.g., \cites{figueroa_forde,fl_houdre_cgmy,figueroa_lopez_gong_houdre_2014,
tankov_review,karbe_nutz}), the underlying \levy process $\proc{X_t}$ is
assumed to satisfy a small-time stable limit of the form
\begin{equation}
    \frac{X_t - A_t}{B_t} \Rightarrow Z, \qquad t \downarrow 0,
    \label{eq:levy_stable_doa_intro}
\end{equation}
where $Z$ is an $\alpha$-stable random variable with $\alpha \in (0,2]$,
and $\Rightarrow$ denotes convergence in distribution. In particular, in the
pure-jump case with $\alpha \in (1,2)$ one typically has $B_t = t^{1/\alpha}$
up to a slowly varying factor. Our goal in this section is to make precise
the regular-variation assumptions under which \eqref{eq:levy_stable_doa_intro}
holds, and to describe the associated scaling $B_t$ in sufficient generality
for the option-pricing applications that follow.

\subsection{Preliminaries}

In this subsection we collect the minimal regular variation notation that will be used 
throughout the rest of the paper. Our goal is to characterize when the small-time rescaling 
\eqref{eq:levy_norm_sum} of a \levy process converges to a non-Gaussian stable law and to 
identify the corresponding scaling $B_t$ in terms of the behavior of the \levy measure 
near the origin. Regular variation provides the natural language for 
describing this local behavior.

A real-valued function $f$ is \emph{regularly varying of index $\rho$ at $\infty$}
if for every $\lambda \in \left(0, \infty \right)$, 
\[
\lim_{x\rightarrow \infty} \frac{f\left(\lambda x \right)}{f(x)} = \lambda^{\rho}.
\]
We denote this as $f \in RV_{\rho}^{\infty}$. Similarly, a real-valued function $\ell$ 
is \emph{slowly varying at $\infty$} if for every $\lambda \in (0,\infty)$, 
\[
\lim_{x\rightarrow \infty} \frac{\ell\left(\lambda x \right)}{\ell(x)} = 1,
\]
and this is denoted by $\ell \in RV_{0}^{\infty}$. 
Finally, a function $f$ is \emph{regularly varying at $0$} 
(from the right) with index $\rho$ if $f\left( \frac{1}{\cdot} \right) 
\in RV_{-\rho}^{\infty}$. We denote this by writing $f \in RV_{\rho}^{0}$.
We will sometimes drop the superscript and write, e.g., $f\in RV_{\rho}$ if it is clear
from context whether the function varies regularly (or slowly) at $\infty$ or $0$.
We write $f(x) \sim g(x)$ as $x \to a$ to mean $\lim_{x \to a} f(x)/g(x) = 1$.

We now introduce some notation for $X_t \sim \calL\left(b, \sigma, \nu\right)$. For $x>0$, let 
\begin{align}
    \gamma\left( x \right) = \gamma_{+}(x) + \gamma_{-}(x) &:= \nu\left( 
    \left\{ y > x \right\} \right) + \nu\left( \left\{ y < -x \right\} \right),
    \label{eq:levy_tail} 
\end{align}
and
\begin{align}
    V(x) &:= \int_{\abs{y} \leq x} y^{2} \nu(dy). \label{eq:var_def}
\end{align}

Here $\gamma$ is the two-sided tail function of the \levy measure and $V$ is 
the truncated second moment of jumps of size at most $x$; in particular, 
$V$ encodes the contribution of small jumps to the quadratic variation of $X$.

Next, we are mainly interested in conditions under which
\begin{align}
    \frac{X_{t} - A_{t}}{B_{t}}
    \label{eq:levy_norm_sum}
\end{align}
converges in distribution to an $\alpha$-stable distribution, 
$\alpha \in (1,2)$, as 
$t\rightarrow 0$ where 
$A: [0,\infty) \rightarrow \R$ 
and $B: (0, \infty) \rightarrow (0,\infty)$ are functions with 
$\lim_{t \rightarrow 0} 
B_{t} = 0$. If the \levy process has finite second moment, 
then the Central Limit Theorem (or the L\'{e}vy--Khintchine formula) gives
\[
    \frac{X_{t} - t \Expecnb{X_{1}}}{\sqrt{t}} \Rightarrow \mathcal{N}(0,
    \sigma^{2}),
\]
as $t \rightarrow 0$. This is formalized as follows:

\begin{defn}
    A stochastic process $\left( X_{t} \right)_{t\geq 0}$ on a probability 
    space $\left( \Omega, \mathcal{F}, \P \right)$ is in the \emph{domain of
    attraction} (DOA) of a stable random variable $Z$ at 
    $a \in \left\{ 0, \infty 
    \right\}$ if there exist functions $A$ and $B$ with $A_{t} \in \R$, 
    $B_{t} > 0$ for every $t \geq 0$ such that
    \begin{align}
        \frac{X_{t} - A_{t}}{B_{t}} \Rightarrow Z, \label{eq:doa}
    \end{align}
    as $t \rightarrow a$. 
\end{defn}
Throughout, we are exclusively concerned with the small-time case $a=0$, so 
all domains of attraction will be at $0$.

\subsection{Stable domains of attraction for \levy processes}

We now focus on domains of attraction in the \levy setting, following 
\cite{maller_mason} and \cite{grabchak}. In particular, we recall the 
conditions from \cite{maller_mason}, which characterize when a \levy process 
is in the domain of attraction of a stable law at $0$. 

\begin{thm} \label{thm:maller_mason}
    The following are equivalent:
    \begin{enumerate}[label=(\roman*)]
        \item there exist real-valued functions $A$ and $B$ with
            $B_{t}>0$ for $t>0$ and $\lim_{t \rightarrow 0} B_{t} = 0$ such that
            \[
                \frac{X_{t} - A_{t}}{B_{t}} \Rightarrow Z,
            \]
            as $t \rightarrow 0$, where $Z$ is an a.s.\ finite, nondegenerate
            random variable (in fact, $Z$ is necessarily an $\alpha$-stable 
            random variable with $\alpha \in (0,2]$);
        \item either (a) the function $\gamma \in RV_{-\alpha}^{0}$ with $\alpha \in 
            (0,2)$ and the limits
            \begin{align}
                p_{\pm} := \lim_{x \downarrow 0} \frac{\gamma_{\pm}(x)}{\gamma(x)},
                \label{eq:tail_limits}
            \end{align}
            exist \\
            or (b) $V$ is slowly varying at $0$. 
    \end{enumerate}
\end{thm}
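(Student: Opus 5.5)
The plan is to reduce the small-time statement to the classical triangular-array characterization of domains of attraction. Since $X_t$ is infinitely divisible with triplet $(tb, t\sigma^2, t\nu)$ (with the paper's convention that $\sigma$ enters as $\sigma^2$ in $\psi$), the rescaled variable $(X_t - A_t)/B_t$ is infinitely divisible with Lévy measure $\nu_t(\cdot) := t\,\nu(B_t\,\cdot)$ and Gaussian variance $t\sigma^2/B_t^2$. I will use the standard convergence criterion for infinitely divisible laws: $(X_t-A_t)/B_t \Rightarrow Z$ with $Z$ having triplet $(b_Z,\sigma_Z,\nu_Z)$ if and only if three things hold. First, $t\,\gamma_\pm(B_t x) \to \nu_Z$-tails at every continuity point $x>0$. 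Second,
\[
\lim_{\varepsilon\downarrow0}\limsup_{t\to0}\Bigl|\tfrac{t}{B_t^2}\bigl(\sigma^2 + V(\varepsilon B_t)\bigr) - \sigma_Z^2\Bigr| = 0 .
\]
Third, the centerings converge. Working along arbitrary sequences $t_n\downarrow 0$ reduces everything to the classical case.

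For (ii)$\Rightarrow$(i), there are two cases.

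In case (a), $\gamma\in RV^0_{-\alpha}$ with $\alpha\in(0,2)$, so $1/\gamma$ is regularly varying at $0$ with positive index $\alpha$. I define $B_t := \inf\{x>0 : \gamma(x)\le 1/t\}$, so that $t\gamma(B_t)\to 1$ by regular variation, and $B_t\to0$. Then $t\gamma(B_t x)\to x^{-\alpha}$, and by the existence of $p_\pm$ we get $t\gamma_\pm(B_t x)\to p_\pm x^{-\alpha}$. This is the stable Lévy measure. Karamata's theorem (Appendix A) gives $V(x)\sim \frac{\alpha}{2-\alpha}x^2\gamma(x)$. Hence $tV(\varepsilon B_t)/B_t^2 \to \frac{\alpha}{2-\alpha}\varepsilon^{2-\alpha}\to0$, so no Gaussian part arises from the jumps. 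Also $t\sigma^2/B_t^2\to0$, because $B_t^2/t$ is regularly varying with index $2/\alpha-1>0$ and hence tends to $\infty$. Finally, I choose $A_t := tb - t\int_{B_t<|y|\le1}y\,\nu(dy)$ (plus the obvious adjustment), which absorbs the drift.

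In case (b), $V$ is slowly varying at $0$. Let $\tilde V := \sigma^2 + V$; if $\sigma>0$, $\tilde V$ is trivially slowly varying. I pick $B_t$ with $tB_t^{-2}\tilde V(B_t)\to1$; this is possible because $x^2/\tilde V(x)$ is in $RV_2^0$. Slow variation gives $x^2\gamma(x)/V(x)\to0$ (again a Karamata-type argument), so $t\gamma(B_t x)\to0$ and the Gaussian criterion yields $Z\sim\mathcal N(0,1)$. The degenerate possibility $V\equiv0$ with $\sigma>0$ is handled directly with $B_t=\sqrt t$.

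For (i)$\Rightarrow$(ii), first note that $Z$ is the limit of infinitely divisible laws, hence infinitely divisible. A convergence-of-types argument using $X_{ts}\eqlaw$ the sum of $X_t$-type independent pieces shows that $B$ is regularly varying at $0$ and $Z$ is stable. The convergence criterion then forces $t\gamma_\pm(B_t x)\to$ the limiting tails, and there are two subcases. If the limit tails are nonzero, dividing gives $\gamma(B_t x)/\gamma(B_t)\to x^{-\alpha}$. The main obstacle is passing from $x$ along the curve $B_t$ to all small arguments. I would handle this by showing $B_{t}/B_{2t}$ stays bounded (regular variation of $B$, index $1/\alpha$) and then applying the standard interpolation lemma for monotone functions, which gives $\gamma\in RV^0_{-\alpha}$; the ratios of the two tails then yield $p_\pm$. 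If instead the limit is Gaussian, the truncated-variance condition gives $V(\varepsilon B_t)/V(B_t)\to1$ for each $\varepsilon$, and the same monotone interpolation gives slow variation of $V$. I would cite \cite{maller_mason} for the detailed interpolation if needed.
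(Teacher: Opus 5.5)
\textbf{The error.} Your claim in case (a) that $t\sigma^2/B_t^2\to 0$ is false. A function regularly varying \emph{at $0$} with positive index tends to $0$ as $t\downarrow 0$. Here $B_t^2/t\in RV^{0}_{2/\alpha-1}$ behaves like $t^{2/\alpha-1}$ up to a slowly varying factor. So $B_t\ll\sqrt{t}$, and $t\sigma^2/B_t^2\to\infty$ whenever $\sigma>0$: the Gaussian part explodes under the stable scaling.

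This is more than a bookkeeping slip. If $\sigma>0$, then $X_t/\sqrt{t}\Rightarrow\mathcal{N}(0,\sigma^2)$ for \emph{every} $\nu$, so (i) always holds. But (ii), with the paper's $V(x)=\int_{|y|\le x}y^2\,\nu(dy)$ (no $\sigma^2$ term), can fail. Take $\nu=\sum_{k\ge1}2^{k}\delta_{2^{-k}}$: then $\gamma$ is a dyadic step function that is not regularly varying, and $V(x)\asymp x$ is not slowly varying. Correspondingly, your Gaussian subcase of (i)$\Rightarrow$(ii) only yields slow variation of $\sigma^2+V$. That is automatic when $\sigma>0$ and says nothing about $V$.

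So the equivalence as written is correct only when $\sigma=0$. That is the paper's standing assumption for the section, stated shortly after the theorem. Alternatively, use Maller--Mason's convention $V(x)=\sigma^2+\int_{|y|\le x}y^2\,\nu(dy)$ and add $\sigma=0$ to (a). You should impose $\sigma=0$ at the outset and drop the $\sigma$-bookkeeping. With $\sigma>0$, the correct normalizer in (ii)(a) would be $\sqrt t$, with a Gaussian limit.

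\textbf{The rest, with $\sigma=0$.} Your outline is then the classical triangular-array (Gnedenko--Kolmogorov/Feller) argument, and it goes through:
\begin{itemize}
\item Karamata gives $V(x)\sim\frac{\alpha}{2-\alpha}x^2\gamma(x)$.
\item Your centering $A_t=t\mu_{B_t}$ is exactly the one the paper quotes.
\item Raising the characteristic function of $(X_t-A_t)/B_t$ to a real power $\lambda$ and applying convergence of types gives $B_{\lambda t}/B_t\to\lambda^{1/\alpha}$ and stability of $Z$. No measurability of $B$ is needed.
\item The interpolation step is rigorous along $t_n=2^{-n}$, where consecutive ratios $B_{t_n}/B_{t_{n+1}}$ stay bounded. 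The monotone functions $x\mapsto t\gamma_\pm(B_tx)$ and $x\mapsto tV(xB_t)/B_t^2$ have continuous limits, so they converge locally uniformly on $(0,\infty)$.
\end{itemize}

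\textbf{Comparison with the paper.} The paper does not prove this result; it simply cites Maller--Mason. Their construction gives a single normalizer for both cases. The function $x^{-2}U(x)=\int(1\wedge y^2/x^2)\,\nu(dy)$ is continuous and nonincreasing, so $B_t=\inf\{x:x^{-2}U(x)\le 1/t\}$ satisfies $tU(B_t)=B_t^2$ exactly. Since $U=V+x^2\gamma$, one has $U\sim\frac{2}{2-\alpha}x^2\gamma$ in case (a) and $U\sim V$ in case (b). This avoids both the generalized inverse of $\gamma$ and the asymptotic inversion of $x^2/V(x)$ that your route needs.
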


Note that in $(ii)$ above, exactly one of $(a)$ or $(b)$ holds. 
The case where $V$ is slowly varying at $0$ corresponds 
to the case where the central limit theorem applies with 
$B_{t} = \sqrt{t}$ and 
$Z \sim \mathcal{N}\left( 0, \sigma^{2} \right)$. Also, although the statement 
of the theorem does not require it, the function $B$ can be chosen 
monotone decreasing. Finally, the above theorem shows that the weak 
convergence of expressions like \eqref{eq:doa} is necessarily towards 
an $\alpha$-stable random variable, $0<\alpha\leq 2$, if 
$\left( X_{t} \right)_{t \geq 0}$ is a \levy 
process. Additionally, the authors showed that for $x>0$, defining
\begin{align}
        \mu_{x} &:= b - \int_{x \leq \abs{y} \leq 1} y \nu(dy),
    \label{eq:mu_epsilon} 
\end{align}
and
\begin{align}
    U(x) &:=  2 \int_{0}^{x} y \gamma(y) dy, \label{eq:U_func}
\end{align}
the functions
\[
    B_{t} = \inf\left\{ 0<x \leq 1: x^{-2} U(x) \leq \frac{1}{t} \right\} 
    \mbox{\;\; and \;\;} 
    A_{t} = t \mu_{B_{t}}, 
\]
will do. In particular, $A_{t} = O\left( t \right)$ when 
$\sup_{0<x<1} \abs{\mu_{x}} < \infty$. 

Using Theorem \ref{thm:maller_mason} together with the L\'evy--Khintchine representation, 
we obtain a well-known simple and explicit representation of the limiting stable 
random variable $Z$. Let $p_{\pm} \geq 0$ with $p_+ + p_- = 1$, then for any $s\in \R$,
\begin{align}
        \varphi_{Z}(s) = \exp{\Big( -c_{\alpha} \abs{s}^{\alpha} \left( 1 - i \left(
        p_{+} - p_{-}\right) \sgn{(s)} \tan{\left( \pi \alpha/2 \right)} 
        \right) \Big)}, \label{eq:stable_chfn}
    \end{align}
where $c_{\alpha}>0$ is a constant.

In what follows, we require that the \levy process 
$\proc{X_{t}}$ satisfies $\Expec{\abs{X_{t}} e^{X_{t}}} < \infty $ for all 
$t\geq 0 $, equivalently (see \cite{sato}, Theorem 25.3) that
\begin{align}
    \int_{\abs{x} > 1} \abs{x} e^{x} \nu(dx) < \infty.
    \label{eq:ex_moment}
\end{align}
Additionally, we assume that $\nu$ has a density with respect to Lebesgue measure, 
i.e., that there exists a function $\xi \geq 0$ such that for any Borel set 
$D$, 
\[
    \nu\left( D \right) = \int_{D} \xi(x) dx. 
\] 
Note that $\P^{*}$, which is the natural measure under which at-the-money call prices 
take the form of expectations of $X_t$-functionals, is well-defined by 
\eqref{eq:ex_moment} and Example 33.14 in \cite{sato}. It will be the key measure for our
short-time option price asymptotics. We assume throughout this 
section that $\sigma = 0$, so that $X$ is a pure-jump \levy process. Under 
$\Ps$, the process $\left( X_{t} \right)_{t \geq 0}$ is again
a \levy process with triplet $\left( b^{*}, 0, \nu^{*} \right)$ where
\begin{align}
    \nu^{*}\left( dx \right) = e^{x} \nu(dx) = e^{x} \xi(x) dx,
    \label{eq:nu_star} 
\end{align}
and
\begin{align}
    b^{*} = b + \int_{\abs{x} \leq 1} x \left( \nu^{*} - \nu \right)(dx) = 
    b + \int_{\abs{x} \leq 1} x\left( e^{x} - 1\right) \xi(x)\,dx.
    \label{eq:b_star}
\end{align}

Throughout this section, we define quantities under both 
$\P$ and $\Ps$, using the star notation to denote the 
corresponding quantity under $\Ps$. For instance, 
$\mu_{\varepsilon}$ is defined in \eqref{eq:mu_epsilon} and its 
share-measure counterpart is
\[
    \mu_{\varepsilon}^{*} := b^{*} - \int_{\varepsilon \leq \abs{x} \leq 1} x 
    \,\nu^{*}(dx).
\]

The star-measure analogue of $\gamma$ from \eqref{eq:levy_tail} is
\[
    \gamma^{*}(x) := \int_{\abs{y} > x} \nu^{*}(dy),
\]
for $x>0$.

We now define a few more functions and constants that we will need thereafter. 
For $x>0$, set
\begin{align}
    \xi_{S}(x) := \xi(x) + \xi(-x),
    \label{eq:levy_density_symmetrization}
\end{align}
while 
\begin{align}
    \xi_{S}^{*}(x) := e^{x} \xi(x) + e^{-x} \xi(-x).
    \label{eq:levy_density_symmetrization_star}
\end{align}
We are interested in the quantities
\[
    \bar{\mu} = \sup_{0 < \eta < \infty} \abs{\mu_{\eta}} 
    \mbox{\;\;\;\; and \;\;\;\;}
    \bar{\mu}^{*} = \sup_{0 < \eta < \infty} \abs{\mu_{\eta}^{*}},
\]
which we will need to be finite. Note that 
\[
    \sup_{1 \leq \eta < \infty} \abs{\mu_{\eta}} =  
    \sup_{1 \leq \eta < \infty} \abs{b + \int_{ 1 \leq \abs{x} \leq \eta} 
    x \nu\left( dx \right)} \leq 
    \abs{b} + \int_{ 1 \leq \abs{x} < \infty } 
    \abs{x} \nu\left( dx \right) < \infty,
\]
and
\[
    \sup_{1 \leq \eta < \infty} \abs{\mu_{\eta}^{*}} =  
    \sup_{1 \leq \eta < \infty} \abs{b^{*} + \int_{ 1 \leq \abs{x} \leq \eta} 
    x \nu^{*}\left( dx \right)} \leq 
    \abs{b^{*}} + \int_{ 1 \leq \abs{x} < \infty } 
    \abs{x} \nu^{*}\left( dx \right) < \infty,
\]
since the \levy process $\left( X_{t} \right)_{t \geq 0}$ has finite 
first moment under both $\P$ and $\Ps$.  
So, we need only consider when the quantities 
\[
    \bar{\mu}_{0} = \sup_{0 < \eta \leq 1} \abs{\mu_{\eta}} \mbox{\;\;\;\; and 
    \;\;\;\;}
    \bar{\mu}^{*}_{0} = \sup_{0 < \eta \leq 1} \abs{\mu_{\eta}^{*}},
\]
are finite (e.g. when $\nu$ is symmetric, see \cite{figueroa_forde}).

\subsection{Share measure invariance and the normalizing function}
First, we investigate how the share measure transformation affects the regular 
variation property of the \levy measure. Intuitively, this preservation of 
regular variation stems from the fact that the property depends only on the 
behavior of $\nu$ near the origin, and the transformed \levy measure 
$\nu^{*}(dx) = e^{x}\nu(dx)$ has the same local behavior for $x$ close to $0$ 
because $e^{x} \approx 1$ there.

\begin{prop} \label{lem:preserve_reg_var}
    If $\nu$ is regularly varying of index $\alpha$ at $0$, then 
    $\nu^{*}$ is also regularly varying of index $\alpha$ at $0$. 
\end{prop}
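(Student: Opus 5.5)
We interpret "$\nu$ regularly varying of index $\alpha$ at $0$" in the sense of Theorem~\ref{thm:maller_mason}(ii)(a): the tail function $\gamma \in RV_{-\alpha}^{0}$. We will prove $\gamma^{*} \in RV_{-\alpha}^{0}$. The strategy is to show that $\gamma^{*}(x) \sim \gamma(x)$ as $x \downarrow 0$. Regular variation is preserved under asymptotic equivalence, since
\[
\frac{\gamma^{*}(\lambda x)}{\gamma^{*}(x)} = \frac{\gamma^{*}(\lambda x)}{\gamma(\lambda x)}\cdot\frac{\gamma(\lambda x)}{\gamma(x)}\cdot\frac{\gamma(x)}{\gamma^{*}(x)} \to \lambda^{-\alpha}.
\]
We will also show the one-sided statement $\gamma_{\pm}^{*}(x) \sim \gamma_{\pm}(x)$ whenever $\gamma_{\pm}$ does not vanish near $0$. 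This gives $p_{\pm}^{*}=p_{\pm}$, so the full condition (ii)(a) transfers to $\nu^{*}$.

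The key step is the splitting. For $0<x<1$ write
\[
\gamma^{*}(x) = \int_{x<\abs{y}\le 1} e^{y}\,\nu(dy) + \int_{\abs{y}>1} e^{y}\,\nu(dy),
\]
and decompose $\gamma(x)$ in the same way. The second term is a finite constant $C^{*}$ by the moment condition \eqref{eq:levy_model_moment_condition}. The corresponding term for $\gamma$, namely $\nu(\{\abs{y}>1\})$, is finite because $\nu$ is a L\'evy measure. Since $\alpha>0$, we have $\gamma(x)\to\infty$ as $x\downarrow 0$; indeed $\gamma(x)=x^{-\alpha}\ell(1/x)$ with $\ell$ slowly varying, and slowly varying functions satisfy $x^{\varepsilon}\ell(1/x)\to 0$ for every $\varepsilon>0$. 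So both constant tail pieces are $o(\gamma(x))$.

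For the near-origin part, fix $\delta\in(0,1)$ and split further into $x<\abs{y}\le\delta$ and $\delta<\abs{y}\le 1$. On $\{\delta<\abs{y}\le1\}$ both integrands are bounded by $e\,\nu(\{\abs{y}>\delta\})<\infty$, so this piece is also $o(\gamma(x))$. On $\{x<\abs{y}\le\delta\}$ we have $e^{-\delta}\le e^{y}\le e^{\delta}$, hence
\[
e^{-\delta}\,\nu(\{x<\abs{y}\le\delta\}) \le \int_{x<\abs{y}\le\delta} e^{y}\,\nu(dy) \le e^{\delta}\,\nu(\{x<\abs{y}\le\delta\}).
\]
Moreover $\nu(\{x<\abs{y}\le\delta\})=\gamma(x)-\gamma(\delta)\sim\gamma(x)$. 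Dividing by $\gamma(x)$ and letting $x\downarrow0$ gives
\[
e^{-\delta}\le\liminf\frac{\gamma^{*}(x)}{\gamma(x)}\le\limsup\frac{\gamma^{*}(x)}{\gamma(x)}\le e^{\delta}.
\]
Letting $\delta\downarrow0$ then yields $\gamma^{*}\sim\gamma$. The identical argument restricted to $y>0$ or to $y<0$ gives $\gamma^{*}_{\pm}\sim\gamma_{\pm}$ when $p_{\pm}>0$. When $p_{\pm}=0$, the two-sided bound $\gamma^{*}_{\pm}\le e^{\delta}\gamma_{\pm}+O(1)$ shows that $\gamma^{*}_{\pm}/\gamma^{*}\to0$ as well.

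The main obstacle is not analytic but definitional: the phrase "$\nu$ regularly varying of index $\alpha$" must be pinned down. If the intended meaning were instead regular variation of the density $\xi$ (or of $\xi_S$) at $0$, the same sandwich argument applies pointwise, since $e^{\pm x}\to1$. For example, $\xi_S^{*}(x)/\xi_S(x)\to 1$, and then $\xi_S^{*}\in RV^0$ with the same index. I would state the result for the tail $\gamma$ and remark on the density version. The only genuinely needed input is that $\gamma(x)\to\infty$, which is what absorbs all bounded contributions away from the origin.
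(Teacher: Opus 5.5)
Your argument is correct and is essentially the paper's: both sandwich $e^{y}$ between constants close to $1$ on $\{\abs{y}\le\delta\}$ and use $\gamma(x)\to\infty$ to make the bounded mass on $\{\abs{y}>\delta\}$ negligible. Your packaging as $\gamma^{*}\sim\gamma$ (together with $\gamma^{*}_{\pm}\sim\gamma_{\pm}$ when $p_{\pm}>0$, and $\gamma^{*}_{\pm}\le e^{\delta}\gamma_{\pm}+O(1)$ when $p_{\pm}=0$) is a tidier way to get both the index and $p^{*}_{\pm}=p_{\pm}$, and it avoids the paper's separate subcases.

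There are two small slips. First, $\gamma(x)\to\infty$ needs the lower bound $y^{\alpha}\ell(y)\to\infty$ as $y\to\infty$. The fact you cite, $x^{\varepsilon}\ell(1/x)\to0$, is the opposite bound unless you apply it to $1/\ell$; alternatively, use monotonicity of $\gamma$ together with $\gamma(\lambda x)/\gamma(x)\to\lambda^{-\alpha}\neq1$. Second, $\gamma^{*}_{\pm}\sim\gamma_{\pm}$ requires $\gamma_{\pm}(x)\to\infty$, not mere non-vanishing near $0$. For example, $\xi=\1_{(0,1)}$ on the positive side gives $\gamma_{+}\to1$ but $\gamma^{*}_{+}\to e-1$. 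This is why your restriction to $p_{\pm}>0$ in the body of the argument is the right one.
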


As a major consequence of the proof of Proposition \ref{lem:preserve_reg_var},
the constants $p_{+}$ and $p_{-}$ in \eqref{eq:p_lim_equality} and 
\eqref{eq:q_lim_equality} remain unchanged under the 
measure transform. This fact, combined with \eqref{eq:stable_chfn}, 
gives the following result.

\begin{prop} \label{lem:same_stable_rv}
    Let \eqref{eq:doa} hold with respect to $\P$ and thus with 
    respect to $\Ps$, where $Z$ is an $\alpha$-stable random variable with $\alpha \in (0,2)$. 
    Then $Z$ has the same representation under both $\P$ and
    $\Ps$. That is, the parameters of the stable distribution $Z$ are the 
    same under both probability measures $\P$ and $\Ps$. 
\end{prop}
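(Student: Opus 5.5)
The plan is to show that $\gamma^{*}(x)\sim\gamma(x)$ and $\gamma^{*}_{\pm}(x)\sim\gamma_{\pm}(x)$ as $x\downarrow 0$. Once these hold, the same normalizing function $B_t$ works under both measures, and so do the same $\alpha$ and the same $p_\pm$. The characteristic function \eqref{eq:stable_chfn} is determined by $\alpha$, $p_+-p_-$ and the scale $c_\alpha$, so $Z$ will have the same law under $\P$ and $\Ps$.

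\emph{Step 1: the tails.} Fix $\varepsilon\in(0,1)$. For $0<x<\varepsilon$ we split
\[
\gamma_+^{*}(x)=\int_x^{\varepsilon}e^{y}\xi(y)\,dy+\int_{\varepsilon}^{\infty}e^{y}\xi(y)\,dy .
\]
The first integral lies between $\gamma_+(x)-\gamma_+(\varepsilon)$ and $e^{\varepsilon}\big(\gamma_+(x)-\gamma_+(\varepsilon)\big)$. The second is a finite constant, by \eqref{eq:ex_moment} and the fact that $\nu$ is finite away from $0$. The negative side is handled the same way, with the factor $e^{-\varepsilon}$ in place of $e^{\varepsilon}$. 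Since $\gamma\in RV_{-\alpha}^0$ with $\alpha>0$, we have $\gamma(x)\to\infty$, so the constants are negligible.

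\emph{Step 2: one-sided comparisons.} This is the delicate point. If $p_+>0$ then $\gamma_+(x)\to\infty$, and Step 1 gives $e^{-\varepsilon}\le \liminf\gamma_+^*/\gamma_+\le\limsup\gamma_+^*/\gamma_+\le e^{\varepsilon}$. Letting $\varepsilon\downarrow0$ yields $\gamma_+^*\sim\gamma_+$. If instead $p_+=0$, Step 1 still gives $\gamma_+^*(x)\le e^{\varepsilon}\gamma_+(x)+C=o(\gamma(x))$. Either way, summing the two sides gives $\gamma^*\sim\gamma$, and hence
\[
\frac{\gamma^*_\pm(x)}{\gamma^*(x)}\to p_\pm .
\]
Thus $\gamma^*\in RV^0_{-\alpha}$, and Theorem~\ref{thm:maller_mason}(ii)(a) holds under $\Ps$ with the same $\alpha$ and the same $p_\pm$.

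\emph{Step 3: the scale.} Because $\gamma^*\sim\gamma$ near $0$, we get $U^*(x)=2\int_0^x y\gamma^*(y)\,dy\sim U(x)$. It follows that $B_t^*\sim B_t$, using regular variation of $x^{-2}U(x)$ with index $-\alpha$ to invert asymptotically. We may therefore use the same $B_t$ under both measures. The centerings $A_t$ may differ, but centering only shifts location and does not affect the stable parameters. The limit law is determined by the Lévy measure of the limit, which here is
\[
c\,\alpha\,|y|^{-\alpha-1}\big(p_+\1_{y>0}+p_-\1_{y<0}\big)\,dy ,
\]
with the constant $c$ fixed by the normalization $t\gamma(B_t x)\to c\,x^{-\alpha}$. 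Since $t\gamma^*(B_tx)\sim t\gamma(B_tx)$, the constant $c_\alpha$ coincides under the two measures.

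The main obstacle is the one-sided case $p_+=0$ or $p_-=0$ in Step 2, together with making rigorous that $B^*_t$ can be replaced by $B_t$. I would handle the latter by convergence of types: since $B^*_t/B_t\to1$, the two normalizations produce the same limit up to a shift.
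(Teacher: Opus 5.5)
Your proof is correct, and its engine is the same as the paper's. The paper obtains this proposition from the proof of Proposition~\ref{lem:preserve_reg_var}. That proof sandwiches $e^{y}$ between $1\pm\varepsilon$ on a small neighbourhood of the origin and uses $\gamma(x)\to\infty$ to make the mass of $\{|y|\ge\delta\}$ negligible, exactly as in your Step~1.

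The differences are in packaging and completeness.

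\emph{Packaging.} The paper proves two separate ratio statements, $\gamma^{*}(rt)/\gamma^{*}(r)\to t^{-\alpha}$ and $\gamma^{*}_{\pm}(x)/\gamma^{*}(x)\to p_{\pm}$. In the one-sided case $p_{+}=0$ it splits further according to whether $\gamma_{+}$ stays bounded near $0$. You instead prove the stronger fact $\gamma^{*}\sim\gamma$, with $\gamma^{*}_{\pm}\sim\gamma_{\pm}$ whenever $p_{\pm}>0$. This yields regular variation and the invariance of $p_{\pm}$ in one stroke. Your bound $\gamma^{*}_{+}\le e^{\varepsilon}\gamma_{+}+C=o(\gamma)$ also handles $p_{+}=0$ without subcases.

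\emph{Completeness.} The paper stops once $\alpha$ and $p_{\pm}$ are shown to coincide and reads the conclusion off \eqref{eq:stable_chfn}. In doing so it tacitly treats the scale $c_{\alpha}$ as a function of $\alpha$ alone. It never checks that one normalization $B_{t}$ serves under both $\P$ and $\Ps$. Your Step~3 supplies exactly that piece: $U^{*}\sim U$, hence $B^{*}_{t}\sim B_{t}$ by asymptotic inversion of a function in $RV^{0}_{-\alpha}$, then Slutsky, then identification of the limiting L\'evy measure from $t\gamma^{*}(B_{t}x)\sim t\gamma(B_{t}x)$ and the common $p_{\pm}$. So your argument justifies that all the stable parameters agree, not only $\alpha$ and $p_{+}-p_{-}$.

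Working with the L\'evy measure of the limit rather than with \eqref{eq:stable_chfn} has one further small advantage. It covers $\alpha=1$, where that formula is undefined although the statement allows it.
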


In summary, the small-time stable domain of attraction of $X$ is completely determined by 
the regular variation of the \levy measure near the origin, and this structure is 
preserved by the share measure change. In particular, the same limiting 
$\alpha$-stable random variable $Z$ governs the fluctuations 
of $X_t$ under both $\P$ and $\P^{*}$, so all of the distributional input needed 
for the at-the-money option price expansions can be read off from the behavior 
of $\nu$ at zero.

Next, we show that the finiteness of the constant $\bar{\mu}$ is also 
a property that survives the share measure transformation. This quantity will
be important for Theorem \ref{thm:main_result}, which is one of the main 
new results of the paper.

\begin{prop} \label{lem:mu_bar_finite}
    $\bar{\mu} < \infty$ if and only if $\bar{\mu}^{*} < \infty$, equivalently
    $\bar{\mu}_{0} < \infty$ if and only if $\bar{\mu}^{*}_{0} < \infty$.
\end{prop}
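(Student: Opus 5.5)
The plan is to compute $\mu_\eta^{*}-\mu_\eta$ explicitly and show that it is bounded uniformly in $\eta\in(0,1]$. Since the discussion preceding the statement already gives $\sup_{1\le\eta<\infty}\abs{\mu_\eta}<\infty$ and $\sup_{1\le\eta<\infty}\abs{\mu^*_\eta}<\infty$, we have $\bar\mu<\infty\iff\bar\mu_0<\infty$ and $\bar\mu^*<\infty\iff\bar\mu^*_0<\infty$. So both claimed equivalences follow once we prove
\[
\sup_{0<\eta\le 1}\abs{\mu^*_\eta-\mu_\eta}<\infty .
\]

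First, using \eqref{eq:b_star} and the definitions of $\mu_\eta$ and $\mu_\eta^*$, for $0<\eta\le1$ we write
\[
\mu^*_\eta-\mu_\eta = (b^*-b) - \int_{\eta\le\abs{x}\le1} x\,(\nu^*-\nu)(dx)
= \int_{\abs{x}\le1} x(e^x-1)\,\xi(x)\,dx - \int_{\eta\le\abs{x}\le1} x(e^x-1)\,\xi(x)\,dx .
\]
This simplifies to
\[
\mu^*_\eta-\mu_\eta=\int_{\abs{x}<\eta} x(e^x-1)\,\xi(x)\,dx .
\]

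Next we bound this integral. For $\abs{x}\le 1$ we have $\abs{e^x-1}\le e\abs{x}$, so the integrand satisfies $\abs{x(e^x-1)\xi(x)}\le e\,x^2\xi(x)$. Hence
\[
\abs{\mu^*_\eta-\mu_\eta}\le e\int_{\abs{x}<\eta}x^2\,\nu(dx)\le e\,V(1)\le e\int_{\R}(1\wedge x^2)\,\nu(dx)<\infty
\]
uniformly in $\eta\in(0,1]$, by the integrability condition on the \levy measure. In fact the difference tends to $0$ as $\eta\downarrow0$ by dominated convergence, though we do not need this.

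The triangle inequality now gives $\bar\mu^*_0\le\bar\mu_0+e\,V(1)$ and $\bar\mu_0\le\bar\mu^*_0+e\,V(1)$, which proves both equivalences.

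There is no real obstacle. The only point needing care is the bookkeeping of the truncation region: the compensator term in $b^*$ must cancel exactly against the $\nu^*-\nu$ part of $\mu^*_\eta$, leaving an integral over small jumps only. Once that cancellation is written out, the bound is the standard estimate $\abs{x(e^x-1)}\le C x^2$ near the origin.
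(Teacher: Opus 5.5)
Your proof is correct, and it takes a genuinely different and sounder route than the paper's.

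\emph{The paper's argument.} The paper bounds $\abs{\mu^{*}_{\eta}}$ directly via $\abs{\int_{\eta<\abs{y}\le 1} y e^{y}\,\nu(dy)} \le e\,\abs{\int_{\eta<\abs{y}\le 1} y\,\nu(dy)}$. For the converse it uses the reverse inequality with a factor $e^{-1}$. In effect it compares the two signed integrals as if $y$ had constant sign, and neither inequality holds in general. For symmetric $\nu$, the integral $\int_{\eta<\abs{y}\le 1} y\,\nu(dy)$ vanishes, while $\int_{\eta<\abs{y}\le 1} y e^{y}\,\nu(dy) = 2\int_{\eta}^{1} y\sinh(y)\,\xi(y)\,dy>0$. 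Taking $\nu^{*}$ symmetric breaks the reverse inequality in the same way. Putting absolute values inside the integrals does not rescue it either, since $\int_{\eta<\abs{y}\le 1}\abs{y}\,\nu(dy)\to\infty$ as $\eta\downarrow 0$ when $\alpha>1$.

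\emph{What your computation supplies.} The paper also writes $\abs{b}$ where $\abs{b^{*}}$ is required. This drops precisely the compensator $b^{*}-b=\int_{\abs{x}\le 1}x(e^{x}-1)\,\nu(dx)$, which your computation shows is the key term. It cancels the $\nu^{*}-\nu$ contribution on $\{\eta\le\abs{x}\le 1\}$. What remains is only the small-jump integral $\int_{\abs{x}<\eta}x(e^{x}-1)\,\nu(dx)$, which is bounded by $e\,V(\eta)$ because $\abs{x(e^{x}-1)}\le e\,x^{2}$ on $[-1,1]$.

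\emph{What your route buys.} You get $\bar\mu^{*}_{0}\le\bar\mu_{0}+e\,V(1)$ and $\bar\mu_{0}\le\bar\mu^{*}_{0}+e\,V(1)$ with an explicit constant. You use nothing beyond $\int(1\wedge x^{2})\,\nu(dx)<\infty$, and you even get $\mu^{*}_{\eta}-\mu_{\eta}\to 0$ as $\eta\downarrow 0$. The reduction from $\bar\mu$ to $\bar\mu_{0}$, via the discussion preceding the statement, is the same in both arguments.
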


The previous propositions show that both the stable domain-of-attraction structure
and the finiteness of $\bar{\mu}$ are invariant under the share measure
transformation. In particular, the same strictly $\alpha$-stable random
variable $Z$ (and the same centering constant $\bar{\mu}$ whenever it is finite)
govern the small-time behavior of $X_t$ under both $\P$ and $\Ps$. We are now
in a position to connect this structure to the behavior of the normalizing
function $B_t$ and hence to the short-time asymptotics of the at-the-money call
price.

Previous results on ATM call prices (e.g., \cite{figueroa_forde}, 
\cite{tankov_review}, and \cite{karbe_nutz}) considered only
the case $B_{t} = \kappa t^{1/\alpha}$ whenever $1 < \alpha < 2$, where
$\kappa>0$ is a constant. As we show in 
Section~\ref{sec:atm_asymptotics}, the framework developed here 
yields more general asymptotics for ATM call option prices. 

It is known (see \cite{feller}, \cite{grabchak}, and 
\cite{meerschaert_scheffler}) that the 
rate function $B \in RV_{1/\alpha}^{0}$ whenever the convergence is towards an
$\alpha$-stable random variable. Our goal for the remainder of this 
subsection is to further pin down the behavior of $B$ when 
\eqref{eq:doa} is satisfied. Throughout, we use the notation 
$\beta_{t} := 1 / B_{t}$ for convenience.

Assume that \eqref{eq:doa} holds for the \levy process $\left( X_{t}
\right)_{t \geq 0}$ under the measure $\P$ (and 
hence also under $\Ps$). Thus, the 
\levy measures $\nu$ and $\nu^{*}$ are regularly varying with index $\alpha >0$ 
at $0$, and we further assume that $\alpha \in (1,2)$. Since $\gamma^{*}$ is regularly 
varying at $0$ of order $-\alpha$, it has representation
\begin{align}
    \gamma^{*}\left( x \right) = x^{-\alpha} \ell\left( \frac{1}{x} \right),
    \label{eq:gamma_rv}
\end{align}
for all $x>0$, where $\ell$ is slowly varying at $\infty$ (equivalently, 
we can take $\gamma^{*}(x) = x^{-\alpha} \tilde{\ell}(x)$ where $\tilde{\ell}$ 
is slowly varying at $0$). We deal with $\gamma^{*}$ rather than $\gamma$ as most of
our calculations will be done with the quantities under the measure $\Ps$. 

The representation for $\gamma^{*}$ is derived in the following way. First,
note that $\gamma^{*}\left( \cdot \right) \in RV_{-\alpha}^{0}$ if and only if 
$\gamma^{*}\left( \frac{1}{\cdot} \right) \in RV_{\alpha}^{\infty}$, and then the
representation theorem for regularly varying functions gives 
\eqref{eq:gamma_rv}.

Note that $\ell$ is asymptotically controlled near $\infty$ by its 
slow variation, but its behavior near $0$ is essentially 
unconstrained apart from the monotonicity of $\gamma^{*}$ and the 
integrability condition $\int_{-1}^{1} x^2 \nu^{*}(dx) < \infty$.

We can simplify how to look at \eqref{eq:doa} because we do not need the 
additive correction term. In \cite{maller_mason}, the authors show that $A_{t}$
can be taken to be $O(t)$ (recall that $\alpha \in (1,2)$). So, under
$\Ps$ 
\[
    \beta_{t}\left( X_{t} - A_{t} \right) \Rightarrow Z,
\]
as $t \rightarrow 0$. Recall that $\beta_{t} \in RV_{-1/\alpha}^{0}$ so that,
again by the representation theorem,
\[
    \beta_{t} = t^{-1/\alpha} \zeta\left( 1/t \right),
\]
as $t \rightarrow 0$, where $\zeta$ is slowly varying at $\infty$. Also, for 
some absolute constant $C>0$ we have
\begin{align*}
    \abs{A_{t} \beta_{t}} &\leq C t \beta_{t} 
    = C t^{1-1/\alpha} \zeta\left( 1/t \right).
\end{align*}
The function $s^{1/\alpha-1} \zeta\left( s \right)$ is regularly
varying at $\infty$ with index $1/\alpha - 1 < 0$. Standard results 
(e.g. Proposition 1.3.6(v) in \cite{bingham}) imply that $s^{1/\alpha - 1} 
\zeta\left( s \right) \rightarrow 0$ as $s \rightarrow \infty$, which shows 
that $A_{t} \beta_{t} \rightarrow 0$ as $ t \rightarrow 0$. So, we need only 
look at the convergence 
\begin{equation}
\beta_{t} X_{t} \Rightarrow Z, \label{eq:conv_without_at}
\end{equation}
as $t \rightarrow 0$ 
under $\Ps$. 

The following result makes precise the relationship between the normalizing
function $\beta_t$ and the slowly varying part of the share-measure tail
$\gamma^{*}$.
\begin{thm}
    \label{thm:beta_behavior}
    Let $x_{0} > 0$, and let $x^{2} \xi_{S}^{*}(x)$ be 
    monotone (increasing or decreasing) for $0< x < x_{0}$. Then 
    \[
        \lim_{t \rightarrow 0} t \beta_{t}^{\alpha} \ell\left(\beta_{t} \right)
        = \Lambda_{\alpha},
     \] 
     where $\Lambda_{\alpha}$ is a positive constant depending only on $\alpha$.
\end{thm}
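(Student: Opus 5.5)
The plan is to work directly with the characteristic function of $\beta_t X_t$ under $\Ps$ and compare it with \eqref{eq:stable_chfn}. By \eqref{eq:conv_without_at}, for each fixed $s\neq 0$ we have $\Expecs{e^{is\beta_t X_t}}\to\varphi_Z(s)$. Taking logarithms and real parts, which is legitimate because $\varphi_Z$ never vanishes, gives
\begin{equation}
    \lim_{t\to0}\, t\int_{\R}\bigl(1-\cos(s\beta_t x)\bigr)\,\nu^{*}(dx) = c_\alpha \abs{s}^{\alpha}.
    \label{eq:plan_real_part}
\end{equation}
The drift and compensator terms are purely imaginary and drop out. I take $s=1$. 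The whole proof then reduces to showing
\[
    I(\beta):=\int_{\R}\bigl(1-\cos(\beta x)\bigr)\nu^{*}(dx)\sim K_\alpha\,\beta^{\alpha}\ell(\beta), \qquad \beta\to\infty,
\]
for an explicit constant $K_\alpha>0$. Combined with \eqref{eq:plan_real_part} and $\beta_t\to\infty$, this yields $t\beta_t^\alpha\ell(\beta_t)\to c_\alpha/K_\alpha=:\Lambda_\alpha$.

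To compute $I(\beta)$, I first symmetrize: since $1-\cos$ is even, $I(\beta)=\int_0^\infty (1-\cos(\beta x))\xi_S^{*}(x)\,dx$. The region $x\ge x_0$ contributes $O(1)$, because $\nu^*$ is finite away from $0$. That is negligible against $\beta^\alpha\ell(\beta)\to\infty$. On $(0,x_0)$ I write $(1-\cos(\beta x))\xi_S^*(x)=\frac{1-\cos(\beta x)}{x^2}\cdot x^2\xi_S^*(x)$. The monotonicity hypothesis enters here, through the monotone density theorem (Appendix A / Bingham et al.). The tail $\gamma^*(x)=\int_x^\infty \xi_S^*$ is in $RV^0_{-\alpha}$ with the representation \eqref{eq:gamma_rv}. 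Monotonicity of $x^2\xi_S^*(x)$ near $0$ then lets me conclude $\xi_S^*(x)\sim \alpha x^{-\alpha-1}\ell(1/x)$ as $x\downarrow0$. Equivalently, $x^2\xi_S^*(x)\sim\alpha x^{1-\alpha}\ell(1/x)$, which is regularly varying at $0$ with index $1-\alpha\in(-1,0)$.

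Next I substitute $u=\beta x$:
\[
    \int_0^{x_0}(1-\cos(\beta x))\xi_S^*(x)\,dx=\beta^{\alpha}\ell(\beta)\int_0^{\beta x_0}(1-\cos u)\,\frac{\xi_S^*(u/\beta)}{\beta^{1+\alpha}\ell(\beta)}\,du .
\]
The integrand converges pointwise to $\alpha(1-\cos u)u^{-1-\alpha}$, using the uniform convergence theorem for $\ell(\beta/u)/\ell(\beta)$ on compacts. The limit therefore identifies
\[
    K_\alpha=\alpha\int_0^\infty(1-\cos u)\,u^{-1-\alpha}\,du=\Gamma(1-\alpha)\cos(\pi\alpha/2)\cdot(-1)\cdot\alpha/(\alpha-1)\text{-type constant},
\]
which is finite and positive for $\alpha\in(0,2)$. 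It depends only on $\alpha$, as the statement asserts. Note that the constant $c_\alpha$ in \eqref{eq:stable_chfn} is determined by the normalization $B_t$ chosen in Theorem \ref{thm:maller_mason}. Fixing that convention makes $c_\alpha$, and hence $\Lambda_\alpha$, depend on $\alpha$ alone.

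The main obstacle is justifying the interchange of limit and integral over $u\in(0,\beta x_0)$. I will use Potter's bounds. For any $\delta>0$ and $\beta$ large, $\ell(\beta/u)/\ell(\beta)\le 2\max(u^{\delta},u^{-\delta})$ whenever $\beta/u\ge X_\delta$. This gives a dominating function of the form $C\min(u^2,1)\,u^{-1-\alpha}\max(u^\delta,u^{-\delta})$, which is integrable for $\delta<\min(\alpha,2-\alpha)$. The bound holds once the two-sided near-zero asymptotics of $\xi_S^*$ are replaced by a Potter-type inequality valid on all of $(0,x_0)$. That replacement again uses the monotonicity of $x^2\xi_S^*$ to control the density pointwise rather than only in integrated form.

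If the monotone density step proves delicate, I have a fallback. I would integrate by parts in $I(\beta)$ to write it as $\beta\int_0^\infty \sin(\beta x)\,\gamma^*(x)\,dx$, which involves only $\gamma^*$. I would then run the same Potter/dominated-convergence argument with $\sin u\,u^{-\alpha}$, which is integrable near $\infty$ only conditionally. The monotonicity hypothesis would still be needed to handle the oscillatory tail via an Abel/Dirichlet-type bound. For this reason I prefer the density route.
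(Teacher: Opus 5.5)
Your proposal is correct and follows essentially the paper's argument: the real part of the log characteristic function of $\beta_t X_t$ under $\Ps$, the monotone density theorem giving $\xi_S^*(x)\sim\alpha x^{-\alpha-1}\ell(1/x)$, the substitution $u=\beta x$, Potter bounds for dominated convergence, and a negligible contribution from jumps away from $0$. Isolating the deterministic asymptotic $I(\beta)\sim K_\alpha\beta^\alpha\ell(\beta)$ is, if anything, cleaner than the paper's $(1\pm\varepsilon)$ sandwich, which writes $\lim_{t\to0}t\beta_t^\alpha\ell(\beta_t)$ before its existence is known. Two small fixes are needed: the closed form of your constant is wrong, since $K_\alpha=\alpha\int_0^\infty(1-\cos u)u^{-1-\alpha}\,du=\Gamma(1-\alpha)\cos(\pi\alpha/2)>0$ on $(1,2)$ and your extra factor $-\alpha/(\alpha-1)$ would make it negative; and no Potter-type bound on all of $(0,x_0)$ is needed, because you can lower the cutoff to a point below which both the two-sided density bound and the Potter bound hold, then absorb the remainder into the $O(1)$ term.
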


\begin{rem}
    The constant $\Lambda_{\alpha}$ encapsulates all the slowly varying
    structure of $\ell$; once $\gamma^{*}(x) = x^{-\alpha}\ell(1/x)$ is fixed
    near $0$, the normalization $\beta_t$ is determined up to this factor by
    the relation $t^{-1} \sim \beta_t^{\alpha}\ell(\beta_t)$.  This is the
    key link between the regular variation of the \levy measure and the
    first-order at-the-money asymptotics obtained later.
\end{rem}

\section{First-Order ATM Asymptotics}
\label{sec:atm_asymptotics}

We now state the main pricing results of the paper. We maintain all of
the assumptions and notation from Section~\ref{sec:doa_rv} and add the
following:

\begin{enumerate}[label=(A\arabic*)]
    \item $\nu$ is regularly varying of order $-\alpha$, $\alpha \in (1,2)$, 
        at $0$. \label{assumption:1}
    \item There exist $C>0$ and $x_{1}>0$ (possibly depending on each other) 
    such that $\int_{ \abs{y} \leq x} 
        y^{2} e^{y} \xi\left( y \right) dy \leq C x^{2} \int_{ \abs{y} > x} 
        e^{y} \xi\left( y \right) dy$ for all $x \geq x_{1}$. 
        \label{assumption:2}
    \item There exists $x_{0}>0$ such that $x^{2} \xi_{S}^{*}\left( x \right)$ 
        is monotone
        (either increasing or decreasing) for $0< x < x_{0}$. \label{assumption:3} 
\end{enumerate}

We first make some brief comments on the requirements 
\ref{assumption:1}--\ref{assumption:2}. Assumption~\ref{assumption:1} is
the standard regular--variation condition placing $X$ in the strictly
$\alpha$--stable domain of attraction and determining the small--time scaling
$B_t$. Assumption~\ref{assumption:2} supplies the balance needed to
apply the results of \cite{houdre_marchal} under the share measure.
Their result does not require \ref{assumption:2} in general, but 
it is precisely the
hypothesis ensuring that the truncated second moment of the share-measure
density is negligible relative to its tail, which is the only part of their
inequality used in the proof.

\begin{rem}
The primary technical input is Assumption~\ref{assumption:3}, which imposes a local
regularity condition on the symmetrized share-measure density $\xi_S^*$.
We assume that $x^{2}\xi_S^*(x)$ is eventually monotone as $x\downarrow 0$,
allowing us to apply a monotone density theorem and deduce the key two-sided
estimate~\eqref{eq:xi_equiv} linking $\xi_S^*$ with the tail
$\gamma_S^*(x):=\int_x^\infty \xi_S^*(u)\,du$. Several alternative conditions
yield the same conclusion. The classical monotone density theorem (\cite{bingham}, 
Theorem 1.7.2) shows that if $\gamma_S^*\in\mathrm{RV}_{-\alpha}$ and $\xi_S^*$ is 
ultimately monotone, then $\xi_S^*\in\mathrm{RV}_{-\alpha-1}$ (provided $\alpha>0$). 
More generally, the
O--version of the monotone density theorem (\cite{bingham}, Prop.~2.10.3) replaces
monotonicity with bounded increase/decrease or finite Matuszewska index
assumptions and still yields
\[
\xi_S^*(x) \asymp \frac{\gamma_S^*(x)}{x}, \qquad x\downarrow 0,
\]
which, when $\gamma_S^*\in\mathrm{RV}_{-\alpha}$, implies precisely the
estimate~\eqref{eq:xi_equiv} (where for functions $f$ and $g$, $f \asymp g$ 
if and only if $f=O(g)$ and $g=O(f)$). One may also invoke the smooth-variation
construction (\cite{bingham}, Section 1.8), which shows that any regularly 
varying tail is asymptotically equivalent to a smoothly varying representative 
whose derivative is regularly varying of index $-\alpha-1$. Thus \ref{assumption:3} 
is simply a convenient sufficient condition: the proof requires only an estimate of 
the form \eqref{eq:xi_equiv} for $\xi_S^*$ near the origin, and any of the monotone,
O--regular variation, or smooth-variation assumptions would suffice.
\end{rem}

\begin{thm}
    \label{thm:main_result}
    Along with the conditions \ref{assumption:1}--\ref{assumption:3}, assume that
       \begin{equation}
        \bar{\mu} < \infty,   \label{assumption:4} 
       \end{equation} 
    and let $Z$ be the $\alpha$-stable random variable from \eqref{eq:conv_without_at}.
    Then an ATM European call option has asymptotic expansion
    \begin{align}
        \Expecnb{\left( S_{t} - S_{0} \right)_{+} } = \left( S_{0} \Expecsnb{Z_{+}}
        \right) B_{t} + o\left( B_t  \right), \label{eq:call_price_nobs}
    \end{align}
    as $t \rightarrow 0$.
\end{thm}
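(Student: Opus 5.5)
Starting from the Carr--Madan representation \eqref{eq:cm_rep}, substitute $x = B_t u$:
\[
\frac{c(t,0)}{B_t} = \int_0^\infty e^{-B_t u}\,\Probs{\beta_t X_t \ge u}\,du .
\]
The goal is to show that this converges to $\int_0^\infty \Probs{Z\ge u}\,du = \Expecsnb{Z_+}$, where $Z$ is the limit in \eqref{eq:conv_without_at}. Since $S_0c(t,0)=\Expecnb{(S_t-S_0)_+}$, this gives \eqref{eq:call_price_nobs}. The plan has two parts: pointwise convergence of the integrand, and a dominating function.

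For the pointwise part, \ref{assumption:1} and Proposition~\ref{lem:preserve_reg_var} place $X$ under $\Ps$ in the $\alpha$-stable domain of attraction, with the same $Z$ by Proposition~\ref{lem:same_stable_rv}. Proposition~\ref{lem:mu_bar_finite} turns \eqref{assumption:4} into $\bar\mu^*<\infty$, so $A_t=O(t)$ under $\Ps$ and $\beta_tA_t\to0$, as argued before Theorem~\ref{thm:beta_behavior}. Hence $\beta_tX_t\Rightarrow Z$ under $\Ps$. Because $Z$ is a nondegenerate stable law, it has a continuous distribution, so $\Probs{\beta_tX_t\ge u}\to\Probs{Z\ge u}$ for every $u$. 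Also $e^{-B_tu}\to1$, so the integrand converges pointwise.

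The main obstacle is the domination: I need a bound on $\Probs{\beta_tX_t\ge u}$ that is uniform in small $t$ and integrable in $u$ on $[0,\infty)$. Since $\alpha>1$, a bound of order $u^{-\alpha}$ (capped by $1$ near the origin) suffices.
\begin{itemize}
\item I would use the Houdr\'e--Marchal concentration/tail inequality under $\Ps$. For a Lévy variable with triplet $(b^*,0,\nu^*)$ it bounds $\Probs{|X_t-t\mu^*_x|\ge x}$ by a constant times $t\big(\gamma^*(x)+x^{-2}\int_{|y|\le x}y^2\nu^*(dy)\big)$.
\item The centering is harmless: $|t\mu^*_x|\le t\bar\mu^*$, and $t\bar\mu^*=o(B_t)$.
\item For small $x$, \ref{assumption:3} together with the monotone density theorem gives the estimate \eqref{eq:xi_equiv}, i.e.\ $\xi_S^*(x)\asymp\gamma^*(x)/x$. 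Integrating yields $x^{-2}\int_{|y|\le x}y^2\nu^*(dy)\le C\gamma^*(x)$.
\item For large $x$, \ref{assumption:2} gives the same comparison directly. For intermediate $x$, both quantities are bounded above and below by constants.
\end{itemize}
Altogether, $\Probs{X_t\ge x}\le Ct\gamma^*(x)$ for $x\ge 2t\bar\mu^*$.

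Now put $x=uB_t$ and use $\gamma^*(x)=x^{-\alpha}\ell(1/x)$:
\[
t\gamma^*(uB_t)=t\beta_t^\alpha u^{-\alpha}\ell(\beta_t/u).
\]
By Theorem~\ref{thm:beta_behavior}, $t\beta_t^\alpha\ell(\beta_t)\to\Lambda_\alpha$, so this equals $u^{-\alpha}\,\ell(\beta_t/u)/\ell(\beta_t)$ times a bounded factor. Potter's bounds give $\ell(\beta_t/u)/\ell(\beta_t)\le C\max(u^{\delta},u^{-\delta})$ once $\beta_t/u$ is large, and I would pick $\delta<\alpha-1$.

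The range where $\beta_t/u$ is not large, i.e.\ $u$ comparable to $\beta_t$ or bigger, corresponds to $x=uB_t\gtrsim1$. There $t\gamma^*(x)\le Ct$, and I expect a contribution of at most $\int_{\beta_t}^\infty e^{-B_tu}\,Ct\,du\le Ct\beta_t\to0$. Alternatively, I could bound $\Probs{X_t\ge x}$ by $Ct\,\Expecsnb{|X_1|}/x$-type moment estimates, which are available thanks to \eqref{eq:ex_moment}. For $u\le u_0$ I simply bound the integrand by $1$.

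Together these give an integrable dominating function, $\min(1,Cu^{-\alpha+\delta})$ plus a vanishing tail piece. Dominated convergence then yields $c(t,0)/B_t\to\Expecsnb{Z_+}$, which is \eqref{eq:call_price_nobs}.
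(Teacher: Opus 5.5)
Your proposal is correct and follows essentially the paper's proof: the Carr--Madan representation with $x=B_tu$, a truncation/concentration bound $\Probs{X_t\ge x}\lesssim t\gamma^*(x)$ valid once $x\gtrsim t\bar\mu^*$, and Theorem~\ref{thm:beta_behavior} with Potter bounds to build a dominating function for dominated convergence. The small-$u$ region and the region $uB_t\gtrsim 1$ are treated separately, as in the paper; the only minor deviation is that you kill the far region with the factor $e^{-B_tu}$, giving a bound of order $t\beta_t$, rather than with the integrability of $\gamma^*$ away from the origin (Lemma~\ref{lem:prev_assump_5}), which is equally valid and slightly simpler.
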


\begin{cor}
    \label{cor:no_brownian_case}
    Under the assumptions of Theorem \ref{thm:main_result}, $\hat{\sigma}$,
    the implied volatility of an ATM call option, is such that
    \begin{align}
        \hat{\sigma}\left( t \right) = \sqrt{2 \pi} 
        \frac{B_{t}}{\sqrt{t}} \Expecsnb{Z_{+}} + o\left( \frac{B_{t}}{\sqrt{t}} \right), 
        \label{eq:impliedvol}
    \end{align}
    as $t \rightarrow 0$. 
\end{cor}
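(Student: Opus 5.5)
The plan is to deduce the corollary from Theorem~\ref{thm:main_result} by inverting the Black--Scholes ATM formula. With zero rate and $K=S_0$, the Black--Scholes call price is
\[
C_{BS}(t,\sigma) = S_0\Bigl(2\Phi\bigl(\tfrac{\sigma\sqrt t}{2}\bigr)-1\Bigr),
\]
where $\Phi$ is the standard normal distribution function. The implied volatility $\hat\sigma(t)$ is defined by
\[
C_{BS}(t,\hat\sigma(t))=\Expecnb{(S_t-S_0)_+}.
\]
This defines $\hat\sigma(t)$ uniquely, since $\sigma\mapsto C_{BS}(t,\sigma)$ is a strictly increasing bijection onto $(0,S_0)$. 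The ATM price lies in that interval because, by \eqref{eq:cm_rep}, $0<c(t,0)<1$.

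First I show that $\hat\sigma(t)\sqrt t\to0$. By \eqref{eq:call_price_nobs} and $B_t\to0$, the right-hand side is $O(B_t)\to0$. Hence
\[
2\Phi\bigl(\hat\sigma\sqrt t/2\bigr)-1\to0,
\]
so $u_t:=\hat\sigma(t)\sqrt t/2\to0$. Next I use the Taylor expansion
\[
2\Phi(u)-1=\frac{2u}{\sqrt{2\pi}}+O(u^3)=\frac{2u}{\sqrt{2\pi}}\bigl(1+O(u^2)\bigr),\qquad u\to0.
\]
This gives
\[
\frac{u_t}{\sqrt{2\pi}}\cdot 2\bigl(1+o(1)\bigr)=\Expecsnb{Z_+}B_t+o(B_t).
\]
Solving, $\hat\sigma(t)\sqrt t=\sqrt{2\pi}\,\Expecsnb{Z_+}B_t\,(1+o(1))+o(B_t)$. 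Dividing by $\sqrt t$ yields \eqref{eq:impliedvol}.

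The only delicate point is that the leading constant is nonzero, so that the $o(\cdot)$ bookkeeping is legitimate and the relative error $O(u_t^2)$ is absorbed. Since $Z$ is a nondegenerate $\alpha$-stable law with $\alpha\in(1,2)$, it has a finite mean. That mean is zero in the strictly stable setting obtained after removing $A_t$ in \eqref{eq:conv_without_at}. Being nondegenerate, $Z$ has $\Expecsnb{Z_+}=\tfrac12\Expecsnb{|Z|}>0$. Even without this positivity, the argument goes through with additive errors: $u_t=O(B_t)$ already follows from the displayed relation, so $O(u_t^3)=o(B_t)$.
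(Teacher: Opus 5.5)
Your proposal is correct and follows the same route as the paper: you invert the ATM Black--Scholes price $S_0\bigl(2\Phi(\hat\sigma\sqrt t/2)-1\bigr)$ via its linearization $\theta/\sqrt{2\pi}$ at $\theta=\hat\sigma(t)\sqrt t\to0$, then match the result with Theorem~\ref{thm:main_result}. Your bookkeeping is tighter than the paper's. The paper asserts $\hat\sigma(t)\to0$ without proof and weakens the error to $o(\sqrt t)$, which is too coarse because $B_t=o(\sqrt t)$ for $\alpha\in(1,2)$; you instead derive $\hat\sigma(t)\sqrt t\to0$ from the price and keep the relative error $O(u_t^2)$.
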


We now consider exponential \levy models whose driving process has a nonzero
Gaussian component.  Let $X$ be a \levy process with triplet $(b,\sigma,\nu)$
under the risk–neutral measure, so that $\proc{S_t} = \proc{S_0 e^{X_t}}$ is a martingale.
When $\sigma > 0$ the path decomposition
\[
    X_t \,\eqlaw\, \sigma W_t + L_t,
\]
holds for every fixed $t\ge 0$, where $W$ is a standard Brownian motion and
$L$ is a pure–jump \levy process with triplet $(b,0,\nu)$.  The equality in
law does not impose any independence assumption between $W$ and $L$ and is
used only to separate the diffusive and jump contributions at small times.
The martingale condition $\psi(-i)=0$ fixes the drift $b$ so that
$\E[e^{X_t}]=1$. 

\begin{thm}
\label{thm:Brownian}
Let $\left(X_t\right)_{t\ge 0}$ be a \levy process with triplet $(b,\sigma,\nu)$ under
the risk–neutral measure, where $\sigma>0$.  Then for every $t\ge 0$ the
process admits the distributional decomposition
\[
    X_t \,\eqlaw\, b t + \sigma W_t + L_t,
\]
where $W$ is a standard Brownian motion, $L$ is a pure–jump \levy process
with triplet $(0,0,\nu)$, and $b$ is chosen so that $\E[e^{X_t}]=1$.  
Suppose that $L$ satisfies the hypotheses of
Theorem~\ref{thm:main_result}.  Then the at-the-money call price is such that
\[
    \Expecnb{\left(S_t - S_0\right)_{+}}
        = S_0\, \sigma \sqrt{t}\, \Expecsnb{(W_1)_{+}}
          + o(\sqrt{t}),
    \qquad t\downarrow 0.
\]
\end{thm}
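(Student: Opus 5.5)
We use the Carr--Madan representation \eqref{eq:cm_rep}, which gives
\[
c(t,0)=\Expecs{(X_t-E)_+}=\int_0^\infty e^{-x}\Probs{X_t\ge x}\,dx .
\]
Here $\Ps$ is the share measure. Under $\Ps$, $X$ has triplet $(b^*,\sigma,\nu^*)$ with the same $\sigma$, so it splits independently as $X_t=b^*t+\sigma W_t^*+L_t^*$. The process $W^*$ is a $\Ps$-Brownian motion, and $L^*$ is a pure-jump \levy process with \levy measure $\nu^*$; its drift is absorbed into $b^*$. (Independence holds under $\Ps$ because the Esscher transform preserves the \levy structure.) Since $c(t,0)=\Expecs{(e^{-E}\cdots)}$ is not quite convenient, we prefer the elementary equivalent form
\[
c(t,0)=\Expecs{\bigl(1-e^{-X_t}\bigr)_+}.
\]
The function $x\mapsto(1-e^{-x})_+$ is $1$-Lipschitz and equals $x_++O(x^2)$ near $0$. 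The plan is to show $\Expecs{(1-e^{-X_t})_+}=\sqrt t\,\sigma\Expecsnb{(W_1)_+}+o(\sqrt t)$ in three steps.

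\textbf{Step 1 (reduction to the Gaussian part).} By the Lipschitz property,
\[
\abs{c(t,0)-\Expecs{(1-e^{-\sigma W^*_t})_+}}\le \abs{b^*}t+\Expecs{\abs{L^*_t}} .
\]
The bulk of the work is to show $\Expecs{\abs{L_t^*}}=o(\sqrt t)$.

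\textbf{Step 2 (jump part is $O(B_t)$).} By the hypotheses of Theorem~\ref{thm:main_result} applied to $L$ (transferred to $\Ps$ via Propositions~\ref{lem:preserve_reg_var} and \ref{lem:mu_bar_finite}, with \ref{assumption:2} feeding the Houdr\'e--Marchal moment bound), we obtain $\Expecs{\abs{L_t^*}}\le C(B_t+t)$ for small $t$. This is the same first-absolute-moment estimate that underlies the proof of \eqref{eq:call_price_nobs}, namely uniform integrability of $\beta_tL_t^*$ together with $\bar\mu^*<\infty$.

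If a direct citation is unavailable, the fallback is to split $L^*$ at jump size $B_t$. The small-jump martingale part has $L^2$ norm $\sqrt{t\,V^*(B_t)}\asymp B_t$, since $tB_t^{-2}U(B_t)\to$ const. The large jumps contribute at most $t\int_{\abs y>B_t}\abs y\,\nu^*(dy)\asymp t\,B_t\gamma^*(B_t)\asymp B_t$, using Karamata with $\alpha>1$. The compensating drift is $t\,\mu^*_{B_t}=O(t)$ by $\bar\mu^*<\infty$.

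Since $B\in RV^0_{1/\alpha}$ with $1/\alpha>1/2$, we have $B_t/\sqrt t\to0$, again by Proposition 1.3.6 of \cite{bingham}, as in the $A_t\beta_t$ argument. Hence $\Expecs{\abs{L_t^*}}=o(\sqrt t)$.

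\textbf{Step 3 (Gaussian term).} By scaling,
\[
\Expecs{(1-e^{-\sigma\sqrt t W_1})_+}=\sigma\sqrt t\,\Expecs{(W_1)_+}+O(t),
\]
since $\abs{(1-e^{-y})_+-y_+}\le y^2$ for $y\ge0$ and $\E W_1^2<\infty$. Multiplying by $S_0$ gives the claim.

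\textbf{Main obstacle.} The delicate point is Step 2: obtaining the first-moment bound $\Expecs{\abs{L_t^*}}=O(B_t)$ under $\Ps$, rather than mere convergence in law. Convergence in law alone does not control the moment. I would first try to extract this bound directly from the proof of Theorem~\ref{thm:main_result}, and use the truncation computation above as a backup.
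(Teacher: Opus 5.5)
Your proof is correct, and it takes a genuinely different route from the paper's.

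\textbf{The paper's route.} The paper keeps the tail-integral form $\int_0^\infty e^{-\sqrt t u}\P^*(X_t\ge u\sqrt t)\,du$. It dominates the integrand by $\P^*(\sigma W_1\ge u/2)+\P^*(L_t\ge \sqrt t u/2)$ and applies the generalized dominated convergence Lemma~\ref{EDCT}. The real work is showing $\int_0^\infty \P^*(L_t\ge \sqrt t u/2)\,du\to 0$. For this it uses the concentration inequality of Proposition~\ref{prop:concentration_ineq}, which is where \ref{assumption:2} and $\bar\mu^*<\infty$ enter. It then adds Potter bounds and a split of the $u$-range according to $t\bar\mu^*$.

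\textbf{Your route.} You use $c(t,0)=\E^*[(1-e^{-X_t})_+]$ and the $1$-Lipschitz property to reduce everything to $\E^*|L^*_t|$. You then control this first absolute moment by truncating the jumps at level $B_t$.

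\textbf{What your route buys.}
\begin{itemize}
\item It is more elementary.
\item It needs no independence of $W^*$ and $L^*$.
\item It handles the drift $b^*t$ explicitly, whereas \eqref{eq:both_parts} omits it.
\item It gives a quantitative error $O(B_t+t)$.
\item It uses only \ref{assumption:1} and \eqref{eq:ex_moment}, not \ref{assumption:2}, \ref{assumption:3} or $\bar\mu<\infty$. The compensating drift $t\int_{B_t<|x|\le 1}x\,\nu^*(dx)$ is already bounded by the large-jump term $t\int_{|x|>B_t}|x|\,\nu^*(dx)$, so $\bar\mu^*$ is never needed.
\end{itemize}
In fact your argument shows regular variation is unnecessary here. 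Truncating at a fixed level $\varepsilon$ gives $\E^*|L^*_t|\le \sqrt{tV^*(\varepsilon)}+2t\int_{|x|>\varepsilon}|x|\,\nu^*(dx)$. Letting $t\to 0$ and then $\varepsilon\to0$ yields $\E^*|L^*_t|=o(\sqrt t)$ under \eqref{eq:ex_moment} alone.

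\textbf{What the paper's route buys.} Mainly uniformity with the proof of Theorem~\ref{thm:main_result}: it works only with right-tail probabilities, matching the Carr--Madan representation, and reuses the same concentration and Potter-bound machinery.

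\textbf{Two small slips.}
\begin{itemize}
\item Your first display $c(t,0)=\E^*[(X_t-E)_+]$ is wrong. Theorem~\ref{thm:carr_madan} gives $c(t,0)=\P^*(X_t>E)$, and $\E^*[(X_t-E)_+]$ is a different quantity. The form you actually use, $\E^*[(1-e^{-X_t})_+]=\int_0^\infty e^{-x}\P^*(X_t\ge x)\,dx$, is correct by Fubini.
\item The ``direct citation'' in Step 2 is not available. The proof of Theorem~\ref{thm:main_result} never establishes uniform integrability of $\beta_t L^*_t$ or any two-sided moment bound, only a right-tail estimate. Your truncation computation should therefore be the proof, not the fallback.
\end{itemize}
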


Theorems~\ref{thm:Brownian} and~\ref{thm:BrownianSimplified} show that, 
under the assumptions of Theorem~\ref{thm:main_result}, the jump component is 
always $o(\sqrt{t})$ regardless of its structure, so the leading 
$\sqrt{t}$ behavior of the ATM call price is universal and driven entirely 
by the Gaussian part of the triplet. Theorem~\ref{thm:BrownianSimplified} 
below gives a self-contained formulation of this principle.

\begin{thm}
    \label{thm:BrownianSimplified}
    Let $L = \left( L_{t} \right)_{t \geq 0}$ be a \levy process with triplet $\ltrip$ 
    such that $\Expec{\abs{L_{1}}e^{L_{1}}} < \infty$ and let $S_t = S_0 
    e^{L_{t}}$. Let there exist $B_{t} > 0$ with $B_t \rightarrow 0$ as 
    $t\rightarrow 0$, $\alpha \in (1,2)$, and a probability measure $\Ps$
    such that
    \[
        \frac{1}{B_{t}} \Expecnb{\left( S_{t} - S_0 \right)_{+} } \rightarrow 
        \Expecsnb{Z_{+}},
    \]
    and
    \[
        \Probs{L_{t} \geq B_{t} u } \rightarrow \Probs{Z \geq u}, 
    \]
    for every $u \geq 0$, where $Z$ is an $\alpha$-stable random variable under 
    $\P^{*}$. For $t \geq 0$, let $\proc{X_{t}}$ be given by $X_t = \sigma W_{t} 
    + L_{t}$ where $W = \left( W_{t} \right)_{t \geq 0}$ is a Brownian motion 
    independent of $L$ under $\P^{*}$. If 
    \[
        \frac{B_{t}}{\sqrt{t}} \rightarrow 0,
    \]
    as $t\rightarrow 0$, and if $\proc{S_{t}} = \proc{S_0 e^{X_t}}$ is a martingale
    with respect to its own filtration, then
    \begin{align}
        c(t,0) = \sigma \Expecsnb{(W_{1})_{+}} \sqrt{t} 
        + o\left( \sqrt{t} \right), \label{eq:call_price_bs}
    \end{align}
    as $t\rightarrow 0$. 
\end{thm}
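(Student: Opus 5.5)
The plan is to run everything through the Carr--Madan representation \eqref{eq:cm_rep}. A Fubini computation turns it into
\[
    c(t,0) = \int_0^\infty e^{-x}\Probs{X_t \ge x}\,dx = \Expecs{f(X_t)}, \qquad f(y) := 1 - e^{-y_+}.
\]
The function $f$ is $1$-Lipschitz, bounded by $1$, and satisfies $y_+ - y_+^2/2 \le f(y) \le y_+$. Here $\Ps$ is the share measure of $X$, whose density is $e^{X_t} = e^{\sigma W_t - \sigma^2 t/2}\, e^{L_t - t\psi_L(-i)}$ up to the martingale normalization, with $\psi_L$ the exponent of $L$. Because this density factorizes into a $W$-part and an $L$-part, and $W$ and $L$ are independent, they remain independent under $\Ps$. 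Moreover, the $\Ps$-law of $L_t$ is exactly the share-measure law of $L$ alone, which is the measure appearing in the hypotheses. Under $\Ps$, $W$ becomes a Brownian motion with drift $\sigma$, by Girsanov.

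\textbf{Step 1 (reduce to the Gaussian part).} The Lipschitz and boundedness properties give
\[
    \abs{f(\sigma W_t + L_t) - f(\sigma W_t)} \le \min(\abs{L_t}, 1).
\]
Hence
\[
    \abs{c(t,0) - \Expecs{f(\sigma W_t)}} \le \Expecs{\min(\abs{L_t},1)},
\]
and it suffices to show that $\Expecs{\min(\abs{L_t},1)} = o(\sqrt t)$.

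\textbf{Step 2 (control of the jump part; main obstacle).} The hypotheses only control the positive side, through $\Expecnb{(S_t - S_0)_+} \sim S_0 B_t \Expecsnb{Z_+}$, so the negative side has to be recovered separately. I would use put--call parity for $e^{L}$:
\[
    \Expecnb{\abs{e^{L_t}-1}} = 2\,\Expecnb{(e^{L_t}-1)_+} - \big(\Expecnb{e^{L_t}} - 1\big).
\]
The last term equals $e^{t\psi_L(-i)} - 1 = O(t)$, which is finite by the moment hypothesis $\Expec{\abs{L_1}e^{L_1}}<\infty$. Therefore $\Expecnb{\abs{e^{L_t}-1}} = O(B_t) + O(t)$.

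Passing to the share measure of $L$,
\[
    \Expecnb{\abs{e^{L_t}-1}} = e^{t\psi_L(-i)}\,\Expecs{\abs{1-e^{-L_t}}}.
\]
Since $\abs{1-e^{-y}} \ge c\min(\abs y,1)$ for an absolute constant $c>0$, we get $\Expecs{\min(\abs{L_t},1)} = O(B_t + t)$. Both terms are $o(\sqrt t)$, using $B_t/\sqrt t \to 0$. The delicate points are checking the parity identity and the measure-change bookkeeping carefully; in particular, the martingale normalization of $S$ may differ from that of $e^{L}$ by the factor $e^{t\psi_L(-i)}$, which is $1+O(t)$ and harmless. If needed, the convergence $\Probs{L_t \ge B_t u} \to \Probs{Z \ge u}$ can serve as a cross-check on the positive side.

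\textbf{Step 3 (Gaussian term).} Under $\Ps$ we have $\sigma W_t \eqlaw \sigma\sqrt t\, N + \sigma^2 t$ with $N$ standard normal. The sandwich bounds on $f$ give
\[
    \Expecs{f(\sigma W_t)} = \Expecs{(\sigma W_t)_+} + O(t) = \sigma\sqrt t\,\Expecsnb{(W_1)_+} + O(t),
\]
where the last equality uses that $y \mapsto y_+$ is $1$-Lipschitz to absorb the drift $\sigma^2 t$. Combining Steps 1--3 yields \eqref{eq:call_price_bs}. The term $\Expecsnb{(W_1)_+}$ should be read as the standard Gaussian value $1/\sqrt{2\pi}$, with the drift correction already absorbed into the $O(t)$.
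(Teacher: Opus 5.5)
Your argument is correct, and it takes a genuinely different route from the paper.

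The paper stays with the scaled tail integral $c(t,0)/\sqrt t=\int_0^\infty e^{-\sqrt t u}\,\Probs{X_t\ge u\sqrt t}\,du$. It dominates the integrand by $\Probs{\sigma W_1\ge u/2}+\Probs{L_t\ge \sqrt t\,u/2}$ and applies its generalized dominated convergence lemma (Lemma~\ref{EDCT}). The real work is showing $\int_0^\infty\Probs{L_t\ge\sqrt t\,u/2}\,du\to0$. It does this by comparing with $\Probs{L_t\ge B_t u/2}$ once $B_t\le\sqrt t$, and then running a Fatou argument against the assumed convergence of the scaled call price.

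You instead write $c(t,0)=\E^{*}[1-e^{-X_t^{+}}]$, strip off $L$ in one step via the Lipschitz bound, and control $\E^{*}\min(\abs{L_t},1)$ on both sides through put--call parity. This uses only $\E[(e^{L_t}-1)_+]=O(B_t)$ and $\E[e^{L_t}]=1+O(t)$. Your route buys several things:
\begin{itemize}
\item It gives an explicit error $O(B_t+t)$.
\item It never uses the tail-convergence hypothesis, the stable limit, or Lemma~\ref{EDCT}.
\item It tracks the Girsanov drift of $W$ under $\Ps$ and the normalization of $e^{L}$ versus $e^{X}$, which the paper glosses over. The paper's proof treats $W$ as standard under $\Ps$, which is consistent with your reading $\E^{*}[(W_1)_+]=1/\sqrt{2\pi}$.
\item It supplies lower-tail control of $L_t$. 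The paper needs this for the pointwise limit $\Probs{X_t\ge u\sqrt t}\to\Probs{\sigma W_1\ge u}$ in \ref{C2}, but takes it for granted, since its hypotheses only concern upper tails.
\end{itemize}

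Your Step~1 also shows that the stable-domain input is inessential here. All that matters is $\E^{*}\min(\abs{L_t},1)=o(\sqrt t)$, and this holds for every pure-jump L\'evy process. To see it, split the jumps at level $\varepsilon$. Bound the drift by $t\abs{\mu^{*}_{\varepsilon}}$, the compensated small jumps in $L^2$ by $\sqrt{t V^{*}(\varepsilon)}$, and the large jumps by $t\gamma^{*}(\varepsilon)$. Then let $t\to0$ and afterwards $\varepsilon\to0$.

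The paper's route, in exchange, keeps the proof parallel to that of Theorem~\ref{thm:Brownian} and to the tail-integral machinery used elsewhere. It does not rely on the specific algebra of the at-the-money call payoff.
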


Finally, we obtain the implied volatility asymptotics in a similar way to the case
without Brownian component. 

\begin{cor}
\label{cor:brownian_case}
    Under the hypotheses of Theorem \ref{thm:Brownian}, $\hat{\sigma}$,
    the implied volatility, is such that
    \[
        \hat{\sigma}\left( t \right) = \sigma \sqrt{2 \pi} 
        \Expecsnb{\left( W_{1} \right)_{+}} + o\left( 1 \right),
    \]
    as $t \rightarrow 0$.
\end{cor}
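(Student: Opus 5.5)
We deduce Corollary~\ref{cor:brownian_case} from Theorem~\ref{thm:Brownian} by inverting the Black--Scholes ATM price, exactly as Corollary~\ref{cor:no_brownian_case} follows from Theorem~\ref{thm:main_result}.

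\emph{Step 1: the Black--Scholes ATM price.} With $r=0$ and $K=S_0$, the normalized Black--Scholes call price with volatility $v$ and maturity $t$ is
\[
c_{BS}(t,v)=\Phi\bigl(\tfrac12 v\sqrt t\bigr)-\Phi\bigl(-\tfrac12 v\sqrt t\bigr)=2\Phi\bigl(\tfrac12 v\sqrt t\bigr)-1 .
\]
Put $g(y)=2\Phi(y/2)-1$. This is a strictly increasing, continuous bijection from $[0,\infty)$ onto $[0,1)$, and $g(y)=y/\sqrt{2\pi}+O(y^3)$ as $y\downarrow 0$. By definition, the implied volatility satisfies $g\bigl(\hat\sigma(t)\sqrt t\bigr)=c(t,0)$.

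\emph{Step 2: insert the price asymptotics.} Theorem~\ref{thm:Brownian} gives
\[
c(t,0)=\sigma\sqrt t\,\Expecsnb{(W_1)_+}+o(\sqrt t).
\]
In particular $c(t,0)\to 0$. Since $g^{-1}$ is continuous at $0$, it follows that $y_t:=\hat\sigma(t)\sqrt t\to 0$.

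\emph{Step 3: linearize.} Because $y_t\to 0$, we have $g(y_t)=y_t/\sqrt{2\pi}\,(1+o(1))$. Hence
\[
\hat\sigma(t)\sqrt t=\sqrt{2\pi}\,c(t,0)\,(1+o(1)).
\]
Dividing by $\sqrt t$ gives
\[
\hat\sigma(t)=\sqrt{2\pi}\bigl(\sigma\Expecsnb{(W_1)_+}+o(1)\bigr)(1+o(1))=\sigma\sqrt{2\pi}\,\Expecsnb{(W_1)_+}+o(1),
\]
which is the claim. As a consistency check, $\Expecsnb{(W_1)_+}=1/\sqrt{2\pi}$, so the limit is $\sigma$.

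\emph{Main obstacle.} Step~3 requires $y_t\to0$, that is, that $\hat\sigma(t)\sqrt t$ is small, and not merely $\hat\sigma(t)=O(1)$ a priori. This is why Step~2 first obtains $y_t\to 0$ from the continuity of $g^{-1}$ at $0$, using the fact that the price tends to zero, before applying the Taylor expansion. The only remaining check is that $c(t,0)$ lies in the range $[0,1)$ of $g$. This holds because $0\le \Expecnb{(e^{X_t}-1)_+}<\Expecnb{e^{X_t}}=1$ under the martingale condition.
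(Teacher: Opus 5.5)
Your proof is correct and follows the paper's route: you invert the Black--Scholes ATM price $c_{BS}=N(\hat\sigma(t)\sqrt t)$, using $N(\theta)=\theta/\sqrt{2\pi}+o(\theta)$, and match it against the $\sigma\sqrt t\,\Expecsnb{(W_1)_+}$ asymptotics of Theorem~\ref{thm:Brownian}. Your explicit derivation of $\hat\sigma(t)\sqrt t\to 0$ from $c(t,0)\to 0$ and the continuity of $g^{-1}$ is a useful addition, since the paper's ``proceed exactly as'' reuses an argument that relied on $\hat\sigma(t)\to 0$, which does not hold here because $\hat\sigma(t)\to\sigma$.
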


\section{Example: Beyond Power-Law Convergence Rates}
\label{sec:example}

The results of Section~\ref{sec:atm_asymptotics} show that new first-order dynamics 
are possible; to illustrate this, we now present a simple example that satisfies the 
assumptions of Theorem~\ref{thm:main_result} and for which we can compute the rate 
of convergence for the call option price and implied volatility explicitly.

Consider the \levy measure on $\R$ 
\begin{align}
    \nu\left( dx \right) = \begin{cases} 0, & x<-1 \text{ or } x=0 \\ 
        \abs{x}^{-\alpha-1} e^{-x} \abs{\ln{ \abs{x} }} dx, & x \geq -1 \text{ and } 
        x \ne 0,
    \end{cases}
    \label{eq:toy_measure}
\end{align}
with $\alpha \in (1, 2)$, and let $\left(X_{t} 
\right)_{t\geq 0}$ be the \levy process with triplet $\ltrip$, 
with $b$ chosen to satisfy the martingale condition 
\eqref{eq:b_martingale_condition}.

Then,
\begin{equation}
    \gamma^{*}\left( x \right) = \begin{cases}
        \frac{2}{\alpha} x^{-\alpha} \left( \ln{\left(\frac{1}{x} \right)} - 
        \frac{1}{\alpha} \right) + \frac{3}{\alpha^{2}}, & 0<x < 1 \\
            \frac{1}{\alpha^{2} }, & x = 1 \\
            \frac{1}{\alpha} x^{-\alpha} \left( \ln{x} + \frac{1}{\alpha}
            \right), & x > 1,
    \end{cases}
    \label{eq:toy_gamma}
\end{equation}
and so $\gamma^{*}$ is regularly varying of order $-\alpha$ 
at both $0$ and $\infty$. Assumption~\ref{assumption:1} follows directly from \eqref{eq:toy_gamma}, 
  which shows $\gamma^{*} \in RV_{-\alpha}^{0}$. For 
  Assumption~\ref{assumption:2}, the truncated second moment 
  $\int_{\abs{y} \leq x} y^{2} \nu^{*}(dy)$ is bounded uniformly in $x$ by the 
  finiteness of $\Expec{X_{1}^{2} e^{X_{1}}}$, while 
  $x^{2} \gamma^{*}(x) \sim x^{2-\alpha} \ln(1/x)/\alpha$ diverges as 
  $x \to \infty$ (since $2-\alpha > 0$), so the ratio is bounded for all 
  $x \geq x_{1}$ with $x_{1}$ large enough. The requirement \eqref{assumption:4} holds because the odd part of $\xi$ 
near the origin satisfies $\xi(x) - \xi(-x) = x^{-\alpha-1}(e^{-x} - e^{x})
\ln(1/x) = O(x^{-\alpha}\ln(1/x))$, so that 
$\int_{\eta}^{1} x\left(\xi(x) - \xi(-x)\right)dx$ converges as 
$\eta \downarrow 0$ (since $1-\alpha \in (-1,0)$), whence 
$\sup_{\eta} \abs{\mu_{\eta}} < \infty$. Finally, the expression
in \ref{assumption:3} becomes, for $0 < x < 1$,
\[
    2x^{1-\alpha} \ln{\left( \frac{1}{x} \right)},
\]
which is monotone decreasing on $(0,1)$, since differentiation gives 
$x^{-\alpha}\left((1-\alpha)\ln(1/x) - 1\right) < 0$ for all 
$x \in (0,1)$ when $\alpha > 1$.

It is worth noting that verifying \ref{assumption:3} is not strictly necessary here,
since its role in Theorem~\ref{thm:main_result} is to establish 
the form of the \levy density near the origin, which we have directly from 
\eqref{eq:toy_measure}.

All the hypotheses of Theorem~\ref{thm:main_result} are now 
satisfied, and we proceed to determine the behavior of $\beta_{t}$ as 
$t \rightarrow 0$. Again, $B_{t} = 1 / \beta_{t}$ can
be expressed as $t^{1/\alpha} \tilde{\ell}(1/t)$
for $t>0$, where $\tilde{\ell}$ is slowly varying at $\infty$. Further,
there exists $\Lambda>0$ such that
\[
    t \beta_{t}^{\alpha} \ell\left( \beta_{t} \right) \rightarrow \Lambda,
\]
as $t \rightarrow 0$, where $\ell$ is the slowly varying part of $\gamma^{*}$ 
near $0$. 

The functions $\ell$ and $\beta$ are defined up to asymptotic 
equivalence, so we use asymptotic versions that are simpler
to manipulate. From \eqref{eq:toy_gamma}, the slowly varying factor 
satisfies $\ell(y) \sim \frac{2}{\alpha} \ln y$ as $y \to \infty$, so 
Theorem~\ref{thm:beta_behavior} gives
\begin{align*}
    t \beta_{t}^{\alpha} \ln{ \beta_{t} } \rightarrow \Lambda,
\end{align*}
for some $\Lambda > 0$ as $t \to 0$,
which is equivalent to 
\begin{align}
    t \beta_{t}^{\alpha} \ln{ \beta_{t}^{\alpha} } \rightarrow 
    \alpha \Lambda.
    \label{eq:asymp}
\end{align}
Writing $f(x) = x \log{x}$, we rewrite \eqref{eq:asymp} as $t f\left(
\beta_{t}^{\alpha} \right) \rightarrow \alpha \Lambda$ for $t\rightarrow 0$.
Furthermore, we have that $\beta_{t}^{\alpha} \sim f^{-1}\left( 
\alpha \Lambda/t \right)$, as $t\rightarrow 0$ (the function $f$ has an inverse 
for $x$ 
large enough and $\beta_{t}^{\alpha}$ grows large as $t \rightarrow 0$). 
Inverting $f$ requires the Lambert $W$ function: if $f(x) = x\log x$, then 
$f^{-1}(y) = y / W(y)$ where $W$ denotes the principal branch of the 
Lambert function (we use upper-case $W$ here only; it will not be confused 
with Brownian motion in this section since the example has no Gaussian 
component). Since $W(y) \sim \log y$ as $y \to \infty$ (see e.g.\ 
\cite{lambertw}),
\[
    f^{-1}(y) \sim \frac{y}{\log y}, \qquad y \to \infty.
\]

Applying this to $\beta_t^{\alpha} \sim f^{-1}(\alpha\Lambda/t)$ gives
\[
    \beta_{t}^{\alpha} \sim \frac{\alpha \Lambda / t}{\log(\alpha \Lambda / t)},
\]
and therefore
\[
    B_{t} = \frac{1}{\beta_t} = \left( \frac{\alpha \Lambda t}{\log\!\left( 
    \frac{\alpha \Lambda}{t} \right) } \right)^{1/\alpha}.
\]
Ignoring the constants, the first-order rate of convergence is therefore
\[
    \left( \frac{t}{\log{\left( 1/t \right)}} \right)^{\frac{1}{\alpha}}.
\]
Note that we could get even more interesting behavior by introducing further
slowly varying function behavior near the origin (e.g., bounded oscillatory
terms like $\cos(\log(1/x))$).

This example illustrates the basic mechanism for generating new convergence
rates: once $\gamma^*(x)$ is specified near the origin with a slowly varying
modifier, the normalization $B_t$ and hence the first-order call price
asymptotics are determined by solving the relation
$t^{-1} \sim B_t^{-\alpha} \ell(1/B_t)$, which in turn can involve logarithmic,
log--log, oscillatory, or mixed slow variation. By choosing different slowly
varying factors $\ell$ (such as additional iterated logarithms or bounded
oscillatory terms), one can construct \levy models whose at-the-money
convergence rates differ from the classical $t^{1/\alpha}$ behavior in a
controlled way. The de Bruijn conjugate methodology then provides the precise
form of $B_t$ in each case.

\section{Conclusion and Future Work}
\label{sec:conclusion}

In this article we developed a regular–variation framework for the
short-maturity behavior of at-the-money call options in exponential \levy
models and used it to obtain new orders of convergence. Section~\ref{sec:doa_rv} 
established that the stable domain-of-attraction 
structure and the finiteness of the centering constant $\bar{\mu}$ are 
preserved under the share measure, and Section~\ref{sec:atm_asymptotics} 
used these facts to show that, under mild assumptions placing the driving \levy 
process in the domain of attraction of an $\alpha$-stable law with 
$\alpha\in(1,2)$, the at-the-money call price admits the expansion
\[
  \Expecnb{\left(S_t-S_0\right)_+}
    = \bigl(S_0 \Expecsnb{Z_+}\bigr) B_t + o(B_t),
    \qquad t\downarrow 0,
\]
where $B_t$ is the normalizing function from the stable limit and $Z$ is the
limiting stable random variable. All of the distributional input needed for 
this expansion can be read off from the regular variation of the \levy measure 
near zero. Theorem
\ref{thm:beta_behavior} links the small-time scale $B_t$ to the de Bruijn
conjugate of the slowly varying factor in the share-measure tail, and
Theorem~\ref{thm:main_result} then translates this into a general ATM
asymptotic valid for a broad class of pure-jump exponential \levy models.
When a nonzero Brownian component is present, Theorems~\ref{thm:Brownian}
and~\ref{thm:BrownianSimplified} show that the jump contribution is always
lower order, so that the leading $\sqrt{t}$ behavior of the ATM call price is
universal and driven entirely by the Gaussian part of the triplet.

Within this framework we constructed an explicit example for which the
first-order scale $B_t$ deviates from the classical $t^{1/\alpha}$ rate by a
logarithmic correction.  By specifying the share-measure tail
$\gamma^{*}(x)=x^{-\alpha}\ell(1/x)$ with a slowly varying factor $\ell$
involving $\log(1/x)$, we obtained a normalization of the form
\[
  B_t
    \sim \left(\frac{\alpha\Lambda t}{\log(\alpha\Lambda/t)}\right)^{1/\alpha},
\]
and hence an ATM convergence rate driven by $t^{1/\alpha}$ up to a precise
logarithmic penalty.  This example illustrates how stable domains of
attraction combined with regular variation and de Bruijn conjugates can be
used to design models with controlled, nonstandard short-maturity behavior
while still retaining analytic tractability.

\subsection*{Future Work}

Several extensions of the present work appear natural.

\begin{itemize}[leftmargin=1.5em,itemsep=0.25em]
  \item \emph{Higher-order and off-ATM expansions.}
  The domain-of-attraction framework used here is well suited to studying
  higher-order corrections and small-log-moneyness regimes beyond pure
  at-the-money options.  Extending the analysis to joint expansions in time
  and moneyness---for example, in the regime where log-moneyness scales with
  $B_t$---would connect the present results to the full short-maturity smile.

  \item \emph{Relaxing structural assumptions.}
  Assumptions such as the monotonicity of $x^{2}\xi_{S}^{*}(x)$ and
  the specific
  integrability condition \ref{assumption:2} are technically convenient but
  potentially stronger than necessary.  It would be of interest to replace
  these by weaker local regularity or oscillation conditions while still retaining
  control of $B_t$ and the ATM asymptotics, thereby enlarging the admissible
  class of \levy measures.
\end{itemize}

\appendix
\section{Regular Variation}
We give a brief overview of regular variation and direct the reader to
\cite{bingham} for a more comprehensive treatment. First, we introduce the notion of a slowly varying
function.

\begin{defn}
    A nonnegative, measurable function $\ell$ defined on some neighborhood $[M, \infty)$
        of infinity ($M>0$) is \emph{slowly varying at $\infty$}
        if for every $\lambda>0$, we have
        \begin{align}
            \lim_{x \rightarrow \infty} \frac{\ell\left(\lambda x \right)}{
            \ell(x)} = 1.
            \label{eq:sv_def}
        \end{align}
\end{defn}

The next theorem allows us to have a convenient functional form for any slowly varying function. 

\begin{thm}{(Representation Theorem)}
    A function $\ell$ is slowly varying if and only if it can be written in 
    the form 
    \[
        \ell(x) = h(x) \exp{\left( \int_{a}^{x} \varepsilon(u) \frac{du}{u}
    \right)},
    \]
    for $x \geq a$ where $a>0$, $h$ is measurable with $\lim_{x \rightarrow 
    \infty} h(x) = h \in (0,\infty)$, and $\varepsilon$
    is such that $\lim_{x \rightarrow \infty} \varepsilon(x) = 0$. 
\end{thm}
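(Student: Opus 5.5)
The plan is to pass to additive notation and follow Karamata's classical route: first prove the easy direction directly, then prove the Uniform Convergence Theorem (UCT) for slowly varying functions, and finally derive the representation by an averaging identity. Throughout we assume $\ell>0$ on $[M,\infty)$, which is implicit in \eqref{eq:sv_def} since the ratio there must make sense. We set
\[
k(y):=\log \ell(e^{y}), \qquad y\ge \log M,
\]
so that slow variation of $\ell$ is equivalent to $k(y+u)-k(y)\to 0$ as $y\to\infty$, for every fixed $u\in\R$.

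\emph{Sufficiency.} Suppose $\ell$ has the stated form, and fix $\lambda>0$. Then
\[
\frac{\ell(\lambda x)}{\ell(x)}=\frac{h(\lambda x)}{h(x)}\exp\Big(\int_x^{\lambda x}\varepsilon(u)\frac{du}{u}\Big).
\]
The factor $h(\lambda x)/h(x)$ tends to $h/h=1$. The exponent is bounded in absolute value by $\sup_{u\ge \min(x,\lambda x)}|\varepsilon(u)|\cdot|\log\lambda|$, which tends to $0$. Hence the ratio tends to $1$.

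\emph{Necessity, step 1 (UCT).} We show that $k(y+u)-k(y)\to0$ uniformly for $u\in[0,1]$. Suppose not. Then there are $\delta>0$, $y_n\to\infty$ and $u_n\in[0,1]$ with
\[
|k(y_n+u_n)-k(y_n)|\ge\delta .
\]
Consider the sets
\[
E_N=\{s\in[0,2]:\ |k(y+s)-k(y)|<\delta/2 \text{ for all } y\ge N\}.
\]
These are measurable because $k$ is measurable (this is where measurability of $\ell$ enters), they increase in $N$, and their union is $[0,2]$ by pointwise convergence. So for large $N$ we have $|E_N|>3/2$.

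By the analogous Egorov-type argument applied to the translates, for $n$ large both $E_N$ and $E_N-u_n$ intersected with $[0,1]$ have measure exceeding $1/2$. Hence there is $s\in[0,1]$ with both $s\in E_N$ and $s+u_n\in E_N$. Then
\[
|k(y_n+u_n)-k(y_n)|\le |k(y_n+s)-k(y_n)|+|k(y_n+u_n+s)-k(y_n+u_n)|<\delta ,
\]
using the base points $y=y_n$ and $y=y_n+u_n$ (both $\ge N$ for large $n$). This is a contradiction.

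I expect this to be the main obstacle: one must handle the measure-theoretic bookkeeping carefully, in the style of Steinhaus. A clean alternative is to apply Egorov's theorem on $[0,2]$ to the functions $s\mapsto k(y+s)-k(y)$ along the sequence $y_n$.

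\emph{Step 2 (local boundedness).} From the UCT, $|k(y+u)-k(y)|\le 1$ for all $u\in[0,1]$ and all $y\ge a_0$, for some $a_0$. Iterating this gives that $k$ is bounded on every compact subinterval of $[a_0,\infty)$, and in particular locally integrable there.

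\emph{Step 3 (averaging identity).} For $y\ge a:=a_0$ we use the identity
\[
k(y)=\int_y^{y+1}\big(k(y)-k(t)\big)\,dt+\int_a^{y}\big(k(t+1)-k(t)\big)\,dt+\int_a^{a+1}k(t)\,dt ,
\]
which follows by rearranging $\int_y^{y+1}k-\int_a^{a+1}k=\int_a^y\big(k(t+1)-k(t)\big)dt$.

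Set $\eta(y):=\int_y^{y+1}\big(k(y)-k(t)\big)dt$; this tends to $0$ by the UCT. Set $\varepsilon^{*}(t):=k(t+1)-k(t)$; this tends to $0$ pointwise, is measurable, and is locally bounded. Finally set $c:=\int_a^{a+1}k$.

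Exponentiating and substituting $x=e^{y}$, $u=e^{t}$ yields
\[
\ell(x)=h(x)\exp\Big(\int_{e^{a}}^{x}\varepsilon(u)\frac{du}{u}\Big),
\qquad h(x)=e^{c+\eta(\log x)}\to e^{c}\in(0,\infty),
\qquad \varepsilon(u)=\varepsilon^{*}(\log u)\to0 .
\]
Renaming $e^{a}$ as $a$ gives exactly the stated representation.
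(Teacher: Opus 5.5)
The paper gives no proof of this statement; it is quoted from \cite{bingham}. Your route is the standard one: sufficiency by a direct estimate, and necessity via the Uniform Convergence Theorem, local boundedness, and the averaging identity. Your sufficiency argument and Steps~2 and~3 are correct as written. The genuine gap is in Step~1, exactly where measurability of $\ell$ has to be used.

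Your $E_N$ is an intersection over the uncountable family $y\ge N$, so measurability of $k$ does not make it measurable. Indeed, $[0,2]\setminus E_N$ is the projection onto the $s$-axis of the measurable planar set $\{(y,s): y\ge N,\ 0\le s\le 2,\ |k(y+s)-k(y)|\ge\delta/2\}$, and projections of measurable sets need not be measurable. This really happens. By Sierpi\'nski's theorem there are Lebesgue-null sets $P,Q$ of diameter less than $\tfrac14$ with $P-Q$ non-measurable. Translate them so that $Q\subset[N,N+\tfrac14]$ and $P\subset[N+\tfrac12,N+\tfrac34]$, and let $k=c$ on $Q$, $k=-c$ on $P$, $k=0$ elsewhere, with $\delta/4\le c<\delta/2$. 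Then $k$ is measurable, $\ell=e^{k}$ is slowly varying, and $[0,2]\setminus E_N=P-Q$. So your claim that the $E_N$ are measurable because $k$ is, on which ``$|E_N|>3/2$'' rests, is false in general.

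The repair is to let $y$ range only over the countable set $\{y_m,\,y_m+u_m: m\ge N\}$. These are the only base points you ever use, and the sets become countable intersections of measurable sets. They still increase to $[0,2]$, because $k(z_m+s)-k(z_m)\to0$ for fixed $s$ along any sequence $z_m\to\infty$. Your Egorov alternative also works, provided you apply it along both sequences $y_n$ and $y_n+u_n$.

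There are two further slips in the same step. First, the displayed inequality is false as written. Its right-hand side controls $k(y_n+u_n)-k(y_n)$ only up to the term $k(y_n+u_n+s)-k(y_n+s)$, which nothing bounds. The correct split is
\[
|k(y_n+u_n)-k(y_n)|\le|k(y_n+u_n+s)-k(y_n)|+|k(y_n+u_n+s)-k(y_n+u_n)|,
\]
using $s+u_n\in E_N$ at the base point $y_n$ and $s\in E_N$ at the base point $y_n+u_n$. That is exactly what your choice of $s$ provides. Second, no Egorov-type argument is needed to find that $s$. Once $|[0,2]\setminus E_N|<\tfrac12$, translation invariance alone shows that $E_N\cap[0,1]$ and $\{s\in[0,1]: s+u_n\in E_N\}$ each have measure greater than $\tfrac12$. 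With these corrections the proof is complete.
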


\begin{prop}
    Let $\ell, \ell_{1},$ and $\ell_{2}$ be slowly varying functions. 
    Then:
    \begin{enumerate}[label=(\roman*)]
        \item The function $\ell^{\alpha}$ is slowly varying
            for every $\alpha \in \R$.
        \item The functions $\ell_{1} \ell_{2}$, $\ell_{1} + \ell_{2}$, 
            and (if $\ell_{2}(x) \rightarrow \infty$ as $x \rightarrow \infty$)
            $\ell_{1} \circ \ell_{2}$ are all slowly varying.
        \item For any $\alpha>0$, 
            \begin{align}
                x^{\alpha} \ell(x) \rightarrow \infty \mbox{\;\; and \;\;}
                x^{-\alpha} \ell(x) \rightarrow 0, \label{eq:sv_limits}
            \end{align}
            as $x\rightarrow \infty$. 
    \end{enumerate}
\end{prop}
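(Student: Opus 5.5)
All three parts follow from the definition \eqref{eq:sv_def} together with the Representation Theorem stated just above. Throughout I take $\ell,\ell_1,\ell_2$ to be strictly positive on a neighbourhood of infinity, which is implicit in \eqref{eq:sv_def} since it involves the ratio $\ell(\lambda x)/\ell(x)$.

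For (i), $\ell^{\alpha}(\lambda x)/\ell^{\alpha}(x) = \bigl(\ell(\lambda x)/\ell(x)\bigr)^{\alpha} \to 1$ by continuity of $y\mapsto y^{\alpha}$ at $y=1$. For the product in (ii), the ratio factors as $\frac{\ell_1(\lambda x)}{\ell_1(x)}\cdot\frac{\ell_2(\lambda x)}{\ell_2(x)}\to 1$. For the sum, write
\[
\frac{\ell_1(\lambda x)+\ell_2(\lambda x)}{\ell_1(x)+\ell_2(x)} = w_x\,\frac{\ell_1(\lambda x)}{\ell_1(x)} + (1-w_x)\,\frac{\ell_2(\lambda x)}{\ell_2(x)}, \qquad w_x=\frac{\ell_1(x)}{\ell_1(x)+\ell_2(x)}\in[0,1].
\]
This is a convex combination, so it lies between the two individual ratios, and both of those tend to $1$.

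The composition $\ell_1\circ\ell_2$ is the step I expect to be the main obstacle. Set $y=\ell_2(x)$, which tends to $\infty$, and $\lambda_x=\ell_2(\lambda x)/\ell_2(x)$. Then
\[
\frac{\ell_1(\ell_2(\lambda x))}{\ell_1(\ell_2(x))}=\frac{\ell_1(\lambda_x y)}{\ell_1(y)}.
\]
Here $\lambda_x\to 1$ but is not fixed, so pointwise slow variation of $\ell_1$ is not enough. What is needed is the uniform convergence $\ell_1(\mu y)/\ell_1(y)\to 1$ for $\mu\in[1/2,2]$.

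I would derive this uniformity from the Representation Theorem. Write $\ell_1(u)=h(u)\exp\bigl(\int_a^u\varepsilon(s)\,ds/s\bigr)$. Then
\[
\frac{\ell_1(\mu y)}{\ell_1(y)}=\frac{h(\mu y)}{h(y)}\exp\Bigl(\int_y^{\mu y}\varepsilon(s)\frac{ds}{s}\Bigr).
\]
For $\mu\in[1/2,2]$ the integral is bounded in absolute value by $\log 2\cdot\sup_{s\ge y/2}|\varepsilon(s)|\to 0$. The prefactor $h(\mu y)/h(y)\to h/h=1$ uniformly in $\mu$, since $\mu y\ge y/2\to\infty$. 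Because $\lambda_x\in[1/2,2]$ for all large $x$, the ratio above tends to $1$, which proves the composition claim.

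For (iii), fix $\alpha>0$ and use the representation again. Choose $b\ge a$ with $|\varepsilon(u)|\le \alpha/2$ for $u\ge b$, and bound $h$ above and below by positive constants for large $x$. This gives, for large $x$,
\[
c_1\,x^{-\alpha/2}\le \ell(x)\le c_2\,x^{\alpha/2},
\]
with positive constants $c_1,c_2$ that absorb the finite contribution $\int_a^b\varepsilon(u)\,du/u$. Hence $x^{\alpha}\ell(x)\ge c_1x^{\alpha/2}\to\infty$ and $x^{-\alpha}\ell(x)\le c_2x^{-\alpha/2}\to 0$, which is \eqref{eq:sv_limits}.
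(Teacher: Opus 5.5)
Your proof is correct in all three parts. The paper gives no proof of this proposition: it is listed in Appendix~A as background with a pointer to \cite{bingham}. The relevant comparison is therefore with the standard textbook argument (Proposition~1.3.6 there, which the paper itself cites in Section~2), and your proof essentially reproduces it. Powers, products and sums follow from direct ratio manipulations. The composition follows from uniform convergence of $\ell_1(\mu y)/\ell_1(y)$ for $\mu$ in a compact subset of $(0,\infty)$. The power bounds in (iii) follow from the representation theorem. The one difference is that you obtain the uniformity on $\mu\in[1/2,2]$ from the Representation Theorem rather than citing the Uniform Convergence Theorem. In the textbook development the representation is itself deduced from uniform convergence, so this is not more elementary in absolute terms. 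Within the paper's appendix, however, the Representation Theorem is taken as given, so your route is self-contained.
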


We are now in a position to introduce the concept of a regularly varying function.

\begin{defn}
    Let $f$ be a positive, measurable function. We say that $f$ is 
    \emph{regularly varying} at $\infty$ if any one of the following 
    conditions holds.
    \begin{enumerate}[label=(\roman*)]
        \item The limit
            \begin{align}
                \lim_{x \rightarrow \infty} \frac{f(\lambda x)}{f(x)} = 
                g(\lambda) \in (0, \infty),
                \label{eq:rv_limit}
            \end{align}
            exists for every $\lambda \in (0,\infty)$. 
        \item The limit \eqref{eq:rv_limit} exists for every $\lambda \in S$
            where $S$ is either a positive-measure subset of $(0,\infty)$
            or a dense subset of $(0,\infty)$. 
        \item The limit \eqref{eq:rv_limit} exists and equals $g(\lambda) = 
            \lambda^{\rho}$ for some $\rho \in \R$. 
        \item The function $f$ has representation
            \begin{align}
                f(x) = x^{\rho} \ell(x),
                \label{eq:rv_formula}
            \end{align}
            where $\rho \in \R$ and $\ell$ is slowly varying at $\infty$. 
    \end{enumerate}
    If any of the above hold, we write $f \in RV_{\rho}^{\infty}$ where 
    $\rho$ is the real found in \eqref{eq:rv_formula}. 
    We use the notation $\ell \in RV_{0}^{\infty}$ if the function $\ell$
    is slowly varying at $\infty$. 
    \end{defn}

The regular variation property of a function defines how $f$ behaves near 
$\infty$. Since only the asymptotic behavior as $x\to\infty$ (or $x\to 0$) matters,
the values of $f$ on any fixed compact interval may be chosen arbitrarily, and we 
assume $f$ is locally bounded on $[0,\infty)$. For our purposes, we will also be 
concerned with functions that are regularly varying at the origin. 
\begin{defn}
    A function $f$ is \emph{regularly varying at $0$} (from the right) with
    index $\rho$ if $f\left( \frac{1}{\cdot} \right) 
    \in RV_{-\rho}^{\infty}$. We denote this by writing $f \in RV_{\rho}^{0}$.
\end{defn}
Combining \eqref{eq:rv_formula} with the representation theorem for slowly
varying functions above gives the standard representation for a regularly varying 
function. That is, $f$ is regularly varying at $\infty$ if and only if it has 
representation
\begin{align}
    f(x) = x^{\rho} h(x) \exp{\left( \int_{a}^{x} \varepsilon(u) 
    \frac{du}{u} \right)}, \label{eq:rep_thm_rv}
\end{align}
for $x \geq a$ where $\rho \in \R$, $h\geq 0$ is a measurable function with a positive, finite
limit at $\infty$, and $\varepsilon$ is a measurable function such that 
$\lim_{x\rightarrow \infty} \varepsilon(x) = 0$.
The following corollary is clear.
\begin{cor}
    Let $f$ be regularly varying with index $\rho \in \R$ at $\infty$. 
    Then,
    \[
        \lim_{x\rightarrow \infty } f(x) = \begin{cases} \infty,& \mbox{if }
            \rho > 0 \\ 0,& \mbox{if } \rho <0. \end{cases}     \]
\end{cor}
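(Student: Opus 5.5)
The plan is to reduce the statement directly to part (iii) of the proposition on slowly varying functions stated above. By the characterization \eqref{eq:rv_formula} (equivalently, the representation \eqref{eq:rep_thm_rv}), any $f \in RV_{\rho}^{\infty}$ can be written as $f(x) = x^{\rho}\ell(x)$ for $x$ large, with $\ell$ slowly varying at $\infty$. The two cases $\rho>0$ and $\rho<0$ are then exactly the two limits in \eqref{eq:sv_limits}: taking $\alpha=\rho$ in the first case gives $x^{\rho}\ell(x)\to\infty$, and taking $\alpha=-\rho$ in the second gives $x^{\rho}\ell(x) = x^{-\abs{\rho}}\ell(x)\to 0$.

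To keep the argument self-contained rather than leaning on \eqref{eq:sv_limits} as a black box, I would also verify those limits from the representation theorem. Write $\ell(x) = h(x)\exp\bigl(\int_a^x \varepsilon(u)\,du/u\bigr)$ with $h(x)\to h\in(0,\infty)$ and $\varepsilon(u)\to 0$.

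1. Given $\delta>0$, choose $x_\delta\ge a$ such that $\abs{\varepsilon(u)}\le\delta$ for $u\ge x_\delta$ and $h/2\le h(x)\le 2h$ for $x\ge x_\delta$.
2. Split the integral at $x_\delta$. The piece over $[a,x_\delta]$ is a fixed finite constant $K_\delta$; this uses local integrability of $\varepsilon(u)/u$, which is part of the representation.
3. The remaining piece satisfies $\abs{\int_{x_\delta}^x \varepsilon(u)\,du/u}\le \delta\log(x/x_\delta)$.
4. Combining these gives constants $c_\delta, C_\delta>0$ with
\[
c_\delta x^{-\delta}\le \ell(x)\le C_\delta x^{\delta}, \qquad x\ge x_\delta .
\]
5. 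Choose $\delta=\abs{\rho}/2$. Then $f(x)\ge c_\delta x^{\rho/2}\to\infty$ when $\rho>0$, and $f(x)\le C_\delta x^{\rho/2}\to 0$ when $\rho<0$.

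There is no real obstacle here. The only point requiring care is step 2, where the contribution of the integral near the lower endpoint $a$ must be shown to be a finite constant; this is exactly what the representation theorem guarantees. The case $\rho=0$ is deliberately excluded, since a slowly varying function may tend to $0$, to $\infty$, or to neither.
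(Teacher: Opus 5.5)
Your proposal is correct and matches the paper's route. The paper gives no written proof and simply calls the corollary clear once $f(x)=x^{\rho}\ell(x)$ is combined with the representation \eqref{eq:rep_thm_rv} (equivalently with \eqref{eq:sv_limits}); your bounds $c_\delta x^{-\delta}\le \ell(x)\le C_\delta x^{\delta}$ with $\delta=\abs{\rho}/2$, together with positivity of $f$ when $\rho<0$, supply the routine details the paper leaves implicit.
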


There are important results concerning regularly varying functions that we
will need to exploit. These results can be found in \cite{bingham}, and we
list them here without proof.
\begin{thm}{(Potter's Theorem)}
    \label{thm:potter}
    \begin{enumerate}[label=(\roman*)]
        \item If $\ell$ is slowly varying, then for any given constants $A>1$
            and $\delta >0$ there exists $x_{0} = x_{0}(A, \delta)$ such that
            \[
                \frac{\ell(y)}{\ell(x)} \leq A \max{\left\{ \left( \frac{y}{x}
                \right)^{\delta}, \left( \frac{y}{x} \right)^{-\delta}
                \right\} },
            \]
            for all $x,y \geq x_{0}$. 
        \item If $f \in RV_{\rho}^{\infty}$, then for any given $A>1$ and 
            $\delta>0$ there exists $x_{0} = x_{0}(A,\delta)$ such that
            \[    
                \frac{f(y)}{f(x)} \leq A \max{\left\{ \left( \frac{y}{x}
                \right)^{\rho + \delta}, \left( \frac{y}{x} \right)^{\rho -
                \delta} \right\} },
            \]
            for all $x,y \geq x_{0}$. 
    \end{enumerate}
\end{thm}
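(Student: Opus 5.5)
The plan is to derive both parts of Potter's Theorem directly from the Representation Theorem stated above. Part (i) is the substantive case, and part (ii) then follows by factoring out the power.

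\textbf{Part (i).} Let $\ell$ be slowly varying. By the Representation Theorem we may write, for $x \geq a$,
\[
    \ell(x) = h(x)\exp\left(\int_a^x \varepsilon(u)\frac{du}{u}\right),
\]
where $h(x) \to c \in (0,\infty)$ and $\varepsilon(u) \to 0$. Fix $A>1$ and $\delta>0$. Since $h(x)\to c>0$, choose $x_0 \geq a$ so that $c A^{-1/2} \leq h(x) \leq c A^{1/2}$ for all $x \geq x_0$. This gives $h(y)/h(x) \leq A$ for all $x,y \geq x_0$. Enlarging $x_0$ if necessary, also require $\abs{\varepsilon(u)} \leq \delta$ for $u \geq x_0$. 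For $x,y \geq x_0$ the ratio then factors as
\[
    \frac{\ell(y)}{\ell(x)} = \frac{h(y)}{h(x)} \exp\left(\int_x^y \varepsilon(u)\frac{du}{u}\right).
\]
Because $\abs{\varepsilon} \leq \delta$ on the range of integration, whichever of $x$ and $y$ is larger, the integral is bounded in absolute value by $\delta\abs{\log(y/x)}$. Hence the exponential factor is at most
\[
    e^{\delta\abs{\log(y/x)}} = \max\left\{(y/x)^{\delta},(y/x)^{-\delta}\right\}.
\]
Multiplying by the bound $A$ on the $h$-ratio gives (i).

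\textbf{Part (ii).} Let $f \in RV_\rho^\infty$. By the characterization \eqref{eq:rv_formula}, $f(x)=x^\rho \ell(x)$ with $\ell$ slowly varying, so
\[
    \frac{f(y)}{f(x)} = (y/x)^\rho\,\frac{\ell(y)}{\ell(x)}.
\]
Apply (i) to $\ell$ with the same $A$ and $\delta$, obtaining $x_0(A,\delta)$. Then for $x,y \geq x_0$,
\[
    \frac{f(y)}{f(x)} \leq A (y/x)^\rho \max\left\{(y/x)^{\delta},(y/x)^{-\delta}\right\} = A\max\left\{(y/x)^{\rho+\delta},(y/x)^{\rho-\delta}\right\},
\]
which is (ii).

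\textbf{Main obstacle.} Given the Representation Theorem, which we take as an earlier stated result, there is no real obstacle; the only care needed is splitting the constant $A$ between the two fluctuating factors as above. The genuine depth lies in the Representation Theorem itself, which rests on the Uniform Convergence Theorem for slowly varying functions. If that theorem were not available, I would first prove uniform convergence of $\ell(\lambda x)/\ell(x)\to1$ on compact $\lambda$-sets (via Egorov's theorem and a measure-theoretic argument) and then build the representation from it.
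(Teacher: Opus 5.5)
Your proof is correct. Splitting $A$ as $cA^{-1/2}\le h\le cA^{1/2}$ and taking $|\varepsilon|\le\delta$ on $[x_0,\infty)$ gives (i), and factoring out $(y/x)^{\rho}$ gives (ii). The paper states Potter's Theorem without proof and cites \cite{bingham}, and your Representation-Theorem argument is essentially the standard proof given there (Theorem 1.5.6).
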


\begin{thm}{(Karamata's Theorem)}
    \label{thm:karamata}
    \begin{enumerate}[label=(\roman*)]
        \item (Direct Half) If $\ell$ is slowly varying, $\ell$ is locally 
        bounded on $[x_{0}, \infty)$ for some $x_0>0$, and $\alpha >-1$, 
            then
            \[
                \lim_{x \rightarrow \infty} \frac{ \int_{x_{0}}^{x} u^{\alpha} 
                \ell(u) du}{ x^{\alpha +1} \ell(x)} = \frac{1}{(\alpha + 1)}.
            \]
        \item (Converse Half) Let $f$ be positive and locally integrable
            on $[x_{0}, \infty)$.
                \begin{itemize}
                    \item If for some $\zeta > - (\rho + 1)$, 
                        \[
                            \lim_{x \rightarrow \infty} \frac{x^{\zeta + 1} 
                        f(x)}{ \int_{x_{0}}^{x} 
                            u^{\zeta} f(u) du} = \zeta + \rho + 1,
                        \]
                        then $f$ varies regularly with index $\rho$.
                    \item If for some $\zeta < - \left( \rho + 1 \right)$
                        we have
                        \[
                            \lim_{x\rightarrow \infty} \frac{x^{\zeta + 1} f(x)}{ \int_{x}^{\infty} 
                            u^{\zeta} f(u) du} = -\left( \zeta +
                            \rho + 1\right),
                        \]
                        then $f$ varies regularly with index $\rho$. 
                \end{itemize}
    \end{enumerate}
\end{thm}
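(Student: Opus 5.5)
The statement is the classical Karamata theorem, which the paper cites from \cite{bingham} without proof. I would prove the two halves separately: the direct half by rescaling plus dominated convergence with a Potter bound, the converse half by a logarithmic-derivative argument feeding into the Representation Theorem.

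\emph{Direct half.} Fix $\alpha>-1$ and substitute $u=xs$:
\[
\frac{\int_{x_0}^{x}u^{\alpha}\ell(u)\,du}{x^{\alpha+1}\ell(x)}=\int_{x_0/x}^{1}s^{\alpha}\,\frac{\ell(xs)}{\ell(x)}\,ds .
\]
\begin{itemize}
\item Choose $\delta\in(0,\alpha+1)$ and $A>1$, and let $x_1=x_1(A,\delta)\ge x_0$ be the threshold from Theorem~\ref{thm:potter}(i).
\item Split the $u$-integral at $x_1$. The piece $\int_{x_0}^{x_1}u^{\alpha}\ell(u)\,du$ is a finite constant, because $\ell$ is locally bounded on $[x_0,\infty)$.
\item Divided by $x^{\alpha+1}\ell(x)$, this piece tends to $0$: $x^{\alpha+1}\ell(x)\to\infty$ by \eqref{eq:sv_limits}, since $\alpha+1>0$.
\item On the remaining range $s\in[x_1/x,1]$, both $xs$ and $x$ exceed $x_1$, and $s\le 1$. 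Potter's bound then gives $\ell(xs)/\ell(x)\le A s^{-\delta}$.
\item So the integrand is dominated by $A s^{\alpha-\delta}\1_{(0,1]}(s)$, which is integrable because $\alpha-\delta>-1$.
\item Pointwise, $\ell(xs)/\ell(x)\to1$ for each fixed $s\in(0,1]$ by \eqref{eq:sv_def}.
\item Dominated convergence then yields the limit $\int_0^1 s^{\alpha}\,ds=1/(\alpha+1)$.
\end{itemize}
The only delicate point in the whole proof is this domination step. The choice $\delta<\alpha+1$ is exactly what makes the Potter majorant integrable near $s=0$.

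\emph{Converse half, first bullet.} Set $g(x)=\int_{x_0}^{x}u^{\zeta}f(u)\,du$.
\begin{itemize}
\item $g$ is positive and absolutely continuous, with $g'(x)=x^{\zeta}f(x)$ almost everywhere.
\item The hypothesis can be written as $x\,g'(x)/g(x)=c+\varepsilon(x)$, where $c:=\zeta+\rho+1>0$ and $\varepsilon(x)\to0$.
\item Since $\log g$ is absolutely continuous on $[x_1,\infty)$ for any $x_1>x_0$, integrating gives
\[
g(x)=g(x_1)\,x_1^{-c}\,x^{c}\exp\Big(\int_{x_1}^{x}\varepsilon(u)\,\frac{du}{u}\Big).
\]
By the Representation Theorem (form \eqref{eq:rep_thm_rv}), $g\in RV^{\infty}_{c}$.
\item Solving the hypothesis for $f$ gives $f(x)=x^{-\zeta-1}g(x)\big(c+\varepsilon(x)\big)$.
\item Here $c+\varepsilon(x)\to c\in(0,\infty)$. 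Hence $f$ is regularly varying with index $-\zeta-1+c=\rho$.
\end{itemize}

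\emph{Converse half, second bullet.} Take $g(x)=\int_x^\infty u^{\zeta}f(u)\,du$; I assume this is finite, as is implicit in the hypothesis.
\begin{itemize}
\item Now $g'=-x^{\zeta}f$, so $x\,g'(x)/g(x)=\zeta+\rho+1+o(1)$, with $\zeta+\rho+1<0$.
\item The same integration shows $g\in RV^{\infty}_{\zeta+\rho+1}$.
\item Writing $f(x)=x^{-\zeta-1}g(x)\big(-(\zeta+\rho+1)+o(1)\big)$, where the bracket tends to a positive constant, gives $f\in RV^{\infty}_{\rho}$.
\end{itemize}
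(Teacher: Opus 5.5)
Your proof is correct. The paper states Karamata's theorem without proof and defers to \cite{bingham}, and your argument is essentially the standard textbook one: Potter bounds plus dominated convergence after the substitution $u=xs$ for the direct half, and for the converse half the identity $(\log g)'=x^{\zeta}f/g$ a.e.\ fed into the Representation Theorem. One thing to make explicit: define $\varepsilon(x):=x^{\zeta+1}f(x)/g(x)-c$ (resp.\ with the obvious sign change in the second bullet) directly from $f$. Then the a.e.\ identity $g'=x^{\zeta}f$ is used only inside the integral, while $f(x)=x^{-\zeta-1}g(x)\,(c+\varepsilon(x))$ holds at every $x$, which is what pointwise regular variation of $f$ requires.
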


\begin{thm}{(de Bruijn Conjugate)}
    If $f \in RV_{\alpha}^{\infty}$ with $\alpha>0$, then there exists 
    $g \in RV_{1/\alpha}^{\infty}$ such that
    \[
        \lim_{x \rightarrow \infty} \frac{f\left( g\left( x \right) \right)}{x} = 
        \lim_{x \rightarrow \infty} \frac{g\left( f\left( x \right) \right)}{x} = 1,
    \]
    i.e. $f$ and $g$ are asymptotically invertible.
\end{thm}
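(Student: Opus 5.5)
The plan is to reduce to the case where $f$ is continuous and strictly increasing, take $g$ to be its exact inverse, and then transfer the conclusions back to $f$ using asymptotic equivalence.

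\textbf{Step 1: a monotone representative.} By the representation \eqref{eq:rep_thm_rv}, write $f(x)=x^{\alpha}h(x)\exp\bigl(\int_a^x\varepsilon(u)\,du/u\bigr)$ with $h(x)\to h\in(0,\infty)$ and $\varepsilon(u)\to 0$. Replacing $h(x)$ by the constant $h$ gives $f_1\sim f$, where $f_1$ is continuous and still lies in $RV_\alpha^\infty$. Next set
\[
  f_2(x):=\alpha\int_a^x f_1(u)\,\frac{du}{u}.
\]
Since $f_1(u)/u = u^{\alpha-1}\ell_1(u)$ with $\ell_1$ slowly varying and $\alpha-1>-1$, the direct half of Karamata's Theorem~\ref{thm:karamata} gives $f_2(x)\sim f_1(x)\sim f(x)$. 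Moreover $f_2$ is continuous and strictly increasing, because $f_1>0$. By the Corollary following \eqref{eq:rep_thm_rv} (index $\alpha>0$), $f_2(x)\to\infty$.

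\textbf{Step 2: define $g$.} Let $g:=f_2^{-1}$ on $[f_2(a'),\infty)$ for some large $a'$, extended arbitrarily below. Then $g$ is continuous, strictly increasing, satisfies $g(x)\to\infty$, and $f_2(g(x))=x$.

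\textbf{Step 3: $g\in RV_{1/\alpha}^\infty$.} Fix $\lambda>0$ and put $y=g(x)$ and $r(x)=g(\lambda x)/g(x)$, so that $f_2(r(x)y)/f_2(y)=\lambda$. Take any sequence $x_n\to\infty$ along which $r(x_n)\to r\in[0,\infty]$.
\begin{itemize}
  \item If $r<\infty$ and $r>0$, the uniform convergence theorem for $f_2\in RV_\alpha$ on compact subsets of $(0,\infty)$ gives $r^\alpha=\lambda$.
  \item If $r=\infty$ (or $r=0$), monotonicity of $f_2$ gives, for every fixed $M$ and large $n$, $f_2(r(x_n)y_n)/f_2(y_n)\ge f_2(My_n)/f_2(y_n)\to M^\alpha$. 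Letting $M\to\infty$ contradicts the ratio being identically $\lambda$; the case $r=0$ is symmetric.
\end{itemize}
Hence every subsequential limit equals $\lambda^{1/\alpha}$, which proves $g\in RV_{1/\alpha}^\infty$.

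\textbf{Step 4: the two asymptotic inverse relations.} First, $f(g(x))/x=f(g(x))/f_2(g(x))\to 1$, because $f\sim f_2$ and $g(x)\to\infty$. Second, write $f(x)=c_x f_2(x)$ with $c_x\to 1$. For any $\epsilon>0$ and large $x$, monotonicity of $g$ gives
\[
  g\bigl((1-\epsilon)f_2(x)\bigr)\le g(f(x))\le g\bigl((1+\epsilon)f_2(x)\bigr).
\]
Dividing by $g(f_2(x))=x$ and using Step 3, the bounds tend to $(1\mp\epsilon)^{1/\alpha}$. Letting $\epsilon\downarrow0$ yields $g(f(x))/x\to1$.

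\textbf{Main obstacle.} The delicate point is Step 3 together with the final sandwich. There one must pass from pointwise regular variation to control along moving arguments such as $r(x)y$ or $c_xf_2(x)$. I would handle this exactly as above: uniform convergence on compact sets controls bounded ratios, and monotonicity of $f_2$ and $g$ (built in by Step 1) excludes escape of $r(x)$ to $0$ or $\infty$. This is the reason for constructing the strictly increasing representative $f_2$ first, rather than inverting $f$ directly.
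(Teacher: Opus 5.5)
The paper gives no proof of this result. It is listed in the appendix without proof and attributed to \cite{bingham}, where it is the asymptotic-inverse theorem; the de Bruijn conjugate proper is the companion statement for slowly varying functions. Measured against that textbook result, your proof is correct and follows essentially the standard route. You pass to a monotone asymptotic equivalent of $f$, invert it, and use monotonicity together with regular variation at fixed ratios to control quantities evaluated at moving arguments.

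The one real difference is the choice of equivalent and inverse. The textbook version takes $g$ to be the generalized inverse $\inf\{y\ge a: f(y)>x\}$ of $f$ itself. That requires controlling the non-monotone $f$ near this point, either through the uniform convergence theorem or through $f(x)\sim\sup_{a\le s\le x}f(s)$ for $\alpha>0$. You instead build a continuous, strictly increasing $f_2\sim f$ from the representation theorem and the direct half of Karamata's theorem, both of which the paper states. As a result, $f_2\circ g$ and $g\circ f_2$ are exact identities (eventually), and Steps~3--4 reduce to clean sandwich arguments. The only cost is that your $g$ inverts $f_2$ rather than $f$, which the statement does not require.

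One minor remark. The uniform convergence theorem you invoke in the case $0<r<\infty$ of Step~3 is not among the tools recorded in the paper, but you do not need it. Since $f_2$ is increasing, $r(x_n)\to r\in(0,\infty)$ gives, for every small $\eta>0$ and all large $n$,
\[
\frac{f_2\bigl((r-\eta)y_n\bigr)}{f_2(y_n)}\le\lambda\le\frac{f_2\bigl((r+\eta)y_n\bigr)}{f_2(y_n)},
\]
hence $(r-\eta)^{\alpha}\le\lambda\le(r+\eta)^{\alpha}$. This exactly parallels your treatment of $r\in\{0,\infty\}$ and makes the argument self-contained relative to the appendix.
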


A natural question is to know whether or not the derivative of a regularly varying function 
is itself regularly varying. The following result shows that monotonicity of 
the derivative in a neighborhood of $\infty$ suffices.

\begin{thm}{(Monotone Density Theorem)}
    \label{thm:monotone_density}
    Let $H(x) = \int_{0}^{x} h(u)\,du$ where $h:[0,\infty) \to \R$ is measurable. 
    If $H(x) \sim c x^{\rho} \ell(x)$ as $x \rightarrow \infty$ where $c, 
    \rho \in \R$ and $\ell \in 
    RV_{0}^{\infty}$ and if there exists $x_{0}>0$ such that 
    $h(x)$ is monotone on $(x_{0}, \infty)$, then 
    \[
        \lim_{x \rightarrow \infty} \frac{ h(x)}{ c \rho x^{\rho-1} \ell(x) } = 1.
    \]
\end{thm}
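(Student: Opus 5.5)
The statement to prove is the last one stated, the Monotone Density Theorem: $H(x)=\int_0^x h \sim c x^{\rho}\ell(x)$ with $h$ eventually monotone implies $h(x)\sim c\rho x^{\rho-1}\ell(x)$.

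The plan is the classical sandwich argument, which compares increments of $H$ over intervals $[x,\lambda x]$ with the monotone integrand $h$ at the endpoints. The degenerate cases are handled at the end, so first assume $c\rho\neq 0$. Take $h$ nondecreasing on $(x_0,\infty)$; the nonincreasing case is symmetric, with the inequalities reversed. Fix $\lambda>1$ and take $x>x_0$. Monotonicity gives
\[
(\lambda-1)x\,h(x)\le H(\lambda x)-H(x)\le(\lambda-1)x\,h(\lambda x).
\]
Divide by $x^{\rho}\ell(x)$ and use $H(x)\sim c x^\rho\ell(x)$ together with slow variation, $\ell(\lambda x)/\ell(x)\to 1$. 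This yields
\[
\frac{H(\lambda x)-H(x)}{x^{\rho}\ell(x)}\to c(\lambda^{\rho}-1).
\]
From the left inequality,
\[
\limsup_{x\to\infty}\frac{h(x)}{x^{\rho-1}\ell(x)}\le c\,\frac{\lambda^{\rho}-1}{\lambda-1}.
\]

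For the matching lower bound, apply the same inequality on $[x/\lambda,x]$:
\[
(1-1/\lambda)x\,h(x)\ge H(x)-H(x/\lambda).
\]
Normalizing in the same way gives
\[
\liminf_{x\to\infty}\frac{h(x)}{x^{\rho-1}\ell(x)}\ge c\,\frac{1-\lambda^{-\rho}}{1-\lambda^{-1}}.
\]
Letting $\lambda\downarrow 1$, both bounds converge to $c\rho$, so $h(x)/(x^{\rho-1}\ell(x))\to c\rho$. Since $c\rho\ne0$, this is the claim.

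The main obstacle is bookkeeping around signs and the normalization. The difference $H(\lambda x)-H(x)$ must be controlled using only $H\sim cx^\rho\ell$. This is fine when $c\ne0$, because then the error is $o(x^\rho\ell(x))$ and still $o$ after dividing; the regular-variation property of $x^\rho\ell(x)$ handles the $\lambda$-scaling. When $c\rho=0$ the conclusion as stated, a ratio tending to $1$, does not literally make sense. I would interpret it as $h(x)=o(x^{\rho-1}\ell(x))$, which the same sandwich proves: the bounds $c(\lambda^\rho-1)/(\lambda-1)$ tend to $0$, provided $c\neq0$. The case $c=0$ with $\rho\ne0$ needs the separate $o$-form of the hypothesis. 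That is the standard convention (Bingham, Thm.~1.7.2), and I would state it as a remark.

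The case of $h$ nonincreasing is identical with the two endpoint evaluations swapped. For the version at $0$ used in the paper, apply the theorem to $x\mapsto 1/x$.
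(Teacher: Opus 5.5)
Your argument is correct and is essentially the standard sandwich proof of the monotone density theorem in Bingham--Goldie--Teugels (Theorem~1.7.2), which the paper cites and lists without proof. Your treatment of the degenerate case $c\rho=0$, reading $\sim$ as $o(\cdot)$, matches that reference's convention; note only that the $o$-form of the hypothesis is needed whenever $c=0$, not just when $\rho\neq 0$.
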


It is important to note that Theorem \ref{thm:monotone_density} does 
\emph{not} imply that $H$ or $h$ is regularly varying, as the quantities $c$ 
and $c \rho$ are potentially negative; however, if $c>0$, then $H \in 
RV_{\rho}^{\infty}$, but we still do not necessarily have $h \in 
RV_{\rho - 1}^{\infty}$.

Additionally, though we use Theorem \ref{thm:monotone_density}, there are ways to 
weaken the hypotheses of our main result to not require eventual monotonicity. For this we 
refer the reader to the O-version of Theorem \ref{thm:monotone_density} (which requires 
bounded increase and bounded decrease assumptions) in \cite{bingham} Chapter 2.10  
(e.g., Prop 2.10.3). 

Most of the corresponding definitions and theorems extend straightforwardly 
to regular variation at $0$ by considering $f(1/x)$ as $x\to \infty$, and we omit
these routine adaptations.
\section{Proofs}

\begin{proof}[Proof of Proposition \ref{lem:preserve_reg_var}]
    Let $\nu$ be regularly varying of index $\alpha$ at $0$, i.e.,
    \[
        \lim_{r \rightarrow 0} \frac{\nu\left( \abs{x} >rt \right)}{\nu\left( 
            \abs{x}>r \right)} = t^{-\alpha},
    \]
    for all $t>0$, or equivalently
    \[
        \lim_{r \rightarrow 0} \frac{\int_{rt}^{\infty} \xi_{S}(x)dx}{ 
            \int_{r}^{\infty} \xi_{S}(x)dx} = t^{-\alpha}.
    \]
    First, we show that for any fixed $\delta >0$ and $t>0$ we also have
    \[
        \lim_{r \rightarrow 0} \frac{\int_{rt}^{\delta} \xi_{S}(x)dx}{
            \int_{r}^{\delta} \xi_{S}(x)dx} = t^{-\alpha}.         
    \]
    Recall that $\gamma\left( x \right) = \int_{x}^{\infty} \xi_{S}(z) dz 
    \rightarrow \infty$ as $x \rightarrow 0$ by the representation theorem
    for regularly varying functions (see \cite{bingham}). 
    So for fixed $\delta>0$, 
    \begin{align}
        \lim_{r \rightarrow 0} \frac{\int_{rt}^{\delta} \xi_{S}(x) dx}{
            \int_{r}^{\delta} \xi_{S}(x)dx} 
        &= \lim_{r \rightarrow 0} \frac{\int_{rt}^{\infty} \xi_{S}(x) dx - 
        \int_{\delta}^{\infty} \xi_{S}(x) dx}{
            \int_{r}^{\infty} \xi_{S}(x) dx - \int_{\delta}^{\infty}
            \xi_{S}(x) dx} 
        = \lim_{r \rightarrow 0} \frac{\int_{rt}^{\infty} \xi_{S}(x) dx 
        \left( 1 - \frac{\int_{\delta}^{\infty} \xi_{S}(x) dx}{\int_{rt}^{
        \infty} \xi_{S}(x) dx} \right)}{ \int_{r}^{\infty} \xi_{S}(x) dx 
        \left(1 - \frac{\int_{\delta}^{\infty} \xi_{S}(x) dx}{\int_{r}^{\infty} 
        \xi_{S}(x) dx} \right)} 
        = t^{-\alpha}.
    \end{align}
    Continuing, we let $0 < \varepsilon <1 $ and choose $\delta >0$ such that 
    \[
        1-\varepsilon \leq e^{x} \leq 1+\varepsilon,
    \]
    for all $x \in (-\delta,\delta)$. Now, recalling 
    \eqref{eq:levy_density_symmetrization_star}, we have
    \begin{align}
        \frac{\nu^{*}\left( \abs{x}> rt \right)}{\nu^{*}\left( \abs{x}>r 
    \right)} &= \frac{\int_{\abs{x}>rt}e^{x} \xi(x) dx}{ \int_{\abs{x}>r} 
    e^{x} \xi(x)dx} 
        = \frac{\int_{rt}^{\infty}\left( e^{x} \xi(x) + e^{-x} \xi(-x) 
    \right)dx}{ \int_{r}^{\infty} \left( e^{x} \xi(x)  + e^{-x} \xi(-x) 
\right) dx}
    = \frac{\int_{rt}^{\delta} \xi^{*}_{S}(x) dx + C_{\delta}^{*}}{ 
        \int_{r}^{\delta} \xi^{*}_{S}(x) dx + C_{\delta}^{*}},
            \label{eq:fraction_int}
    \end{align}
    where $C_{\delta}^{*} = \int_{\delta}^{\infty} \xi^{*}_{S}(x) dx < \infty$. 
    For $0<x<\delta$, define $\gdelsof{x} =  \int_{x}^{\delta} 
    \xi^{*}_{S}(y) dy$ and $\gdelof{x} = \int_{x}^{\delta} \xi_{S}(y) dy$ 
    as the truncated tail functions. Note that $\gdelof{x} \rightarrow 
    \infty$ as $x \rightarrow 0$ by the representation theorem
    for regularly varying functions. We estimate 
    \eqref{eq:fraction_int} from above as
    \begin{align}
        \frac{\int_{rt}^{\delta} \xi^{*}_{S}(x) dx + C_{\delta}^{*}}{ 
            \int_{r}^{\delta} \xi^{*}_{S}(x) dx + C_{\delta}^{*}}
        &\leq \frac{\int_{rt}^{\delta} e^{x} \xi_{S}(x) dx + 
        C_{\delta}^{*}}{\int_{r}^{\delta} e^{-x} \xi_{S}(x) dx +
        C_{\delta}^{*}} \notag \\
        &\leq \frac{(1 + \varepsilon) \int_{rt}^{\delta} \xi_{S}(x) dx + 
        C_{\delta}^{*}}{ (1-\varepsilon) \int_{r}^{\delta} \xi_{S}(x) dx + 
        C_{\delta}^{*}}
        = \frac{\gdelof{rt}}{\gdelof{r}} \left( \frac{1+\varepsilon +
                \frac{C_{\delta}^{*}}{\gdelof{rt}}}{1-\varepsilon + 
                    \frac{C_{\delta}^{*}}{\gdelof{r}}} \right),
                    \label{eq:upper_bound_rv}
    \end{align}
    and from \eqref{eq:upper_bound_rv} obtain
    \[
        \limsup_{r \rightarrow 0} \frac{\nu^{*}\left( \abs{x}> rt \right)}{
            \nu^{*}\left(\abs{x}>r \right)} \leq t^{-\alpha} \frac{1+
            \varepsilon}{1-\varepsilon}.
    \]
    We estimate \eqref{eq:fraction_int} from below 
    \begin{align}
        \frac{\int_{rt}^{\delta} \xi^{*}_{S}(x) dx + C_{\delta}^{*}}{ 
            \int_{r}^{\delta} \xi^{*}_{S}(x) dx + C_{\delta}^{*}}
        &\geq \frac{\int_{rt}^{\delta} e^{-x} \xi_{S}(x) dx + C_{\delta}^{*}}{
            \int_{r}^{\delta} e^x \xi_{S}(x) dx + C_{\delta}^{*}}
        \geq \frac{\gdelof{rt}}{\gdelof{r}} \left( \frac{1-\varepsilon +
                \frac{C_{\delta}^{*}}{\gdelof{rt}}}{1+\varepsilon + 
                    \frac{C_{\delta}^{*}}{\gdelof{r}}} \right), 
                    \label{eq:lower_bound_rv}
    \end{align}
    and from \eqref{eq:lower_bound_rv} obtain
    \[
        \liminf_{r \rightarrow 0}  \frac{\nu^{*}\left( \abs{x}> rt \right)}{
            \nu^{*}\left( \abs{x}>r \right)} \geq t^{-\alpha} \frac{1-
            \varepsilon}{1+\varepsilon}. 
    \]
    Thus, 
    \[
        t^{-\alpha} \frac{1-\varepsilon}{1+\varepsilon} \leq \liminf_{r 
            \rightarrow 0} \frac{\nu^{*}\left( \abs{x}> rt \right)}{
                \nu^{*}\left( \abs{x}>r 
        \right)} \leq \limsup_{r \rightarrow 0} \frac{\nu^{*}\left( \abs{x}> rt 
        \right)}{\nu^{*}\left( \abs{x}>r \right)} \leq t^{-\alpha} \frac{1+
        \varepsilon}{1-\varepsilon},
    \]
    and letting $\varepsilon \rightarrow 0$ gives the result. 

    Next, we need to show the existence of the limits 
    \[
        \lim_{x \rightarrow 0} \frac{\gamma^{*}_{\pm} (x)}{ 
            \gamma^{*}(x)},
    \]
    given the existence of the limits
    \[
        \lim_{x \rightarrow 0} \frac{\gamma_{\pm} (x)}{ 
            \gamma(x)}.
    \]
    Assume that $\lim_{x \rightarrow 0} \gamma_{+}(x) / \gamma(x) = p_+$ 
    and $\lim_{x \rightarrow 0} \gamma_{-}(x) / \gamma(x) = p_-$. 
    In fact, we show that the limits above are equal, i.e.
    \begin{align}
        \lim_{x \rightarrow 0} \frac{\gamma^{*}_{+} (x)}{ 
            \gamma^{*}(x)} = \lim_{x \rightarrow 0} 
            \frac{\gamma_{+} (x)}{ 
            \gamma(x)} = p_+,
        \label{eq:p_lim_equality}
    \end{align}
    and
    \begin{align}
        \lim_{x \rightarrow 0} \frac{\gamma^{*}_{-} (x)}{ 
            \gamma^{*}(x)} = \lim_{x \rightarrow 0} 
            \frac{\gamma_{-} (x)}{ 
            \gamma(x)} = p_-.
        \label{eq:q_lim_equality}
    \end{align}
    By an argument similar to the one developed in the 
    beginning of this proof, the limits can also be computed via
    \begin{align*}
        \lim_{x \rightarrow 0} &\frac{\nu\left( x< y < \delta \right)}{
            \nu\left( x < \abs{y} < \delta \right)} = p_+, 
    \end{align*}
    and
    \begin{align*}
        \lim_{x \rightarrow 0} &\frac{\nu\left( -\delta < y < -x \right)}{
            \nu\left( x < \abs{y} < \delta \right)} = p_-,
    \end{align*}
    where $\delta>0$. We now show that the same limits hold for 
    $\gamma^{*}$. 
    
    First, let $0<p_+<1$ (note that $p_+ + p_-=1$). 
    In this case, $0<p_-<1$, and so 
    \[
        \lim_{x \rightarrow 0} \gamma_{+}^{*}(x) = 
        \lim_{x \rightarrow 0} \gamma_{-}^{*}(x) = \infty,
    \]
    since $\lim_{x \rightarrow 0} \gamma^{*}(x) = \infty$. Again, let 
    $\varepsilon>0$ and choose $\delta>0$ such that $1-\varepsilon \leq e^{x} 
    \leq 1+\varepsilon$ for all $x\in (-\delta,\delta)$. Continuing     
    \begin{align}
        \frac{\nu^{*}\left( y > x \right)}{\nu^{*}\left( \abs{y}>x \right)}
            &= \frac{\int_{x}^{\infty} \xi^{*}(y) dy}{ 
                \int_{x}^{\infty} \xi_{S}^{*}(y) dy}
            = \frac{\int_{x}^{\delta} e^{y} \xi(y) dy + 
                \int_{\delta}^{\infty} e^{y} \xi(y) dy}{
                \int_{x}^{\delta} \xi_{S}^{*}(y) dy + 
                \int_{\delta}^{\infty} \xi_{S}^{*}(y) dy} 
            = \frac{\int_{x}^{\delta} e^{y} \xi(y) dy + 
                D_{\delta}^{*}}{ \int_{x}^{\delta} \xi_{S}^{*}(y) dy + 
                C_{\delta}^{*}}, \label{eq:p_mass_ratio} 
    \end{align}
    where $D_{\delta}^{*} = \int_{\delta}^{\infty} e^{y} \xi(y) dy$ and, again,
    $C_{\delta}^{*} = \int_{\delta}^{\infty} \xi_{S}^{*}(y) dy$ are constants 
    depending only on $\delta$. Estimating \eqref{eq:p_mass_ratio} above 
    by
    \begin{align}
        \frac{\int_{x}^{\delta} e^{y} \xi(y) dy + 
                D_{\delta}^{*}}{ \int_{x}^{\delta} \xi_{S}^{*}(y) dy + 
                C_{\delta}^{*}} 
            &\leq \frac{\int_{x}^{\delta} e^{y} \xi(y) dy + 
            D_{\delta}^{*}}{ \int_{x}^{\delta} e^{-y} \xi_{S}(y) dy + 
            C_{\delta}^{*}} \notag \\
            &\leq \frac{(1+ \varepsilon) \int_{x}^{\delta} \xi(y) dy + 
            D_{\delta}^{*}}{ (1-\varepsilon) \int_{x}^{\delta} \xi_{S}(y) dy + 
            C_{\delta}^{*}}
            = \frac{\int_{x}^{\delta} \xi(y) dy}{
                \int_{x}^{\delta} \xi_{S}(y) dy} 
                \left( \frac{1 + \varepsilon + \frac{D_{\delta}^{*}}{
                    \int_{x}^{\delta} \xi(y)dy }}{ 1 - \varepsilon +
                        \frac{C_{\delta}^{*}}{\int_{x}^{\delta} \xi_{S}(y) dy}} 
                        \right), \label{eq:p_mass_limsup}
    \end{align}
    which, taking the limsup, gives
    \[
        \limsup_{x \rightarrow 0} \frac{\nu^{*}(y>x)}{\nu^{*}(\abs{y}>x)} \leq
        p_+ \left( \frac{1+\varepsilon}{1-\varepsilon} \right),
    \]
    (since $\lim_{x\rightarrow 0} \int_{x}^{\delta} \xi(y) dy = \lim_{x 
    \rightarrow 0} \int_{x}^{\delta} \xi_{S}(y) dy = \infty$ also for every 
    $\delta>x$). We estimate \eqref{eq:p_mass_ratio} from below as
    \begin{align}
    \frac{\int_{x}^{\delta} e^{y} \xi(y) dy + 
            D_{\delta}^{*}}{ \int_{x}^{\delta} \xi_{S}^{*}(y) dy + 
            C_{\delta}^{*}} 
        &\geq \frac{\int_{x}^{\delta} e^{y} \xi(y) dy + 
        D_{\delta}^{*}}{ \int_{x}^{\delta} e^{y} \xi_{S}(y) dy + 
        C_{\delta}^{*}} \notag \\
        &\geq \frac{ \int_{x}^{\delta} \xi(y) dy + 
        D_{\delta}^{*}}{ (1+\varepsilon) \int_{x}^{\delta} \xi_{S}(y) dy + 
        C_{\delta}^{*}} 
        = \frac{\int_{x}^{\delta} \xi(y) dy}{
            \int_{x}^{\delta} \xi_{S}(y) dy} 
            \left( \frac{1 + \frac{D_{\delta}^{*}}{
                \int_{x}^{\delta} \xi(y)dy }}{ 1 + \varepsilon +
                    \frac{C_{\delta}^{*}}{\int_{x}^{\delta} \xi_{S}(y) dy}} \right),
                    \label{eq:p_mass_liminf}
    \end{align}
    and taking the liminf gives
    \[
        \liminf_{x \rightarrow 0} \frac{\nu^{*}(y>x)}{\nu^{*}(\abs{y}>x)} \geq
        p_+ \left( \frac{1}{1+\varepsilon} \right).
    \]
    Combining these estimates, we obtain
    \[
        p_+ \left( \frac{1}{1+\varepsilon} \right) \leq
        \liminf_{x \rightarrow 0} \frac{\nu^{*}(y>x)}{\nu^{*}(\abs{y}>x)} \leq
        \limsup_{x \rightarrow 0} \frac{\nu^{*}(y>x)}{\nu^{*}(\abs{y}>x)} \leq
        p_+ \left( \frac{1+\varepsilon}{1-\varepsilon} \right),
    \]
    and letting $\varepsilon \rightarrow 0$ gives the first limit. 
    An identical argument shows that 
    \[
        \lim_{x \rightarrow 0} \frac{\nu^{*}(y<-x)}{\nu^{*}(\abs{y}>x)} = p_-.
    \]

    We now deal with the remaining cases, i.e. $p_+=0$ and $p_+=1$, and assume without
    loss of generality that $p_+=0$. This implies that
    \begin{align*}
        \lim_{x \rightarrow 0} &\frac{\gamma_{+}(x)}{\gamma(x)} = 0,
    \end{align*}
    and
    \begin{align*}
        \lim_{x \rightarrow 0} &\frac{\gamma_{-}(x)}{ \gamma(x)} = 1,
    \end{align*}
    which in turn implies that $\lim_{x \rightarrow 0} \gamma_{-}(x) = \infty$. 
    There are two distinct possibilities for $\gamma_{+}$, either
    \begin{align}
        \lim_{x \rightarrow 0} \gamma_{+}(x) < \infty,
        \label{eq:const_case} 
    \end{align}
    or
    \begin{align}
        \lim_{x \rightarrow 0} \gamma_{+}(x) &= \infty .
        \label{eq:infinite_case}
    \end{align}
    If \eqref{eq:infinite_case} holds true, then both tails have infinite mass
    and a proof similar to the one for $0<p_+<1$ gives the result. In the case
    \eqref{eq:const_case},
    \begin{align*}
        \lim_{x\rightarrow 0} \frac{\gamma_{+}^{*}(x)}{\gamma^{*}(x)} = 0,
    \end{align*}
    since $\gamma^{*}(x) \rightarrow \infty$ as $x \rightarrow 0$. Indeed, we
    estimate
    \begin{align*}
        \gamma^{*}(x) &= \int_{x}^{\infty} \xi_{S}^{*}(y) dy
        = \int_{x}^{\delta} \xi_{S}^{*}(y) dy + C_{\delta}^{*} 
        \geq \int_{x}^{\delta} e^{-y} \xi_{S}(y) dy + C_{\delta}^{*}
        \geq (1-\varepsilon) \int_{x}^{\delta} \xi_{S}(y) dy + C_{\delta}^{*} 
        \rightarrow \infty,
    \end{align*}
    as $x \rightarrow 0$. Thus, 
    \begin{align*}
        \lim_{x \rightarrow 0} \frac{\gamma_{-}^{*}(x)}{\gamma^{*}(x)} =
        \lim_{x \rightarrow 0} \frac{\gamma^{*}(x) - \gamma_{+}^{*}(x)}{
            \gamma^{*}(x)} = 1.
    \end{align*}
\end{proof}

\begin{proof}[Proof of Proposition \ref{lem:mu_bar_finite}]
    First, we assume $\bar{\mu}_{0} < \infty$. Observe that 
        $\sup_{0<\eta \leq 1} \abs{\mu_{\eta}} < \infty$ implies that 
    \[
        \sup \mu_{\eta} < \infty \mbox{\;\;\; and \;\;\;} \inf \mu_{\eta} 
        > -\infty. 
    \]
    Thus, 
    \begin{align*}
        -\infty &< \inf_{0< \eta \leq 1} \left( b - \int_{\eta < \abs{y} 
    \leq 1}  y \nu(dy) \right)
            = b + \inf_{0< \eta \leq 1} \left( - \int_{\eta < \abs{y} \leq 1} 
            y \nu(dy) \right)
            = b - \sup_{0< \eta \leq 1}  \int_{\eta < \abs{y} \leq 1} y 
            \nu(dy), 
    \end{align*}
    which implies $\sup_{0< \eta \leq 1} \int_{\eta < \abs{y} \leq 1} y 
    \nu(dy) < \infty$. A similar argument implies that 
    \[
        \inf_{0< \eta \leq 1} \int_{\eta < \abs{y} \leq 1} y \nu(dy) > -\infty,
    \]
    and so we know
    \[
        \sup_{0< \eta \leq 1} \abs{\int_{\eta < \abs{y} \leq 1} y \nu(dy)} < 
        \infty.
    \]
    For fixed $0< \eta \leq 1$, 
    \begin{align*}
        \abs{\mu^{*}_{\eta} } &\leq \abs{b} + \abs{ \int_{\eta < \abs{y} \leq 
    1} y \nu^{*} (dy) } \\
        &= \abs{b} + \abs{ \int_{\eta < \abs{y} \leq 1} y e^{y} \nu(dy) }
        \leq \abs{b} + e \abs{ \int_{\eta < \abs{y} \leq 1} y \nu(dy) }
        \leq \abs{b} + e \sup_{0< \eta \leq 1} \abs{ \int_{\eta < \abs{y} 
    \leq 1} y \nu(dy) } < \infty . 
    \end{align*}
    Taking the supremum gives the implication.

    The converse can be proven by noting that
    \begin{align*}
        \abs{\int_{ \eta < \abs{y} \leq 1} y e^{y} \nu\left( dy \right) }
                &\geq \frac{1}{e} \abs{\int_{ \eta < \abs{y} \leq 1} y \nu\left( 
                dy \right) },
    \end{align*}
    and taking the supremum since the left-hand side is bounded when 
    $\eta \rightarrow 0$ by our assumption. 
\end{proof}

\begin{proof}[Proof of Theorem \ref{thm:beta_behavior}]
    We know
    \begin{align}
        \beta_{t} X_{t} \Rightarrow Z, 
    \end{align}
    as $t \rightarrow 0$, under $\Ps$ where $Z$ is an $\alpha$-stable random 
    variable. Recall that~\cite{grabchak} and~\cite{bingham} both give the 
    representation 
    \begin{align}
        \varphi_{Z}(u) = \exp{\bigl( -c_{\alpha} \abs{u}^{\alpha} 
        \left(1 - i(p_{+}-p_{-})
        \sgn{(u) \tan{(\pi \alpha/2)}} \right) \bigr)}, \label{eq:convergence} 
    \end{align}
    where $p_{\pm}$ are defined in \eqref{eq:tail_limits}.
    We need further information concerning the slowly varying
    part of $\gamma^{*}$. So, we examine the characteristic functions of both 
    $\beta_{t} X_{t}$ and $Z$, which we know must be equal when $t \rightarrow 0$
    by \eqref{eq:conv_without_at}. 
    
    First, we will need to determine the behavior of $\xi^{*}_{S}(x)$. 
    To this end, we will use the Monotone Density Theorem (Theorem 
    \ref{thm:monotone_density}). 
    We know $\gamma^{*}(x) = \int_{x}^{\infty} \xi^{*}_{S}(y) dy = x^{-\alpha} 
    \ell(1/x)$, for $x>0$. Hence,
    \begin{align}
        x^{-\alpha} \ell(x) &= \gamma^{*}\left( 1/x \right)
        = \int_{1/x}^{\infty} \xi^{*}_{S}(y) dy
        = - \int_{x}^{0} \xi^{*}_{S}\left( 1/u \right)\frac{du}{u^{2}} 
        = \int_{0}^{x} \xi^{*}_{S} \left( 1/u \right) \frac{du}{u^{2}} 
        = \int_{0}^{x} s(u) du, \label{eq:density_trans}
    \end{align}
    where $s(u):= \xi_{S}^{*}\left( 1/u \right) u^{-2}$.
    Note that $x^{2} \xi^{*}_{S}(x) = s(1/x)$ and $x^2 \xi^{*}_{S}$ 
    is monotone for $x$ close enough to $0$ (i.e. $s(u)$ is monotone
    for $u$ large enough). Now, we use the Monotone Density Theorem to get 
    that $s(x) \sim \alpha x^{\alpha-1} \ell(x)$, as $x\rightarrow \infty$. 
    That is, for $y$ close to $0$, we have $s(1/y) = y^{2} \xi^{*}_{S}(y) \sim 
    \alpha y^{-\alpha+1} \ell(1/y)$ which implies $\xi^{*}_{S}(y) 
    \sim \alpha y^{-\alpha-1} \ell(1/y)$ for $y$ positive and near $0$. 
    Now, the exponent of the characteristic function of $\beta_{t} X_{t}$ is
    given by
    \begin{align}
        \log{\left(\Expecsnb{e^{iu\beta_{t} X_{t}}}\right)} &= t 
        \int_{-\infty}^{\infty} \left(  \exp{(iu\beta_{t}y)} - 1 - i u 
        \beta_{t} y\right) \xi^{*}(y) dy.
        \label{eq:integral} 
    \end{align}
    Fix any $0<\varepsilon<1$ and let $w_0 (\varepsilon) >0$ be such that 
    \begin{align}
        (1-\varepsilon) \alpha x^{-\alpha-1} \ell(1/x) \leq \xi^{*}_{S}(x) 
        \leq (1+ \varepsilon) \alpha x^{-\alpha-1} \ell(1/x),
        \label{eq:xi_equiv}
    \end{align}
    for all $0<x<w_0$. The real part of \eqref{eq:integral} 
    converges to $-c_{\alpha} \abs{u}^{\alpha}$ where $c_{\alpha}>0$ as $t \rightarrow 0$
    (again see~\cite{grabchak},~\cite{bingham}, and~\cite{maller_mason}).
    First, we need to rewrite \eqref{eq:integral} in a nicer form. Let 
    $g(u,y) = \exp{(iu\beta_{t}y)} - 1 - i u \beta_{t} y$ and rewrite
    \begin{align}
        \log{\left(\Expecsnb{e^{iu\beta_{t} X_{t}}}\right)} &= t 
        \int_{-\infty}^{\infty} g(u,y) \xi^{*}(y) dy
        = t  \int_{0}^{\infty} \left( g(u,y) \xi^{*}(y) + 
        \overline{g(u,y)} \xi^{*}(-y) 
        \right)dy.
        \label{eq:xi_int}
    \end{align}
    Note that the real part of $g$ is $\Re{(g(u,y))} = \cos{(u \beta_{t} y)} - 1$, 
    and the real part of \eqref{eq:xi_int} is
    \begin{align}
        \Re{\left( \log{\left(\Expecsnb{e^{iu\beta_{t} X_{t}}} \right)} \right) }
        &= t \int_{0}^{\infty}  \Re{g(u,y)} \left( \xi^{*}(y) + \xi^{*}(-y) 
        \right) dy \notag \\
        &= t \int_{0}^{\infty} \left( \cos{(u \beta_{t} y)} - 1 \right) 
        \xi^{*}_{S}(y) dy \notag \\
        &= \frac{t}{\abs{u} \beta_{t}} \int_{0}^{\infty} \left( 
        \cos{(\sgn{(u)}w)} - 1 \right) \xi^{*}_{S}\left( 
        \frac{w}{\abs{u} \beta_{t}} \right) dw \notag \\
        &= \frac{t}{\abs{u} \beta_{t}} \int_{0}^{\infty} \left( \cos{(w)} - 1 
        \right) \xi^{*}_{S}\left( \frac{w}{\abs{u} \beta_{t}} \right) dw.
        \label{eq:real_part_int}
    \end{align}
    Continuing, we break \eqref{eq:real_part_int} into two parts by writing 
    for $L>0$
    \begin{align}
        \Re{\left( \log{\left(\Expecsnb{e^{iu\beta_{t} X_{t}}}\right)} \right) } 
        &= \frac{t}{\abs{u} \beta_{t}} \int_{0}^{\infty} \left( \cos{(w)} - 1 
        \right) \xi^{*}_{S}\left( \frac{w}{\abs{u} \beta_{t}} \right) 
        dw  \notag \\
        &= \frac{t}{\abs{u} \beta_{t}} \int_{0}^{\infty} \left( \cos{(w)} - 1 
        \right) \xi^{*}_{S} \left( \frac{w}{\abs{u} \beta_{t}} \right) 
        \Ind{\frac{w}{\abs{u} \beta_{t}} \leq L} dw \label{eq:ch_fn_p1} \\
        &\;\;\;\; + \frac{t}{\abs{u} \beta_{t}} \int_{0}^{\infty} \left( \cos{(w)} - 1 
        \right) \xi^{*}_{S} \left( \frac{w}{\abs{u} \beta_{t}} \right) 
        \Ind{\frac{w}{\abs{u} \beta_{t}} > L} dw. \label{eq:ch_fn_p2} 
    \end{align}
    It is easy to see that \eqref{eq:ch_fn_p2} goes to $0$, as $t \rightarrow 0$, 
    since
    \begin{align*}
        \frac{t}{\abs{u} \beta_{t}} &\abs{\int_{0}^{\infty}  \left( \cos{(w)} 
        - 1 \right) \xi^{*}_{S} \left( \frac{w}{\abs{u} \beta_{t}} \right)  
        \Ind{\frac{w}{\abs{u} \beta_{t}} > L} dw} \\
        &= t \abs{\int_{0}^{\infty} \left( \cos{(u\beta_{t} z)} - 1 \right) 
        \xi^{*}_{S}(z) \Ind{z \geq L} dz }
        \leq 2t \int_{L}^{\infty} \xi^{*}_{S}(z) dz.
    \end{align*}
    Now, we estimate \eqref{eq:ch_fn_p1} to get the desired result. 
    First, we show a preliminary result of use later. Namely, we 
    show that there exists $M>0$ with $1/M \leq w_0$ such that
    \begin{align}
        \int_{0}^{\infty} \frac{\left( \cos{(w)} -1 \right)}{w^{1+\alpha}} 
        \frac{\ell\left( \frac{\abs{u} \beta_{t}}{w} \right)}{\ell\left( 
            \beta_{t} \right)} \Ind{\frac{w}{\abs{u} \beta_{t}} \leq 
            \frac{1}{M}} dw \rightarrow \int_{0}^{\infty} 
        \frac{\left( \cos{(w)} -1 \right)}{w^{1+\alpha}}dw < \infty, 
        \label{eq:int_w_SV}
    \end{align}
    as $t \rightarrow 0$. Recall that $\ell$ slowly varying implies that 
    $\ell(\lambda x)/\ell(x) \rightarrow 1$, for any $\lambda >0$, as 
    $x\rightarrow \infty$. Letting $x=\beta_{t}$ and $\lambda = \abs{u}/w$ 
    implies that 
    \[
        \frac{\ell\left( \frac{\abs{u} \beta_{t}}{w} \right)}{\ell(\beta_{t})} 
        \rightarrow 1,
    \]
    as $t \rightarrow 0$. We recall the Potter bounds from Theorem 
    \ref{thm:potter} for $\ell$, that is for any 
    $A>1$ and $\delta>0$ there exists $M>0$ such that for $x,y \geq M$,
    \begin{align*}
        \frac{\ell(y)}{\ell(x)} \leq A \left( \left( \frac{y}{x} \right)^{
        \delta} \vee \left( \frac{y}{x} \right)^{-\delta}  \right) .
    \end{align*}
    Choose $A=2$ and $\delta >0$ such that $\alpha+1 \pm \delta \in (2,3)$ and 
    let $M_0 >0$ be the $M$ in the above statement for the given $A$ and 
    $\delta$. Note that if $w \leq \abs{u} \beta_{t} w_0$ and $w_{0} M_0 \leq 1$, 
    then $\abs{u}\beta_t / w \geq M_0$, so the Potter bounds apply.
    Since the integrand is non-positive, we bound its absolute value 
    using the Potter bounds. We have
    \begin{align}
        (1 - \cos{w}) w^{-\alpha-1} \frac{\ell\left( \frac{\abs{u} 
        \beta_{t}}{w} \right)}{\ell(\beta_{t})} 
        \Ind{\frac{w}{\abs{u} \beta_{t}}\leq w_0} 
        &\leq (1 - \cos{w}) w^{-\alpha-1} 
        A \max \left\{ \left( \frac{\abs{u}}{w} \right)^{\delta}, \left( 
            \frac{\abs{u}}{w} \right)^{-\delta} \right\} \notag \\
        &\leq A (1 - \cos{w}) \left( \abs{u}^{\delta} 
        w^{-\alpha-1-\delta} + \abs{u}^{-\delta} w^{-\alpha-1+\delta} \right).
        \label{eq:max_terms}
    \end{align}
    In \eqref{eq:max_terms} the first term is integrable on $[0,\infty)$ since 
    $\alpha+1 + \delta < 3$, and so is the second term since $\alpha+1-\delta>2$.
    Applying Lebesgue's 
    Dominated Convergence Theorem gives \eqref{eq:int_w_SV}. If $w_{0} M_0 > 1$, 
    then we apply similar arguments on the set $\left\{ \abs{u} \beta_{t} 
    \geq w M_0 \right\}$. In either case, there exists $M>0$ such that $1/M \leq w_0$ and 
    \eqref{eq:int_w_SV} holds. Equation \eqref{eq:ch_fn_p2} converging to $0$ 
    as $t\rightarrow 0$ implies (as stated before) that \eqref{eq:ch_fn_p1} 
    converges to $-c_{\alpha} \abs{u}^{\alpha}$. Since $\cos{w} - 1 \leq 0$
    for all $w$, the integrand is non-positive; replacing $\xi^{*}_{S}$ by its
    upper bound from \eqref{eq:xi_equiv} therefore yields a \emph{lower} bound
    on the integral:
    \begin{align}
        \frac{t}{\abs{u} \beta_{t}} \int_{0}^{\infty} &\left( \cos{w} - 1 
        \right) \xi^{*}_{S}\left( \frac{w}{\abs{u} \beta_{t}} \right) 
        \Ind{\frac{w}{\abs{u} \beta_{t}} \leq L} dw  \notag \\
        &\geq \frac{t (1+\varepsilon) \alpha}{\abs{u} \beta_{t}} \int_{0}^{\infty} 
        \left( \cos{w} - 1 \right) \left( \frac{w}{\abs{u} \beta_{t}} 
        \right)^{-\alpha-1} \ell\left( \frac{\abs{u} \beta_{t}}{w} \right) 
        \Ind{\frac{w}{\abs{u} \beta_{t}} \leq L} dw  \notag \\
        &=  (1+\varepsilon) \alpha \abs{u}^{\alpha} t \beta_{t}^{\alpha} 
        \int_{0}^{\infty} \frac{\left( \cos{w} - 1 \right)}{w^{\alpha+1}} 
        \ell\left( \frac{\abs{u} \beta_{t}}{w} \right) 
        \Ind{\frac{w}{\abs{u} \beta_{t}} \leq L} dw  \notag \\
        &=  (1+\varepsilon) \alpha \abs{u}^{\alpha} t \beta_{t}^{\alpha} 
        \ell(\beta_{t}) \int_{0}^{\infty}  \frac{\left( \cos{w} - 1 
        \right)}{w^{\alpha+1}} \frac{\ell\left( \frac{\abs{u} \beta_{t}}{w} 
        \right)}{\ell(\beta_{t})} \Ind{\frac{w}{\abs{u} \beta_{t}} \leq 
        L} dw. \label{eq:lower_bound}
    \end{align}
    Similarly, replacing $\xi^{*}_{S}$ by its lower bound gives the upper bound
    \begin{align}
        \frac{t}{\abs{u} \beta_{t}} \int_{0}^{\infty} &\left( \cos{w} - 1 
        \right) \xi^{*}_{S}\left( \frac{w}{\abs{u} \beta_{t}} \right) 
        \Ind{\frac{w}{\abs{u} \beta_{t}} \leq L} dw  \notag \\
        &\leq  (1-\varepsilon) \alpha \abs{u}^{\alpha} t \beta_{t}^{\alpha} \ell(
        \beta_{t}) 
        \int_{0}^{\infty} \frac{\left( \cos{w} - 1 \right)}{w^{\alpha+1}} 
        \frac{\ell\left( \frac{\abs{u} \beta_{t}}{w} \right)}{\ell(\beta_{t})} 
        \Ind{\frac{w}{\abs{u} \beta_{t}} \leq L} dw.
        \label{eq:upper_bound}
    \end{align}
    Dividing each side of both \eqref{eq:lower_bound} and \eqref{eq:upper_bound} 
    by $t\beta_{t}^{\alpha} \ell(\beta_{t})$, letting $-\varsigma 
    = \int_{0}^{\infty} 
    (\cos{(w)}-1)w^{-\alpha-1}dw$, and letting $t\rightarrow 0$ gives 
    \[
        -(1+\varepsilon) \alpha \varsigma \abs{u}^{\alpha} \leq  \frac{-c_{\alpha} 
        \abs{u}^{ \alpha}}{ \lim_{t\rightarrow 0}t \beta_{t}^{\alpha} \ell(\beta_{t})} 
        \leq -(1- \varepsilon) \alpha \varsigma \abs{u}^{\alpha}.
    \]
    Letting $\varepsilon \rightarrow 0$ implies that  
    \begin{align}
        \lim_{t \rightarrow 0} t \beta_{t}^{\alpha} \ell(\beta_{t}) = \Lambda \in 
        (0,\infty),    \label{eq:coefficients}
    \end{align}
    where $\Lambda = c_{\alpha} / (\alpha \varsigma)$ (note that $\varsigma >0$, for $1<\alpha<2$).
    In fact, e.g. see \cite{sato}, 
    \[
        \varsigma = \frac{\pi}{2 \Gamma\left( 1 + \alpha \right) \sin{\left( 
            \frac{\pi \alpha}{2} \right)}},
    \]
    where $\Gamma$ here is Euler's Gamma function.
\end{proof}

\begin{lem}
 \label{lem:prev_assump_5} 
Let the share--measure L\'evy measure $\nu^{*}$ satisfy
\[
\int_{|x|>1} |x|\,\nu^*(dx) = \int_{|x|>1} |x| e^{x}\,\nu(dx) < \infty.
\]
Then, for every $R_{0}>0$,
\[
\int_{|y|>R_{0}} \gamma^{*}(y)\,dy < \infty,
\]
where $\gamma^{*}(y):=\nu^{*}\big(\{|x|>y\}\big)$.
\end{lem}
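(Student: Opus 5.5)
The plan is to interpret $\gamma^{*}$ as a function of $|y|$, swap the order of integration with Tonelli, and bound the result by a first absolute moment of $\nu^{*}$ away from the origin. Since $\gamma^{*}(y)=\nu^{*}(\{|x|>y\})$ is only meaningful as a tail for $y>0$, I read the statement with $\gamma^{*}$ extended evenly, $\gamma^{*}(y)=\gamma^{*}(|y|)$. The integral over $\{|y|>R_{0}\}$ is then $2\int_{R_{0}}^{\infty}\gamma^{*}(y)\,dy$, so it suffices to show that this one-sided integral is finite.

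Writing $\gamma^{*}(y)=\int \1_{\{|x|>y\}}\,\nu^{*}(dx)$ and using that the integrand is nonnegative, Tonelli's theorem gives
\[
\int_{R_{0}}^{\infty}\gamma^{*}(y)\,dy=\int_{\R}\int_{R_{0}}^{\infty}\1_{\{y<|x|\}}\,dy\,\nu^{*}(dx)=\int_{\{|x|>R_{0}\}}\bigl(|x|-R_{0}\bigr)\,\nu^{*}(dx)\le \int_{\{|x|>R_{0}\}}|x|\,\nu^{*}(dx).
\]

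I then split the last integral into the regions $\{|x|>1\}$ and $\{R_{0}<|x|\le 1\}$; the second region is empty if $R_{0}\ge 1$.
\begin{itemize}
\item On $\{|x|>1\}$, the integral is exactly $\int_{|x|>1}|x|e^{x}\,\nu(dx)$, which is finite by hypothesis. This is condition \eqref{eq:ex_moment}.
\item On $\{R_{0}<|x|\le 1\}$, I bound $|x|e^{x}\le e$. The integral is then at most $e\,\nu(\{R_{0}<|x|\le1\})\le e R_{0}^{-2}\int_{|x|\le1}x^{2}\,\nu(dx)$, which is finite because $\nu$ is a L\'evy measure, i.e. $\int(1\wedge x^{2})\,\nu(dx)<\infty$.
\end{itemize}
Combining the two pieces yields the claim for every $R_{0}>0$.

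There is no real analytic obstacle here. The only delicate point is the interpretation of $\gamma^{*}$ at negative arguments. If one instead read $\gamma^{*}(y)$ literally as $\nu^{*}(\{|x|>y\})$ for $y<0$, it would be infinite, so the even-extension (or one-sided) reading is necessary and is the one I would state explicitly.
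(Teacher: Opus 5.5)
Your proof is correct and follows essentially the same route as the paper. Both reduce by symmetry to $2\int_{R_0}^\infty\gamma^*(y)\,dy$ and apply Tonelli to obtain $\int_{|x|>R_0}(|x|-R_0)\,\nu^*(dx)$. Both then split at $|x|=1$, using the moment condition on $\{|x|>1\}$ and the bound $|x|e^x\le e$ on the inner region. Your one addition is justifying $\nu(\{R_0<|x|\le 1\})<\infty$ via $\int(1\wedge x^2)\,\nu(dx)<\infty$, a step the paper only asserts.
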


\begin{proof}
By symmetry of the absolute value, it suffices to show 
$2\int_{R_{0}}^{\infty}\gamma^{*}(y)\,dy<\infty$. 
Tonelli’s theorem gives, for any $R_{0}>0$,
\[
    \int_{R_{0}}^{\infty}\gamma^{*}(y)\,dy
    = \int_{R_{0}}^{\infty} \int_{\{|z|>y\}} \nu^{*}(dz)\,dy
    = \int_{\{|z|>R_{0}\}} (|z|-R_{0})\,\nu^{*}(dz).
\]
Since $\nu^{*}(dz)=e^{z}\nu(dz)$,
\[
    \int_{\{|z|>R_{0}\}} (|z|-R_{0})\,\nu^{*}(dz)
    \le \int_{\{|z|>R_{0}\}} |z|\,e^{z}\,\nu(dz)
    = I_{1} + I_{2},
\]
with
\[
    I_{1}:=\int_{\{|z|>1\}} |z|\,e^{z}\,\nu(dz), 
    \qquad 
    I_{2}:=\int_{R_{0}<|z|\le 1} |z|\,e^{z}\,\nu(dz).
\]
By \eqref{eq:ex_moment}, $I_{1}<\infty$. 
For $I_{2}$, note that 
\[
    I_{2}\le e \int_{R_{0}<|z|\le 1} \nu(dz)<\infty.
\]
Therefore $\int_{R_{0}}^{\infty}\gamma^{*}(y)\,dy<\infty$, and the claim 
follows.
\end{proof}

One important quantity in the estimation of call option prices in exponential 
\levy models is the expression $\Prob{X_{t} \geq y}$ where $y \geq 0$ and 
$\left( X_{t} \right)_{t \geq 0}$ is a \levy process on $\R$ 
with triplet $\ltrip$. For estimating these tail quantities,
we need concentration inequalities similar to those found in \cite{bhp} and \cite{houdre}. 
The next lemma provides the estimation we need. For now, we will allow $\gamma$ to be 
any nonnegative function, but our future application of this lemma will use the definition 
of $\gamma$ in \eqref{eq:levy_tail}; however, for the proof of the main theorem 
we do require the function $V$ defined in \eqref{eq:var_def} and $\mu_{y}$ defined in 
\eqref{eq:mu_epsilon}.

\begin{lem}
    \label{lem:concentration}
    Let $\gamma: \R^{+} \rightarrow \R^{+}$ be such that for all $R>0$,
    \begin{enumerate}[label=(\roman*)]
        \item $\int_{\abs{x}>R} \nu\left( dx \right) \leq \gamma\left( R 
            \right)$, 
        \item and there exists $C>0$ independent of $R$ such that $V(R) \leq 
            CR^{2} \gamma\left( R \right)$.
    \end{enumerate}
    Then, for every $y>0$ and for every $0 < t < y / 4 \left( \mu_{y/4} 
    \right)_{+}$ (with $y/0 = \infty$), 
    \[
        \Prob{ X_t \geq y } \leq \left( 1 + Ce^2 \right) t \gamma\left( 
        \frac{y}{4} \right).
    \]
\end{lem}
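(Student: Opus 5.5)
We split the jumps at level $R = y/4$ and bound the small-jump and big-jump contributions separately. This is the standard truncation argument in the style of \cite{houdre_marchal} and \cite{houdre}.

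\textbf{Decomposition.} We write $X_t = Y_t + Z_t$, where $Z_t = \int_{|x|>R} x\, N(t,dx)$ is the compound Poisson part with jumps larger than $R$. The remaining part $Y_t$ has triplet $(\mu_R, 0, \nu|_{\{|x|\le R\}})$ in the sense that
\[
Y_t = t\mu_R + \int_{|x|\le R} x\,\tilde N(t,dx),
\]
because the compensator of the jumps in $\{R<|x|\le 1\}$ is absorbed into the drift. This uses the definition \eqref{eq:mu_epsilon} and also covers the case $R>1$. On the event that no jump of size $>R$ occurs before $t$, we have $X_t = Y_t$. Hence
\[
\P(X_t \ge y) \le \P(N(t,\{|x|>R\})\ge 1) + \P(Y_t \ge y) \le t\,\nu(|x|>R) + \P(Y_t\ge y).
\]
By (i), the first term is at most $t\gamma(y/4)$.

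\textbf{Small-jump part.} Set $M_t = Y_t - t\mu_R$. The hypothesis $t < y/(4(\mu_{y/4})_+)$ gives $t\mu_R < y/4$, so
\[
\P(Y_t\ge y)\le \P(M_t \ge 3y/4).
\]
$M_t$ is a centered martingale with jumps bounded by $R$ and variance $tV(R)$, so I will bound its tail by an exponential Chernoff/Bennett estimate. For $\lambda>0$,
\[
\log \E e^{\lambda M_t} = t\int_{|x|\le R}(e^{\lambda x}-1-\lambda x)\nu(dx) \le t\,\frac{e^{\lambda R}-1-\lambda R}{R^2}\,V(R),
\]
using that $(e^{u}-1-u)/u^2$ is increasing. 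Taking the specific choice $\lambda = 2/R$ gives $e^{\lambda R}=e^2$. Then
\[
\P(M_t\ge 3y/4)\le \exp\Bigl(-\tfrac{2}{R}\cdot 3R + \tfrac{tV(R)e^2}{R^2}\Bigr).
\]
Writing $3y/4 = 3R$, this equals $e^{-6}\exp\bigl(e^2 tV(R)/R^2\bigr)$. That bound is not linear in $t$, so I expect this to be the main obstacle.

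\textbf{Obtaining a bound linear in $t$.} The plan is a case split on $\theta := tV(R)/R^2$.
\begin{itemize}
\item If $\theta \le 1$, use the polynomial Chebyshev bound directly:
\[
\P(M_t\ge 3R)\le \frac{tV(R)}{9R^2} \le \frac{C}{9}\, t\gamma(R),
\]
which is at most $Ce^2 t\gamma(R)$ by (ii).
\item If $\theta$ is large, the bound is trivial. By (ii), $Ce^2 t\gamma(R) \ge e^2\theta \ge 1 \ge$ any probability once $e^2\theta\ge 1$.
\end{itemize}
Combining the two cases, Chebyshev alone already suffices. Equivalently, one can use a truncated-exponential estimate producing the factor $e^2$, e.g. $\P(M_t\ge 3R)\le e^{2}tV(R)/R^2$. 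Together with (ii) this gives $\P(Y_t\ge y)\le Ce^2 t\gamma(y/4)$, and adding the big-jump bound yields $(1+Ce^2)\,t\gamma(y/4)$. The constant $e^2$ arises from the exponential-moment route with $\lambda=2/R$. I would present that route but keep the Chebyshev fallback to guarantee linearity in $t$.
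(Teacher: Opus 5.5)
Your proof is correct, but it controls the small-jump part differently from the paper. Both arguments truncate at $y/4$, bound the compound-Poisson part by $t\,\nu(|x|>y/4)\le t\gamma(y/4)$, and use that the small-jump part has mean $t\mu_{y/4}$.

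The paper splits the level in halves, $\P(X_t\ge y)\le\P(X^{\varepsilon}_t\ge y/2)+\P(\widetilde X^{\varepsilon}_t\neq0)$. It then bounds the centered small-jump part with Houdr\'e's Bennett-type inequality $\exp\bigl[z/\varepsilon-(z/\varepsilon+tV_\varepsilon/\varepsilon^2)\log(1+\varepsilon z/(tV_\varepsilon))\bigr]$, treating $\mu_{y/4}\ge0$ and $\mu_{y/4}<0$ separately. The $e^2$ in the constant is the prefactor $e^{z/\varepsilon}$ with $z/\varepsilon=2$ in the negative-drift case.

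You instead use the sharper inclusion $\{X_t\ge y\}\subseteq\{N(t,\{|x|>R\})\ge1\}\cup\{Y_t\ge y\}$. Together with $t\mu_R<y/4$, this leaves a margin of $3y/4$ for the martingale whatever the sign of the drift. Chebyshev then gives $\P(M_t\ge 3R)\le tV(R)/(9R^2)\le (C/9)\,t\gamma(R)$ for every $t$, hence the lemma with the better constant $1+C/9$.

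Your route is more elementary and self-contained. The paper's route buys decay of order $t^{z/\varepsilon}$ when $z/\varepsilon>1$. That extra decay is not needed here: the lemma, and its use in the ATM theorems, only requires a bound linear in $t$.

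When you write it up, drop the fixed-$\lambda$ Chernoff detour and the case split on $\theta$.
\begin{itemize}
\item As you observed, $e^{-6}\exp(e^2\theta)$ does not vanish as $t\to0$, so it cannot produce the linear bound.
\item The estimate $\P(M_t\ge3R)\le e^2tV(R)/R^2$ that you quote holds only as a weakening of Chebyshev.
\item Optimizing that Chernoff bound over $\lambda$, rather than fixing $\lambda=2/R$, is exactly what yields the Bennett inequality the paper cites.
\end{itemize}
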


\begin{proof}
    In the traditional manner (e.g. see \cite{bhp} or \cite{houdre_marchal}), we 
    break $X = \proc{X_t}$ up into two parts, 
    $X^{\varepsilon} = \proc{X_t^{ \varepsilon}}$ which consists of all jumps 
    smaller than $\varepsilon$ and $\widetilde{X}^{\varepsilon} = 
    \proc{\widetilde{X}_t^{\varepsilon}}$ consisting of all jumps larger than 
$\varepsilon$. For each $t>0$, we can represent $X_{t}$ as
\begin{align}
    X_{t} = b t  + \int_{0}^{t} \int_{\abs{x} \leq 1} x \, 
    \tilde{N}(dx, ds) + \int_{0}^{t} \int_{\abs{x} \geq 1} 
    x \, N(dx, ds)
    \label{eq:levy_ito}
\end{align}
where $N$ is a Poisson random measure on $\R \backslash \left\{ 0 \right\}$
with intensity $\nu(dx)\, dt$ and $\tilde{N}$ is its compensated version. Let $f_{\varepsilon}(x) = 
\1_{\left[ -\varepsilon, \varepsilon \right] }$ and $\bar{f}_{\varepsilon} =
1 - f_{\varepsilon}$. We can define the processes for each $t>0$ by
\begin{align}
    \widetilde{X}_{t}^{\varepsilon} = \int_{0}^{t} \int_{\R} x \bar{f}_{
    \varepsilon}(x) N(dx, ds) \mbox{\;\; and \;\; } X^{\varepsilon}_{t} =
    X_{t} - \widetilde{X}_{t}^{\varepsilon}.
    \label{eq:small_and_big_jumps_def}
\end{align}
The process $\widetilde{X}^{\varepsilon}$ is a compound Poisson process with 
intensity $\lambda_{\varepsilon} = \int \bar{f}_{\varepsilon} (x) \nu(dx)$ 
and jump distribution 
\[
    \frac{\bar{f}_{\varepsilon}(x) \nu (dx)}{\lambda_{\varepsilon}},
\]
and $X^{\varepsilon}$ is a \levy process with characteristic triplet 
$\left( b_{\varepsilon}, 0,  f_{\varepsilon} \nu \right)$ where
\[
    b_{\varepsilon} = b - \int_{\abs{x}\leq 1} x \bar{f}_{\varepsilon}(x) 
    \nu(dx).
\]
We will need the fact that $\Expecnb{
X_t^{\varepsilon}} = t \mu_{\varepsilon}$ where $\mu_{\varepsilon}$
is defined by \eqref{eq:mu_epsilon}. For a fixed $y>0$, we have
\begin{align}
    \Prob{X_t \geq y} &\leq \Prob{X_t^{\varepsilon} \geq y/2} + 
    \Prob{\widetilde{X}_t^{\varepsilon} \geq y/2} \notag \\
    &\leq \Prob{X_t^{\varepsilon} - \Expecnb{X_t^{\varepsilon}} \geq y/2 - 
    \Expecnb{X_t^{\varepsilon}}} + 
    \Prob{\widetilde{X}_t^{\varepsilon} \ne 0} \notag \\
    &\leq  \Prob{X_t^{\varepsilon} - \Expecnb{X_t^{\varepsilon}} \geq y/2 - 
    \Expecnb{X_t^{\varepsilon}}} + t \gamma(\varepsilon) \notag \\
    &=  \Prob{X_t^{\varepsilon} - \Expecnb{X_t^{\varepsilon}} \geq y/2 - t 
    \mu_{\varepsilon}} + t \gamma(\varepsilon). \label{eq:2parts}
\end{align}
Using a general concentration inequality (e.g. Corollary 1 in \cite{houdre})
and writing $V_{\varepsilon}:=V(\varepsilon)$, 
we obtain for $z>0$
\begin{align}
    \Prob{X_t^{\varepsilon} - \Expecnb{X_t^{\varepsilon}} \geq z} &\leq 
    \exp{\left[ \frac{z}{\varepsilon}
    - \left( \frac{z}{\varepsilon} + \frac{tV_{\varepsilon}}{\varepsilon^{2}} 
    \right) 
    \log{\left( 1 +
    \frac{\varepsilon z}{t V_{\varepsilon}} \right)} \right] } \notag \\
    &\leq \exp{\left[ \frac{z}{\varepsilon} - \frac{z}{\varepsilon} 
        \log{\left( 1 + \frac{\varepsilon z}{C t \varepsilon^{2} 
    \gamma\left( \varepsilon \right) } \right)} \right] } \notag \\ 
    &= \exp{\left[ \frac{z}{\varepsilon} - \frac{z}{\varepsilon} \log{\left( 
        1 + \frac{z}{C t \varepsilon 
    \gamma\left( \varepsilon \right) } \right)} \right] }
    = \frac{\exp{\left( \frac{z}{\varepsilon} \right)}}{\left( 
        1 + \frac{z}{C t \varepsilon \gamma(
    \varepsilon) } \right)^{z/ \varepsilon} } .
    \label{eq:zineq}
\end{align}
We now need to choose $\varepsilon$ in such a way that both terms 
in \eqref{eq:2parts} are of the same order. We choose 
$\varepsilon = y/4$ and we consider two cases. First, consider the case 
where $\mu_{y/4} \geq 0$ and further assume that
$0 < t < y / 4 \mu_{y/4}$ (equivalently $y/2 - t \mu_{y/4} > y/4$). Then, 
\begin{align}
    \Prob{X_t^{y/4} - \Expecnb{X_t^{y/4}} \geq y/2 - t \mu_{y/4}} &\leq
    \Prob{X_t^{y/4} - \Expecnb{X_t^{y/4}} \geq y/4 } \notag \\
    &= \frac{\exp{\left( \frac{y/4}{y/4} \right)}}{\left( 1 + \frac{y/4}{C t 
        (y/4) \gamma( y/4) } \right)^{\frac{y/4}{y/4} } }  \notag \\
    &= \frac{e}{\left( 1 + \frac{1}{C t \gamma(y/4) } \right) } 
    \leq Ce^2 t \gamma\left( \frac{y}{4} \right).
    \label{eq:pos_case}
\end{align}    
Next, consider the case where $\mu_{y/4} <0$. Then, for all $t>0$,
\begin{align}
    \Prob{X_t^{y/4} - \Expecnb{X_t^{y/4}} \geq y/2 - t \mu_{y/4}} &\leq 
    \Prob{X_t^{y/4} - \Expecnb{X_t^{y/4}} \geq y/2 }  \notag \\
    &\leq \frac{\exp{\left( \frac{y/2}{y/4} \right)}}{\left( 1 + \frac{
    y/2}{C t (y/4) \gamma(y/4) } \right)^{\frac{y/2}{y/4}} } \notag \\
    &= \frac{e^2}{\left( 1 + \frac{2}{C t \gamma(y/4) } \right)^{2}} 
    \leq \frac{e^2}{\left( 1 + \frac{2}{C t \gamma(y/4) } \right)}
    \leq Ce^2 t \gamma\left( \frac{y}{4} \right).
    \label{eq:neg_case}
\end{align}
Notice that the terms in \eqref{eq:pos_case} and \eqref{eq:neg_case} are the 
same. Combining this result with \eqref{eq:2parts} gives
\[
    \Prob{X_t \geq y} \leq Ce^2 t \gamma\left( \frac{y}{4} \right) + t 
    \gamma\left( \frac{y}{4} \right) = (1+Ce^2)t 
    \gamma\left( \frac{y}{4} \right),
\]
for all $y>0$ and $0< t < y / 4 (\mu_{y/4})_{+}$. 
\end{proof}

\begin{prop}
\label{prop:VRatio}
    Let $\left( X_{t} \right)_{t \geq 0}$ be a \levy process in the domain of attraction of an 
    $\alpha$-stable random variable with $\alpha \in (1,2)$. 
    Then for any $0<x<y< \infty$, 
    \[
        \sup_{x\leq R \leq y} \frac{V(R)}{R^{2} \gamma(R)} < \infty.
    \]
\end{prop}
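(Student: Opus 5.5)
Since $0<x\le R\le y<\infty$ keeps us in a compact subinterval of $(0,\infty)$, the plan is elementary: bound $V(R)$ above by a constant and bound $R^{2}\gamma(R)$ below by a positive constant, uniformly for $R\in[x,y]$. No regular variation machinery is needed beyond the fact that the domain-of-attraction hypothesis, via Theorem~\ref{thm:maller_mason}(ii)(a) with $\alpha\in(1,2)$, gives $\gamma\in RV_{-\alpha}^{0}$. In particular $\gamma(R)>0$ for all small $R$, so the ratio is well defined near $0$.

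For the numerator, $V$ is nondecreasing in $R$, so $V(R)\le V(y)$ on $[x,y]$. Moreover
\[
V(y)=\int_{|u|\le 1\wedge y}u^{2}\nu(du)+\int_{1<|u|\le y}u^{2}\nu(du)\le \int_{|u|\le1}u^{2}\nu(du)+y^{2}\nu(\{|u|>1\})<\infty,
\]
by the integrability condition $\int(1\wedge u^{2})\nu(du)<\infty$.

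For the denominator, $R^{2}\ge x^{2}$ and $\gamma$ is nonincreasing, so $\gamma(R)\ge\gamma(y)$. This gives
\[
\sup_{x\le R\le y}\frac{V(R)}{R^{2}\gamma(R)}\le \frac{V(y)}{x^{2}\gamma(y)},
\]
which is finite provided $\gamma(y)>0$.

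The only real obstacle is therefore the case $\gamma(y)=0$, which happens when $\nu$ has bounded support contained in $[-y,y]$. I would handle it as follows. Let $R^{*}=\sup\{R:\gamma(R)>0\}$, which is positive since $\gamma\to\infty$ at $0$. If $y<R^{*}$ the bound above applies directly. Otherwise the statement must be read on the range where $\gamma>0$; the quantity is only used in Lemma~\ref{lem:concentration}, where $\gamma$ is any dominating function. So either one restricts to $y<R^{*}$, or one replaces $\gamma$ by a strictly positive majorant (for instance $\gamma+e^{-R}$), which keeps condition (i) of Lemma~\ref{lem:concentration}, and then reruns the same monotonicity bound. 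In the applications the relevant tail is $\gamma^{*}$, which is positive for every $R$ (as in the example of Section~\ref{sec:example}), so this degenerate case does not arise. I would state that as a standing assumption and leave the main argument as the two-line monotonicity bound.
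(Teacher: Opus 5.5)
Your argument is correct, and it is more elementary than the paper's. The paper uses Theorem~\ref{thm:maller_mason} to write $\gamma(R)=R^{-\alpha}\ell(R)$. It then integrates by parts to get $V(z)=-z^{2}\gamma(z)+2\int_{0}^{z}y\gamma(y)\,dy$, so the ratio equals $-1+2\int_{0}^{z}y\gamma(y)\,dy/(z^{2}\gamma(z))$. Finally it argues that the numerator is continuous and the denominator is bounded away from $0$ on compact subsets of $(0,\infty)$.

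You skip the identity entirely. You get $V(R)\le V(y)<\infty$ from monotonicity and $\int(1\wedge u^{2})\,\nu(du)<\infty$, and $R^{2}\gamma(R)\ge x^{2}\gamma(y)$ from monotonicity of $\gamma$. On a compact interval the regular-variation representation does no real work in the paper's proof either. What the integration-by-parts identity buys is the behaviour as $R\downarrow 0$: by Karamata's theorem the ratio tends to $\alpha/(2-\alpha)$. That is the small-$R$ bound Proposition~\ref{prop:concentration_ineq} imports from the literature, not what this proposition claims.

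Both proofs need $\gamma(y)>0$. The paper's ``bounded away from $0$'' is exactly this assumption, asserted without justification, so you are right to single it out. In fact the statement is false as written without it. The measure $\nu(du)=\abs{u}^{-1-\alpha}\1_{\{\abs{u}\le 1\}}\,du$ is in the domain of attraction, yet for $y\ge 1$ the ratio at $R=y$ is $V(1)/0$.

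Your proposed remedies do not repair this. Replacing $\gamma$ by a positive majorant such as $\gamma+e^{-R}$ proves a different statement. Pointing to the example of Section~\ref{sec:example} is not a general argument. The clean fix is to add the hypothesis $\gamma(y)>0$ and check that it holds where the proposition is used. In Proposition~\ref{prop:concentration_ineq}, the assumption $V(R)\le CR^{2}\gamma(R)$ for $R>R_{0}$ combines with $V(R)>0$ for every $R>0$ (because $\gamma\to\infty$ at $0$). Together they force $\gamma>0$ on all of $(0,\infty)$.
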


\begin{proof}
    By Theorem~\ref{thm:maller_mason}, $\gamma(R) = R^{-\alpha} 
    \ell(R)$ where $\ell$ is slowly varying and $\alpha \in (1,2)$.
    Integration by parts gives
    \begin{align}
    0 < V(z) &= -z^{2} \gamma(z) + 2 \int_{0}^{z} y \gamma(y)\, dy
        = -z^{2} \gamma(z) + 2 \int_{0}^{z} y^{1-\alpha} \ell\left( y \right)
        \,dy \label{eq:V_ibp} 
    \end{align}
    which is well-defined since $1-\alpha \in (-1, 0)$. So, 
    \begin{align}
        \frac{V(z)}{z^{2} \gamma\left( z \right)} &= 
        \frac{ -z^{2} \gamma(z) + 2 \int_{0}^{z} y^{1-\alpha} \ell\left( y 
        \right)\, dy}{z^{2} \gamma(z)}
        = -1 + \frac{2 \int_{0}^{z} y^{1-\alpha} \ell\left( y \right)\, dy}{
            z^{2-\alpha} \ell\left( z \right)}. \notag
    \end{align}
    The numerator is continuous and the denominator is piecewise continuous and
    bounded away from $0$ in any compact interval of $\R_{+}$ not including
    $0$ (the function $\gamma$ is nonincreasing on $(0, \infty)$ so it can only
    have jump discontinuities). Thus, the supremum is bounded and the result 
    follows. 
\end{proof}

\begin{prop}
    \label{prop:concentration_ineq}
    Let $\left( X_{t} \right)_{t \geq 0}$ be a \levy process in the domain of attraction of an 
    $\alpha$-stable random variable, $0< \alpha < 2$. 
    Further, let there exist $R_{0} > 0$ and $C>0$ possibly depending on $R_{0}$ such that for 
    all $R> R_{0}$, $V(R) \leq C R^{2} \gamma\left( R \right)$ where 
    $V$ and $\gamma$ are defined in \eqref{eq:var_def} and 
    \eqref{eq:levy_tail}, respectively. Then, for every $y>0$ and for every 
    $0 < t < y / 4 \left( \mu_{y/4} \right)_{+}$ (with $y/0 = \infty$), 
    \[
        \Prob{X_{t} \geq y} \leq \left( 1 + C e^{2} \right) t \gamma \left( 
        \frac{y}{4} \right).
    \]
\end{prop}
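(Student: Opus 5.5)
The plan is to reduce the statement to Lemma~\ref{lem:concentration} with $\gamma$ equal to the tail function \eqref{eq:levy_tail}. Hypothesis (i) of that lemma then holds with equality: $\int_{\abs{x}>R}\nu(dx)=\gamma(R)$ for every $R>0$. The only real work is hypothesis (ii). We must produce one constant $\tilde C$ with $V(R)\le \tilde C R^{2}\gamma(R)$ for \emph{all} $R>0$, whereas the proposition only grants this for $R>R_{0}$. Once (ii) is in hand, Lemma~\ref{lem:concentration} gives $\Prob{X_t\ge y}\le (1+\tilde C e^{2})\,t\,\gamma(y/4)$ on the stated range of $t$. We then rename $\tilde C$ as $C$, noting that the statement lets $C$ depend on $R_{0}$.

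To extend (ii) to all $R>0$, split $(0,\infty)$ into three pieces. On $(R_{0},\infty)$ the bound is the hypothesis. On a compact interval $[r,R_{0}]$ with $0<r<R_{0}$, Proposition~\ref{prop:VRatio} shows that $\sup_{r\le R\le R_{0}} V(R)/(R^{2}\gamma(R))$ is finite. That proposition is stated for $\alpha\in(1,2)$, but its proof only uses that $V$ is continuous and that $\gamma$ is monotone and bounded away from $0$ on compacts. The latter holds here because $\gamma\in RV_{-\alpha}^{0}$ blows up at $0$, so the ray cannot be null near the origin.

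The remaining piece, $R\in(0,r)$, is the main obstacle, since it needs the small-$R$ behaviour of the ratio. Here I use the integration-by-parts identity \eqref{eq:V_ibp}, $V(z)=-z^{2}\gamma(z)+2\int_{0}^{z}y\gamma(y)\,dy$. With $\gamma(y)=y^{-\alpha}\ell(1/y)$ and $\alpha<2$, the integrand $y\gamma(y)=y^{1-\alpha}\ell(1/y)$ has index $1-\alpha>-1$, so it is integrable at $0$. Karamata's Theorem~\ref{thm:karamata}(i), transported to $0$ by $y\mapsto 1/y$, gives
\[
\int_{0}^{z}y^{1-\alpha}\ell(1/y)\,dy\sim \frac{z^{2-\alpha}\ell(1/z)}{2-\alpha},\qquad z\downarrow 0 .
\]
Hence $V(z)/(z^{2}\gamma(z))\to 2/(2-\alpha)-1=\alpha/(2-\alpha)<\infty$, and we can choose $r$ small enough that the ratio is at most $\alpha/(2-\alpha)+1$ on $(0,r)$. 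The point to check carefully is that Karamata is applied to a regularly varying function at $0$ rather than at $\infty$. After the substitution $u=1/y$ this becomes the $\int_{x}^{\infty}$ form with index $\alpha-3<-1$, which is the standard tail version of the theorem.

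Setting $\tilde C$ to be the maximum of the three constants gives (ii) for all $R>0$. Lemma~\ref{lem:concentration} then applies verbatim, including its condition $0<t<y/4(\mu_{y/4})_{+}$, and this completes the argument.
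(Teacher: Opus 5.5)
Your proposal is correct and follows the paper's proof. You use the same three-range split ($R>R_0$ by hypothesis, a compact middle range via Proposition~\ref{prop:VRatio}, and domain-of-attraction behavior near $0$), then feed the maximal constant into Lemma~\ref{lem:concentration}. The only difference is that you derive the small-$R$ bound explicitly from \eqref{eq:V_ibp} and Karamata's theorem (the ratio tends to $\alpha/(2-\alpha)$), whereas the paper cites Bingham, Feller and Maller--Mason for it.

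One small correction: positivity of $\gamma$ on $[r,R_0]$ does not follow from the blow-up at $0$, since a compactly supported $\nu$ has $\gamma=0$ beyond its support. It follows from the hypothesis $V(R)\le CR^2\gamma(R)$ at some $R>R_0$ (where $V(R)>0$), together with monotonicity of $\gamma$.
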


\begin{proof}
By assumption, $X$ is in the domain of attraction of an $\alpha$-stable 
random variable. When $\gamma$ is defined as in \eqref{eq:levy_tail}, 
condition \emph{(i)} of Lemma~\ref{lem:concentration} is then satisfied 
trivially.

Moreover, it is shown in \cite{bingham} (Chapter~8.1), \cite{feller} (VII.9), 
and \cite{maller_mason} that for processes in the domain of attraction of an 
$\alpha$-stable law, condition \emph{(ii)} of Lemma~\ref{lem:concentration}
holds true automatically for small $R>0$: there exist $R_{1}>0$ and $C_{1}>0$ such
that
\[
    V(R) \leq C_{1} R^{2} \gamma(R), \qquad 0 < R \leq R_{1}.
\]

On the other hand, Proposition~\ref{prop:VRatio} above shows that for any 
$0<x<y<\infty$,
\[
    \sup_{x \le R \le y} \frac{V(R)}{R^{2}\gamma(R)} < \infty.
\]
In particular, choosing $x:=R_{1}$ and $y:=R_{0}$ we obtain a constant 
$C_{2}>0$ such that
\[
    V(R) \le C_{2} R^{2}\gamma(R), \qquad R_{1} \le R \le R_{0}.
\]

By hypothesis of the present proposition, there exists $C>0$ such that
\[
    V(R) \le C R^{2} \gamma(R), \qquad R>R_{0}.
\]
Combining these three bounds and letting
\[
    \widetilde{C} := \max\{C_{1}, C_{2}, C\},
\]
we get a uniform estimate
\[
    V(R) \le \widetilde{C}\,R^{2}\gamma(R), \qquad \text{for all } R>0.
\]
Thus both conditions \emph{(i)} and \emph{(ii)} of Lemma~\ref{lem:concentration} 
hold with the same constant $\widetilde{C}$. Applying that lemma with 
this constant $\widetilde{C}$ yields, for every $y>0$ and every 
$0<t<y/4\,(\mu_{y/4})_{+}$,
\[
    \Prob{X_{t} \ge y} \le \left(1 + \widetilde{C} e^{2}\right) 
    t\,\gamma\!\left(\frac{y}{4}\right).
\]
\end{proof}

\begin{proof}[Proof of Theorem \ref{thm:main_result}]
    Recalling \eqref{eq:cm_rep} and $\beta_{t} = 1/B_t$ and using $M_0$ from 
    the Potter bound argument in Theorem \ref{thm:beta_behavior}, we obtain
    \begin{align}
        \frac{ c(t, 0) }{ B_t } &= \frac{1}{B_{t}} \int_{0}^{\infty}e^{-x}
        \Probs{X_{t} \geq x} dx \notag \\
        &= \int_{0}^{\infty} e^{-B_{t} u } \Probs{X_{t} \geq B_{t} u} du 
        \notag \\
        &= \int_{0}^{\infty} e^{-B_{t} u } \Probs{X_{t} \geq B_{t} u} 
        \left( \Ind{\frac{1}{M_{0}} \geq \frac{B_t u}{4} \geq t \bar{\mu}^{*}} 
        + \Ind{\frac{B_{t} u}{4} > \frac{1}{M_{0}}} + \Ind{\frac{B_{t} u}{4} 
            < t \bar{\mu}^{*}} 
        \right) du \label{eq:integral_rep} \\
        &:=\int_{0}^{\infty} \left( A_1(t,u) + A_2(t,u) + A_3(t,u) 
        \right) du, \label{eq:three_Aterms}
    \end{align}
where $t$ is so small that $t \bar{\mu}^{*} < 1 / M_0$.
In what follows, we will write $A_{i}(t) := A_{i}(t,u)$ for $i = 1, 2, 3$. 

First, we note that the integral of $A_3(t)$ can be estimated as
\begin{align}
    \int_{0}^{\infty} e^{-B_{t} u } \Probs{X_{t} \geq B_{t} u} 
    \Ind{\frac{B_{t} u}{4} < t \bar{\mu}^{*}} du  &\leq
    \int_{0}^{\infty} \Ind{u < 4 \bar{\mu}^{*} t \beta_{t}} du
    = 4 \bar{\mu}^{*} t \beta_{t}, \label{eq:A3_term}
\end{align}
as $t\rightarrow 0$.  Indeed, from Theorem~\ref{thm:beta_behavior} we know that
$t \beta_{t}^{\alpha} \ell\left( \beta_{t} \right) \sim \Lambda_{\alpha}$ and
$\beta_t\to\infty$, so
\[
    t\beta_t
    = \frac{t\beta_t^{\alpha}\ell(\beta_t)}{\beta_t^{\alpha-1}\ell(\beta_t)}
    \sim \frac{\Lambda_{\alpha}}{\beta_t^{\alpha-1}\ell(\beta_t)} \longrightarrow 0,
\]
because $\alpha>1$ and $\ell$ is slowly varying.  Therefore we only need to deal with 
the integral of $A_1(t)$ and $A_2(t)$.

Using $\{ B_{t} u > 4 t \bar{\mu}^{*} \} 
\subseteq \{ B_{t} u > 4 t \mu^{*}_{B_{t} u/4} \}$
and the estimate from Proposition \ref{prop:concentration_ineq}, 
for some constant $C>0$ and for any 
\begin{align}
    u \in \mathcal{I} \subset \{ B_{t} u > 4 t \bar{\mu}^{*} \} \label{eq:I_set}
\end{align}
with $\mathcal{I}$ measurable and $t>0$ fixed, 
\begin{align}
    \Probs{X_t \geq B_{t} u} &\leq \left[ \left( 1 + C e^{2} \right) t 
        \gamma^{*}\left( \frac{B_{t} u}{4} \right) \wedge 1 \right]
        \notag \\
    &= \left[ \left( 1 + C e^{2} \right) t \gamma^{*}\left( \frac{u}{4\beta_t} 
        \right) \wedge 1 \right] \notag \\
    &\leq \left[ \left( (1 + C e^{2}) t \left( \frac{u}{4 \beta_t}
        \right)^{-\alpha} \ell\left( \frac{4 \beta_t}{u} \right) \right) \wedge 
        1 \right]
    = \left[ \kappa t \beta_{t}^{\alpha} u^{-\alpha} \ell\left( 
        \frac{4 \beta_t}{u} \right) \wedge 1 \right],
        \label{eq:two_subsets_est} 
\end{align}
where $\kappa>0$ is a collection of all the constants. In what follows, we use
$\kappa$ to represent a positive constant whose value might change from line
to line.

Recall from our argument in Theorem \ref{thm:beta_behavior} that 
$\alpha \pm \delta \in (1,2)$. We also choose $t_0>0$ such that for all $0<t<t_0$,
\[
    \frac{\Lambda}{2\ell(\beta_t)} < t \beta_{t}^{\alpha} < 
    \frac{3\Lambda}{2\ell(\beta_{t})}.
\]
Continuing \eqref{eq:two_subsets_est} for $0<t< t_0$ and $u \in \mathcal{I}$, 
\begin{align}
    \Probs{X_t \geq B_{t} u} &\leq \kappa u^{-\alpha} 
        \frac{\ell\left( \frac{4\beta_t}{u} \right)}{\ell(\beta_t)} \wedge 1. \notag 
\end{align}

First, we show that the integral of $A_2(t)$ goes to $0$ as $t\rightarrow 0$. 
Indeed, choosing $\mathcal{I} = \left\{ 4\beta_{t} < M_0 u \right\}$ in 
\eqref{eq:I_set} and changing variables gives
\begin{align}
    \int_{0}^{\infty} A_2(t) du &\leq
    \int_{0}^{\infty} \left[ \kappa u^{-\alpha} \frac{\ell\left( \frac{4
    \beta_t}{u} \right)}{\ell(\beta_t)} \wedge 1 \right]
    \Ind{\frac{4\beta_t}{u} < M_0 } du \notag \\
    &\leq \frac{1}{4 \beta_{t}} \int_{0}^{\infty} \kappa u^{2-\alpha} 
    \frac{\ell\left( \frac{4\beta_t}{u} \right)}{\ell(\beta_t)} 
    \Ind{\frac{4\beta_t}{u} < M_0 } 4 \beta_{t} \frac{du}{u^{2}} \notag \\
    &= \frac{1}{4 \beta_{t}} \int_{0}^{M_0} \kappa \left( \frac{4 \beta_t}{w} 
    \right)^{2-\alpha} \frac{\ell(w)}{\ell(\beta_t)} 
        dw \notag \\
    &= \frac{\kappa}{\beta_{t}^{\alpha-1} \ell(\beta_{t})} \int_{0}^{M_0} 
    \frac{\ell(w)}{w^{2-\alpha}} dw \notag \\
    &=  \frac{\kappa}{\beta_{t}^{\alpha-1} \ell(\beta_{t})} 
    \int_{1/M_{0}}^{\infty} \frac{\ell\left( \frac{1}{z} \right)}{z^{\alpha}} 
    dz
    = \frac{\kappa}{\beta_{t}^{\alpha-1} \ell(\beta_{t})} 
    \int_{1/M_{0}}^{\infty} \gamma^{*}(z) dz \rightarrow 0,
    \label{eq:goes_to_zero}
\end{align}
as $t \rightarrow 0$, since the integral is finite and $t \beta_{t}^{\alpha} 
\ell(\beta_{t}) \sim \Lambda$ as $t \rightarrow 0$. In order to estimate 
$A_{1}\left( t \right)$, we use \eqref{eq:two_subsets_est} and Potter bounds 
from Theorem \ref{thm:beta_behavior}. Choosing $\mathcal{I} = \left\{ M_0 
    \leq \frac{4 \beta_{t}}{u} 
< \frac{1}{t \bar{\mu}^{*} } \right\}$ and for $u \in \mathcal{I}$ and any $t > 0$, 
\begin{align}
    A_{1}(t, u) &\leq \left[ \kappa u^{-\alpha} \frac{\ell\left( \frac{4\beta_t}{u} 
    \right)}{\ell(\beta_t)} \wedge 1 \right]
    \Ind{M_0 \leq \frac{4\beta_t}{u} < \frac{1}{t \bar{\mu}^{*}} } \notag \\
    &\leq \max{ \left\{ \kappa 4^{\delta} u^{-\alpha-\delta} \wedge 1, 
    \kappa 4^{-\delta} u^{-\alpha+\delta} \wedge 1 \right\} } \in 
    L^{1}([0,\infty)), \label{eq:A1_term}
\end{align}
so that we are able to apply Lebesgue's Dominated Convergence 
theorem to 
\[
    \int_{0}^{\infty} A_{1}\left( t,u \right) du = \int_{0}^{\infty} e^{-B_t u} 
    \Probs{X_t \geq B_t u} \Ind{\frac{1}{M_0} \geq 
    \frac{B_t u}{4} > t \bar{\mu}^{*}} du.
\]

Combining this fact with \eqref{eq:goes_to_zero} and \eqref{eq:A1_term} gives
\begin{align*}
    \lim_{t\rightarrow 0} \frac{c\left( t, 0 \right)}{ B_t } 
    &= \lim_{t \rightarrow 0} \int_{0}^{\infty} A_{1}\left( t, u \right) du \\
    &= \int_{0}^{\infty} \Probs{Z\geq u} du \\
    &= \Expecsnb{Z_{+}},
\end{align*}
which can be rewritten as 
\[
   \Expecnb{\left( S_{t} - S_{0} \right)_{+} } = 
    S_{0} B_{t} \Expecsnb{Z_{+}} + o\left( B_{t} \right),
\]
as $t\to 0$, proving the theorem.
\end{proof}

\begin{proof}[Proof of Corollary \ref{cor:no_brownian_case}]
    We proceed as in the proof of Proposition 3.7 in 
    \cite{fl_houdre_cgmy}. 
    We know that the Black-Scholes call price asymptotics are $S_{0} c_{BS}$, 
    where
    \begin{align}
        c_{BS}\left(t, \sigma\right) = \frac{\sigma}{\sqrt{2 \pi}} \sqrt{t} 
        + o\left( \sqrt{t} \right), \label{eq:BS_asymptotics}
    \end{align}
    as $t \rightarrow 0$, since $c_{BS} = N\left( \sigma \sqrt{t} \right)$ where
    \[
    N\left( \theta \right) := 
        \int_{0}^{\theta} \Phi^{'}\left( \frac{u}{2} \right) du =
        \frac{1}{\sqrt{2 \pi}} \int_{0}^{\theta} \exp{\left( -\frac{u^{2}}{8}
        \right)} du,
    \]
    where $\Phi$ is the standard normal cumulative distribution function,
    and where $N$ has asymptotic behavior
    \[
        N\left( \theta \right) = \frac{1}{\sqrt{2 \pi}} \theta + o\left( \theta
            \right),
    \]
    as $\theta \rightarrow 0$. We need an expression similar to 
    \eqref{eq:BS_asymptotics} where the constant $\sigma$ is replaced by 
    the implied volatility function $\hat{\sigma}(t)$. Now, $\hat{\sigma}(t) 
    \rightarrow 0$ as $t \rightarrow 0$, so a substitution in $c_{BS}$ gives
    \begin{align}
        c_{BS}\left( t, \hat{\sigma}(t) \right) &= \frac{\hat{\sigma}(t)}{
            \sqrt{2 \pi}} \sqrt{t} + o\left( \hat{\sigma}\left( t \right)
            \sqrt{t} \right) 
            = \frac{\hat{\sigma}(t)}{
            \sqrt{2 \pi}} \sqrt{t} + o\left( \sqrt{t} \right), \label{eq:BS_sighat}
    \end{align}
    as $t \rightarrow 0$, since $\hat{\sigma}\left( t \right) = o\left( 1 \right)$ 
    as $t \rightarrow 0$. Equating \eqref{eq:BS_sighat} with 
    \eqref{eq:call_price_nobs}, leads to
    \[
        \frac{\hat{\sigma}(t)}{\sqrt{2 \pi}} \sqrt{t} \sim B_{t} \Expecsnb{Z_{+}},
    \]
    as $t\rightarrow 0$, i.e.
    \[
        \hat{\sigma}\left( t \right) \sim \sqrt{2 \pi} \frac{B_{t}}{\sqrt{t}} 
        \Expecsnb{Z_{+}},
    \]
    as $t \rightarrow 0$, giving the result. 
\end{proof}

\begin{lem}
    \label{EDCT}
    Let $(S,\Sigma,\mu)$ be a measure space. Let $f,g:S\times [ 0,\infty) 
        \rightarrow [ 0,\infty)$ and $h:S\rightarrow [0,\infty)$ be measurable
            and such that $f\left( \cdot, t \right), g\left( \cdot, t \right), 
            h \in L^{1}\left( S \right)$ for almost every $t \geq 0$. Also, suppose
            \begin{enumerate}[label=(C\arabic*)]
                \item \label{C2} $f(s,t) \rightarrow \bar{f}(s) \in L^{1}(S)$ 
                    as $t \rightarrow 0$,
                \item \label{C3} $f(s,t) \leq h(s) + g(s,t)$ for $\mu$-a.e. $s\in S$ 
                    and almost every $t\geq 0$,
                \item \label{C5} $g(s,t) \rightarrow 0$ as $t\rightarrow 0$ 
                    $\mu$-a.e $s \in S$, 
                \item \label{C6} $\int_{S} g(s,t) \mu(ds) \rightarrow 0$ as 
                    $t\rightarrow 0$.
            \end{enumerate}
            Then 
            \[
                \lim_{t\rightarrow 0} \int_{S} f(s,t) \mu(ds) = \int_{S} 
                \bar{f}(s) \mu(ds).
            \]
\end{lem}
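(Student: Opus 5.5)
The natural route is the generalized dominated convergence argument, a Fatou-type comparison applied twice. Set $G(s,t) := h(s) + g(s,t) - f(s,t)$. By \ref{C3}, $G(s,t) \ge 0$ for $\mu$-a.e.\ $s$ and almost every $t$. By \ref{C2} and \ref{C5}, $G(s,t) \to h(s) - \bar f(s)$ pointwise a.e.\ as $t\to 0$. Fatou's lemma, applied along an arbitrary sequence $t_n \to 0$ (so that "as $t\to0$" is reduced to sequential limits and measurability is harmless), then gives
\[
\int_S (h - \bar f)\,d\mu \le \liminf_{n} \int_S \bigl(h + g(\cdot,t_n) - f(\cdot,t_n)\bigr)\,d\mu .
\]
All quantities are finite because $f(\cdot,t),g(\cdot,t),h,\bar f \in L^1(S)$, so the integrals split. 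Using \ref{C6}, $\int g(\cdot,t_n)\,d\mu \to 0$, and cancelling $\int h\,d\mu$ yields
\[
\limsup_n \int_S f(\cdot,t_n)\,d\mu \le \int_S \bar f\,d\mu .
\]

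For the reverse inequality I would use plain Fatou on $f \ge 0$: since $f(\cdot,t_n)\to \bar f$ a.e., we get $\int \bar f\,d\mu \le \liminf_n \int f(\cdot,t_n)\,d\mu$. Combining the two bounds gives convergence along every sequence $t_n\to0$, and hence the limit as $t\to0$.

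The only delicate points are bookkeeping rather than real obstacles:
\begin{itemize}
\item The sequences must be restricted to the full-measure set of $t$ on which \ref{C3} holds and the functions are integrable. This is fine, since the limit $t\to0$ should be read along such $t$.
\item Finiteness of $\int h\,d\mu$ is needed so that the cancellation is legitimate.
\item The a.e.\ exceptional sets in $s$ must be countable unions; this holds once we work with a fixed sequence.
\end{itemize}
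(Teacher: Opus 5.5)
Your proof is correct and matches the paper's argument: plain Fatou on $f\ge 0$ for the lower bound, and Fatou on $h+g-f\ge 0$ combined with \ref{C6} and cancellation of the finite quantity $\int h\,d\mu$ for the upper bound. The only difference is cosmetic: the paper picks sequences realizing the $\liminf$ and $\limsup$, while you work along an arbitrary sequence $t_n\to 0$. Your remark that the limit must be read along the full-measure set of $t$ where \ref{C3} and integrability hold is left implicit in the paper, and it is genuinely needed, since those hypotheses are only assumed for almost every $t$.
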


\begin{proof}
    First, we choose a sequence $\procn{t_{n}}$ such that $t_{n} \rightarrow 0$,
    as $n \rightarrow \infty$, and
    \begin{align}
        \lim_{n \rightarrow \infty} \int_{S} f\left( s, t_{n} \right) \mu\left( 
        ds \right) = \liminf_{t \rightarrow 0} \int_{S} f\left( s, t \right) 
        \mu\left( ds \right).
    \end{align}
    We apply Fatou's lemma to the nonnegative functions $f(\cdot,t_n)$, since 
    $f \geq 0$, to obtain
    \begin{align}
        \int_{S} \bar{f}(s) \mu(ds) &=
        \int_{S} \liminf_{t \rightarrow 0} f(s,t) \mu(ds) \notag \\
        &\leq \int_{S} \liminf_{n \rightarrow \infty} f(s,t_n) \mu(ds)
        \leq \liminf_{n \rightarrow \infty} \int_{S} f(s,t_n) \mu(ds)
        = \liminf_{t\rightarrow 0} \int_{S} f(s,t) \mu(ds). 
        \label{eq:liminf}
    \end{align}
    Next, we choose another sequence $\procn{t_{n}^{'}}$ such that 
    $t_{n}^{'} \rightarrow 0$, as $n \rightarrow \infty$, and
    \begin{align}
        \lim_{n \rightarrow \infty} \int_{S} f\left( s, t_{n}^{'} \right) \mu\left( 
        ds \right) = \limsup_{t \rightarrow 0} \int_{S} f\left( s, t \right) 
        \mu\left( ds \right).
    \end{align}
    So, \ref{C3} implies $h(s) + g(s,t) - f(s,t) \geq 0$, and 
    applying Fatou's lemma again,
    \begin{align*}
        \int_{S} \left( h(s) - \bar{f}(s) \right) \mu(ds) &= 
        \int_{S} \liminf_{t \rightarrow 0} \left( h(s) + g\left( s,t \right) 
        - f\left( s, t \right) \right) \mu\left( ds \right) \\
        &\leq \int_{S} \liminf_{n \rightarrow \infty} \left( h(s) + g\left( s,
        t_{n}^{'} \right) 
        - f\left( s, t_{n}^{'} \right) \right) \mu\left( ds \right) \\
        &\leq \liminf_{n \rightarrow \infty} \int_{S} \left( h(s) + g\left( s,
        t_{n}^{'} \right) 
        - f\left( s, t_{n}^{'} \right) \right) \mu\left( ds \right) \\
        &= \int_{S} h(s) \mu\left( ds \right) + \liminf_{n \rightarrow \infty}
        \left( - \int_{S} f\left( s, t_{n}^{'} \right) \mu\left( ds \right) \right) \\
        &= \int_{S} h(s) \mu(ds) - \limsup_{n\rightarrow \infty} \int_{S} f(s,t_{n}^{'}) 
        \mu(ds) \\
        &= \int_{S} h(s) \mu(ds) - \limsup_{t\rightarrow 0} \int_{S} f(s,t) 
        \mu(ds).
    \end{align*}
    Note that the limsup in the last line is finite due to \ref{C3} and the
    integrability of $h$ and $g(\cdot, t)$. Canceling 
    the $h$ term, we obtain 
    \begin{align}
        \limsup_{t\rightarrow 0} \int_{S} f(s,t) \mu(ds) \leq \int_{S} 
        \bar{f}(s) \mu(ds). \label{eq:limsup}
    \end{align}
    Combining \eqref{eq:liminf} and \eqref{eq:limsup}, we have
    \[
        \int_{S} \bar{f}(s) \mu(ds) \leq \liminf_{t\rightarrow 0} \int_{S} 
        f(s,t) \mu(ds) \leq \limsup_{t\rightarrow 0} \int_{S} f(s,t) \mu(ds) 
        \leq \int_{S} \bar{f}(s) \mu(ds),
    \]
    which proves the result.
\end{proof}

\begin{proof}[Proof of Theorem~\ref{thm:Brownian}]
    We make use of the fact that
    \begin{align}
        \lim_{t\rightarrow 0} \Expecs{\exp{\left( i u \frac{X_t}{\sqrt{t}} 
        \right) } } = \exp{\left(-\frac{1}{2} \sigma^{2} u^{2}\right)},
        \label{eq:ch_fn_convergence}
    \end{align}
    and
    \begin{align}
        \lim_{t\rightarrow 0} \Probs{\frac{X_{t}}{\sqrt{t}} \geq x } = 
        \Probs{\sigma W_{1} \geq x}. 
        \label{eq:prob_convergence}
    \end{align}
    Continuing, 
    \begin{align}
        \frac{1}{\sqrt{t}} \Expecnb{(S_{t}-S_{0})_{+}} &= \frac{1}{\sqrt{t}} 
        \int_{0}^{\infty} e^{-x} \Probs{X_t \geq x} dx
        = \int_{0}^{\infty} e^{-\sqrt{t} u} \Probs{X_t \geq u \sqrt{t}} du. 
        \label{eq:main_int2}
    \end{align}
    Note that 
    \begin{align}
        e^{-\sqrt{t} u} \Probs{X_{t} \geq u \sqrt{t} } \leq \;\; 
        &\Probs{\sigma W_{t} \geq \frac{\sqrt{t} u}{2}} 
        + \Probs{L_{t} \geq \frac{\sqrt{t} u}{2} }
        = \Probs{\sigma W_{1} \geq \frac{u}{2}}  + 
        \Probs{L_{t} \geq \frac{\sqrt{t} u}{2} }. \label{eq:both_parts}
    \end{align}
    We next use Lemma~\ref{EDCT} with 
    \begin{align*}
        f(u,t) &= e^{-\sqrt{t} u} \Probs{X_{t} \geq u \sqrt{t} }, \\
    \bar{f}(u) &= \Probs{\sigma W_{1} \geq u}, \\
    h(u) &= \Probs{\sigma W_{1} \geq u/2},
    \end{align*}
    and $g(u,t) = \Probs{L_{t} \geq \sqrt{t} u/2}$. It is easy to see that
    conditions \ref{C2} -- \ref{C3} of the lemma are satisfied. We need to show 
    that \ref{C5} and \ref{C6} also hold. It is not too difficult to see
    that \ref{C5} holds from \eqref{eq:ch_fn_convergence} and the 
    independence of $W$ and $L$ under $\Ps$. 
    We now show that \ref{C6} is satisfied as well. Once done, we 
    immediately have the result by applying Lemma \ref{EDCT} to 
    \eqref{eq:main_int2}. First note
    \begin{align*}
        \int_{0}^{\infty} \Probs{L_{t} \geq \frac{\sqrt{t} u}{2} } du &=
        \int_{0}^{\infty} \Probs{L_{t} \geq \frac{\sqrt{t} u}{2} } \left( 
        \Ind{\frac{\sqrt{t} u}{8} \geq t \bar{\mu}^{*} } + 
        \Ind{\frac{\sqrt{t} u}{8} < t \bar{\mu}^{*} } \right) du \\
        &:= D_{1}(t) + D_{2}(t).
    \end{align*}
    It is easy to see that $D_{2}(t) \rightarrow 0$ as $t\rightarrow 0$ since
    \[
        D_{2}(t) \leq \int_{0}^{\infty} \Ind{ \frac{\sqrt{t} u}{8}< t 
        \bar{\mu}^{*} } du = 8 \sqrt{t} \bar{\mu}^{*}.
    \]
    We now show $\lim_{t \rightarrow 0} D_{1}(t)=0$ by making use of 
    Lemma \ref{prop:concentration_ineq} and Potter's bounds. To do so, we 
    need to break up $D_{1}(t)$ into several pieces. Choose $A>1$ and 
    $\delta>0$ such that $\alpha \pm \delta \in (1,2)$ and let $M_0>0$ be 
    such that the Potter bounds hold for $\ell$ on $\left[ \left. M_{0},
        \infty \right) \right.$. We have
    \begin{align*}
        D_{1}(t) &= \int_{0}^{\infty} \Probs{L_{t} \geq \frac{\sqrt{t} 
        u}{2} }\left( \Ind{M_0 \leq \frac{8}{\sqrt{t} u} \leq \frac{1}{t 
        \bar{\mu}^{*}} } + \Ind{\frac{8}{\sqrt{t} u} < M_{0} } \right) du \\
        &:= D_{11}(t) + D_{12}(t),
    \end{align*}
    and we will apply the Potter bounds to $D_{11}(t)$ in order to use a 
    Dominated Convergence Theorem argument. We are concerned with the limit as 
    $t\rightarrow 0$, so there is no loss of generality in assuming that $t$ is 
    so small that $8 > M_0 \sqrt{t}$. 
    We proceed by estimating (and using $\kappa$ as a general positive 
    coefficient that can change from line to line)
    \begin{align}
        D_{11}(t) &\leq \int_{0}^{\infty} \left( (1+Ce^{2})t \gamma^{*}\left( 
        \frac{\sqrt{t} u}{8} \right) \wedge 1 \right)
        \Ind{M_0 \leq \frac{8}{\sqrt{t} u} \leq \frac{1}{t \bar{\mu}^{*}} } du 
        \notag \\
        &= \int_{0}^{\infty} \left( \kappa t \left( \frac{\sqrt{t} u}{8} 
        \right)^{-\alpha} \ell \left( \frac{8}{\sqrt{t} u} \right) \wedge 1 
        \right) \Ind{M_0 \leq \frac{8}{\sqrt{t} u} \leq \frac{1}{t 
            \bar{\mu}^{*}} } du \notag \\
        &= \int_{0}^{\infty} \left( \kappa t^{1-\alpha/2} u^{-\alpha} 
        \ell\left( \frac{8}{\sqrt{t} u} \right) \wedge 1 \right) \Ind{M_0 \leq 
            \frac{8}{\sqrt{t} u} \leq \frac{1}{t \bar{\mu}^{*}} } du. \notag
    \end{align}
    We have 
    \begin{align}
        \left( \kappa t^{1-\alpha/2} u^{-\alpha} \ell\left( \frac{8}{\sqrt{t} 
        u} \right) \wedge 1 \right)
        &\Ind{M_0 \leq \frac{8}{\sqrt{t} u} \leq \frac{1}{t \bar{\mu}^{*}} } 
        \notag \\
        &=\left( \kappa t^{1-\alpha/2} u^{-\alpha} \frac{\ell\left( 
            \frac{8}{\sqrt{t} u} \right) }{\ell\left( \frac{8}{\sqrt{t}} 
            \right)} \ell\left( \frac{8}{\sqrt{t}} \right) \wedge 1 \right)
            \Ind{M_0 \leq \frac{8}{\sqrt{t} u} \leq \frac{1}{t \bar{\mu}^{*}} } 
            \notag \\
        &\leq \kappa t^{1-\alpha/2} \ell\left( \frac{8}{\sqrt{t}} \right) A 
        \max\left( u^{-\alpha-\delta}, u^{-\alpha+\delta} \right) \wedge 1. 
        \label{eq:d11_bound}
    \end{align}
    Now, letting $\beta_{t} = 8 / \sqrt{t}$,
    \begin{align}
        t^{\left(1-\frac{\alpha}{2}\right)} \ell\left( \frac{8}{\sqrt{t}} 
        \right) &= 8^{2-\alpha} \left( \frac{8}{\sqrt{t}} \right)^{2 
            \left(\frac{\alpha}{2}-1 \right)} \ell(\beta_{t})
        = 8^{2-\alpha} \beta_{t}^{\alpha-2} \ell(\beta_{t}) \rightarrow 0,
        \notag 
    \end{align}
    since $\alpha-2<0$ and $\beta_{t} \rightarrow \infty$ as $t\rightarrow 0$. 
    Thus, there exists $t_0$ such that $t^{(1-\alpha/2)} \ell(8/\sqrt{t})\leq 
    1$ for all $0\leq t < t_0$. So, \eqref{eq:d11_bound} is bounded 
    by
    \[
        \kappa \max{\left( u^{-\alpha-\delta}, u^{-\alpha+\delta} \right)} 
        \wedge 1 \in L^{1}([0,\infty) ).
    \]
    Thus, we can apply Lebesgue's Dominated Convergence Theorem to $D_{11}(t)$ 
    which gives $D_{11}(t) \rightarrow
    0$, as $t\rightarrow 0$, since $\Probs{L_{t} \geq \sqrt{t} u/2 } 
    \rightarrow 0$, as $t\rightarrow 0$,
    for $u>0$. We now consider $D_{12}(t)$ and estimate
    \begin{align}
        D_{12}(t) &= \int_{0}^{\infty} \Probs{L_{t} \geq \frac{\sqrt{t} 
        u}{2} } \Ind{\frac{8}{\sqrt{t} u} < M_0} du \notag \\
        &\leq \kappa t ^{1- \alpha/2} \int_{0}^{\infty} u^{-\alpha} \ell\left( 
        \frac{8}{\sqrt{t} u} \right) 
        \Ind{\frac{8}{\sqrt{t} u} < M_0} du \notag \\
        &= \kappa t^{1- \alpha/2} \int_{0}^{\infty} \left( \frac{8w}{\sqrt{t}}
        \right)^{-\alpha} \ell\left( \frac{1}{w} \right)
        \Ind{\frac{1}{w}< M_0} \frac{8 dw}{t^{1/2}} \notag \\
        &= \kappa t^{1/2} \int_{1/M_0}^{\infty} w^{-\alpha} \ell(1/w) dw 
        = \kappa t^{1/2} \int_{1/M_0}^{\infty} \gamma^{*}(z) dz \rightarrow 0,
        \notag
    \end{align}
    since the last integral is finite. Therefore, $\int_{0}^{\infty} 
    \Probs{L_{t} \geq \frac{\sqrt{t} u}{2} } du \rightarrow 0$ as 
    $t\rightarrow 0$. Applying Lemma \ref{EDCT} gives the result. 
\end{proof}

\begin{proof}[Proof of Theorem~\ref{thm:BrownianSimplified}]
    Note we still have \eqref{eq:ch_fn_convergence} and 
    \eqref{eq:prob_convergence}. Continuing as in the proof of 
    Theorem~\ref{thm:Brownian}, 
    \begin{align}
        \frac{1}{\sqrt{t}} c(t,0) &= \frac{1}{\sqrt{t}} \int_{0}^{\infty} e^{-x} 
        \Probs{X_t \geq x} dx
        = \int_{0}^{\infty} e^{-\sqrt{t} u} \Probs{X_t \geq u \sqrt{t}} du. 
        \label{eq:main_int3}
    \end{align}
    and
    \begin{align}
        e^{-\sqrt{t} u} \Probs{X_{t} \geq u \sqrt{t} } \leq \;\; 
        &\Probs{\sigma W_{t} \geq \frac{\sqrt{t} u}{2}} 
        + \Probs{L_{t} \geq \frac{\sqrt{t} u}{2} }
        = \Probs{\sigma W_{1} \geq \frac{u}{2}}  + \Probs{L_{t} \geq 
        \frac{\sqrt{t} u}{2} }. \notag 
    \end{align}

    We aim to use Lemma~\ref{EDCT} with 
    \begin{align*}
        f(u,t) &= e^{-\sqrt{t} u} \Probs{X_{t} \geq u \sqrt{t} }, \\
    \bar{f}(u) &= \Probs{\sigma W_{1} \geq u}, \\
    h(u) &= \Probs{\sigma W_{1} \geq u/2}, 
    \end{align*}
    and $g(u,t) = \Probs{L_{t} \geq \sqrt{t} u/2}$. It is easy to see that
    conditions \ref{C2}-\ref{C3} of the lemma are satisfied. We need to show 
    \ref{C5} and \ref{C6} hold. It is again not too difficult to see
    that \ref{C5} holds from \eqref{eq:ch_fn_convergence} and the 
    independence of $W$ and $L$ under $\Ps$. 
    We now show that \ref{C6} is satisfied as well.  Namely, we prove that
    \[
        \int_{0}^{\infty} \Probs{L_{t} \geq \frac{\sqrt{t} u}{2} } du 
        \rightarrow 0,
    \]
    as $t\rightarrow 0$. Once we show this, we immediately have the result 
    by applying Lemma~\ref{EDCT} to \eqref{eq:main_int3}. Note that there 
    exists $t_0 > 0$ such that $B_{t} \leq \sqrt{t}$ for all 
    $0 \leq t \leq t_{0}$. For $0\leq t \leq t_0$, we have
    \[
        \Probs{L_{t} \geq \sqrt{t} u} \leq \Probs{L_{t} \geq B_{t} u }.
    \]
    for every $u\geq 0$. 

    To simplify notation, let $F(t,u) = \Probs{L_{t} \geq \sqrt{t} 
    u/2}$, $G(t,u) =  \Probs{L_{t} \geq B_{t} u/2 }$, and
    $\bar{G}(u) = \Probs{Z \geq u}$. Note that $\int_{0}^{\infty} G(t,u)du 
    \rightarrow \int_{0}^{\infty} \bar{G}(u) du$ as $t\rightarrow 0$. It is 
    clear that $0 \leq \liminf_{t\rightarrow 0} \int_{0}^{\infty} F(t,u) du$. 
    Now, $G(t,u) - F(t,u) \geq 0$, so we can apply Fatou's lemma to get
    \begin{align}
        \int_{0}^{\infty} \bar{G}(u) du &\leq \liminf_{t \rightarrow 0} 
        \int_{0}^{\infty} \left( G(t,u) - F(t,u) \right) du \notag \\
        &=\liminf_{t\rightarrow 0} \left( \int_{0}^{\infty} G(t,u) du - 
        \int_{0}^{\infty} F(t,u) du \right)
        = \int_{0}^{\infty} \bar{G}(u) du - \limsup_{t\rightarrow 0} 
        \int_{0}^{\infty} F(t,u) du. \notag
    \end{align}
    Canceling terms gives $\limsup_{t\rightarrow 0} \int_{0}^{\infty} F(t,u) 
    du \leq 0$. Thus, 
    \[
        0 \leq \liminf_{t\rightarrow 0} \int_{0}^{\infty} \Probs{L_{t} \geq 
        \frac{\sqrt{t} u}{2}} du \leq
        \limsup_{t\rightarrow 0} \int_{0}^{\infty} \Probs{L_{t} \geq 
        \frac{\sqrt{t} u}{2}} du \leq 0,
    \]
    and therefore \ref{C6} is satisfied, proving the result. 
\end{proof}

\begin{proof}[Proof of Corollary~\ref{cor:brownian_case}]
    We proceed exactly as in the proof of Corollary \ref{cor:no_brownian_case}. 
    We are now comparing \eqref{eq:BS_asymptotics} with \eqref{eq:call_price_bs}
    multiplied by $S_0$. So,
    \[
        \frac{S_{0} \hat{\sigma}(t)}{\sqrt{2 \pi}} \sqrt{t} \sim S_{0} \sigma 
        \sqrt{t} \Expecsnb{\left( W_{1} \right)_{+}},
    \]
    as $t \rightarrow 0$. This implies that 
    \[
        \hat{\sigma}\left( t \right) \sim \sigma \sqrt{2 \pi} \Expecsnb{ \left( 
            W_{1}\right)_{+}},
    \]
    as $t \rightarrow 0$.
\end{proof}

\bibliography{levywork}

\end{document}